\newif\ifcomments
\newcommand{\puretomixed}[1]{\llbracket #1 \rrbracket}
\DeclareMathOperator*{\Ex}{\mathbb{E}}
\title{Security proof for parallel DIQKD}
\author{Ashutosh Marwah\footnote{email: ashutosh.marwah@outlook.com} \ and Fr\'ed\'eric Dupuis}
\affil{\small{D\'epartement d'informatique et de recherche op\'erationnelle,\\ Universit\'e de Montr\'eal,\\ Montr\'eal QC, Canada}}
\date{\today}							% Activate to display a given date or no date
\begin{document}
\maketitle

\begin{abstract}
	We present a parallel device independent quantum key distribution (DIQKD) protocol based on the CHSH game and prove its security. Using techniques developed for analysing the parallel repetition of anchored non-local games \cite{Bavarian21}, we show that the answers on a small random linear subset of the games in the DIQKD protocol can be simulated as the output of a single-round strategy for playing the CHSH game. Then, we use the unstructured approximate entropy accumulation theorem proven in our companion paper \cite{Marwah24_uni_ch} to establish the smooth min-entropy lower bound required for the security proof. Our approach yields a more information-theoretic and general proof for parallel DIQKD compared to previous proofs.
\end{abstract}

\tableofcontents

\section{Introduction}

Key distribution is a primitive in cryptography, which allows two parties, commonly referred to as Alice and Bob, to establish a secret key for secure communication in the presence of an adversary, Eve. While classical key distribution protocols cannot achieve unconditional security, it is possible to use properties of quantum particles like the no-cloning principle to create provably secure \emph{quantum key distribution (QKD)} protocols \cite{Bennett84, Bennet92, Shor00}. The security of these theoretical protocols can be mathematically proven. However, practical implementations of QKD protocols are vulnerable to side-channel attacks, where Eve exploits imperfections in the preparation and detection devices to extract additional information \cite{Lydersen10, Sajeed15}. For example, Eve could send a laser signal towards Alice's preparation equipment and analyse its reflection to determine the preparation bases, potentially compromising the protocol's security. These attacks stem from the impossibility of completely and accurately modelling Alice and Bob's devices. To circumvent such problems, \emph{device-independent QKD (DIQKD)} \cite{Ekert91,Mayers04,Barrett05, Pironio09} has been developed. DIQKD protocols allow Alice and Bob to use 'untrusted' devices (more precisely, devices that have been prepared by the eavesdropper, Eve, which cannot transmit signals to her once installed in Alice and Bob's labs) to implement key distribution. The security of such protocols relies solely on the properties of quantum mechanics.\\

Typically, in such protocols, Alice and Bob play multiple non-local games using some shared quantum state. A non-local game is defined as:
\begin{definition}[Non-local game]
    A non-local game $G = (\mathcal{X}, \mathcal{Y}, \mathcal{A}, \mathcal{B}, \Pi_{XY}, V)$ is a game played between two cooperating parties called Alice and Bob, who are not allowed to communicate while playing the game. Questions $x \in \mathcal{X}$ and $y \in \mathcal{Y}$ are sampled according to the distribution $\Pi_{XY}$ and distributed to Alice and Bob respectively. Alice and Bob reply with answers $a \in \mathcal{A}$ and $b \in \mathcal{B}$ using some predetermined strategy. Their strategy can be classical, in which case their answers depend only on their questions and some shared randomness, or it can be quantum, in which case they measure some pre-shared quantum state to determine their answers. Alice and Bob win the game if they satisfy the predicate $V(x, y, a, b)$. The maximum winning probability for classical strategies is denoted using $\omega_c(G)$ and for quantum strategies using $\omega_q(G)$.
\end{definition}
One of the most celebrated results in quantum information is the existence of non-local games which one can win with a higher probability using quantum mechanics than with classical strategies \cite{CHSH69}. Moreover, if two parties sharing a quantum state are able to win such a game with a probability exceeding the classical winning probability, then one can place bounds on the amount of information any third party has about their answers to the game, irrespective of the imperfections in the measurement devices used during the game \cite{Masanes06,Miller17,Brown21}. This fact is used to ensure the security of DIQKD protocols.\\

During a DIQKD protocol, Alice and Bob treat their measurement devices and quantum states as black boxes that produce outputs corresponding to the inputs of the non-local game. These protocols involve multiple rounds of a non-local game, which can be played either sequentially or in parallel. Sequential protocols are easier to analyse as they can be broken down into smaller steps, each depending only on the preceding steps. In contrast, during parallel DIQKD, the two parties input all the questions for the multiple games into their devices and receive all the answers at once. This simultaneous nature makes the analysis of these protocols significantly more challenging. However, parallel DIQKD protocols could potentially lead to faster implementations. From a foundational perspective, these protocols remove the sequential or time-ordering assumption in secure key distribution. Thus, investigating the security of these protocols is an interesting question.\\

The challenge of breaking down parallel protocols into smaller, more manageable steps for analysis is also encountered when studying the parallel repetition of non-local games. A parallel repetition of a non-local game $G$, denoted $G^n$, consists of $n$ instances of the game $G$ played simultaneously. Formally, it is defined as:
\begin{definition}[Parallel repetition of a non-local game] 
    The $n$-parallel repetition of a non-local game, $G = (\mathcal{X}, \mathcal{Y}, \mathcal{A}, \mathcal{B}, \Pi_{XY}, V)$, is the non-local game $G^n = (\mathcal{X}^n, \mathcal{Y}^n, \mathcal{A}^n, \mathcal{B}^n, \Pi_{XY}^n, V^n)$, where Alice and Bob are given the questions $x_1^n \in \mathcal{X}^n $ and $y_1^n \in \mathcal{Y}^n$ respectively sampled according to the distribution $\Pi_{XY}^n$ (independently and identically according to $\Pi_{XY}$). Alice and Bob reply with answers $a_1^n \in \mathcal{A}^n$ and $b_1^n \in \mathcal{B}^n$ according to their (classical or quantum) strategy. They win the game if for every $1 \leq i \leq [n]$, they satisfy the predicate $V(x_i, y_i, a_i, b_i)$. 
    \label{def:Par_rep_games}
\end{definition}
For the parallelly repeated game $G^n$, one can always play the optimal strategy for game $G$ independently $n$ times. This yields a lower bound on the winning probability: $\omega_S(G)^n \leq \omega_S(G^n)$ (where strategy $S$ can be either classical or quantum). However, it's natural to ask whether it's possible to significantly outperform this strategy for $G^n$. While it may not be immediately apparent, there are examples of games where $\omega_S(G)^n < \omega_S(G^n)$. For instance, the FFL game \cite[Appendix A]{Holenstein09} exhibits $\omega_c(G^2) = \omega_c(G)<1$ and $\omega_q(G^2) = \omega_q(G)<1$. Roughly speaking, this improvement arises because players can correlate their answers across parallel games, thereby correlating the winning conditions and achieving a higher overall winning probability than independent play would allow. The parallel repetition question asks whether the winning probability of $G^n$ decays exponentially in $n$. For classical strategies, this exponential decay was proven in \cite{Raz98,Holenstein09}. In the quantum case, it has been demonstrated for large classes of games \cite{Cleve08, Jain14, Chailloux14, Chung15, Dinur14, Bavarian21}, but remains an open question for general games. Currently, for general quantum games, the best known bound for $\omega_q(G^n)$ decays only polynomially in $n$ \cite{Yuen16}.\\

The fundamental idea behind these works on parallel repetition is that one can simulate the probability distributions and states created by the strategy for the parallelly repeated game $G^n$, conditioned on an event $\Omega$ defined in terms of a small subset ($\leq \delta n$ for small $\delta > 0$) of the questions and answers, using a single-round strategy for the game $G$. In this single-round simulation strategy, the game $G$ is actually embedded in a particular round $j$ of the game $G^n$. Specifically, it is shown that the distribution of questions at index $j$, conditioned on the event $\Omega$, has the same distribution as the questions of the game $G$. Furthermore, it is demonstrated that one can define appropriate measurements which produce answers $A_j$ and $B_j$ for Alice and Bob, respectively, with the same distribution as they would have in $G^n$ conditioned on $\Omega$. Using this simulation, it can then be shown that the winning probability for round $j$ cannot be significantly larger than that of winning the single-round game $G$. \\

In this work, we apply techniques developed for the analysis of the parallel repetition of anchored games \cite{Bavarian21} to create a proof for parallel DIQKD using CHSH games. Specifically, we demonstrate that for a random subset of size $\delta n$ of the games, Alice's answer for every game in the subset can be approximately viewed as the output of a single-round strategy, similar to the parallel repetition setting. With this result, we can leverage the fact that Alice's answers for a single-round CHSH game are random with respect to the adversary when the CHSH game is won with high enough probability. To prove security of the DIQKD scheme, we need to prove a linear lower bound for the smooth min-entropy of Alice's answers with respect to the adversary's information. We first present our ideas by proving a linear lower bound for the corresponding von Neumann entropy in Section \ref{sec:vN_proof}. This allows us to separate the inherent complexity of one-shot entropies from the parallel repetition based techniques required for the security proof. Following this, we formalise the security proof in the one-shot setting in Section \ref{sec:one_shot_pf}. Showing that the above ideas leads to an accumulation of smooth min-entropy, that is, entropy proportional to the size of the random subset of Alice's answers requires a new entropic tool, called the unstructured entropy accumulation theorem, which we developed in our companion work \cite{Marwah24_uni_ch}. This theorem is a counterpart of the entropy accumulation theorem \cite{Dupuis20} used to prove the security of sequential DIQKD, adapted to work in the parallel setting with approximate conditions.

\subsection{Comparison with prior work}

\cite{Jain20} provided the first proof for parallel DI-QKD. Their QKD protocol is based on the Magic Square game. The security proof for this protocol relies on using the parallel repetition theorem for free games (games where questions have a product distribution) with multiple (more than 2) players. \cite{Jain20} views the setting of DIQKD with Alice, Bob, and Eve as the parallel repetition of a multiplayer game. In this context, $\exp\rndBrk{-H^\epsilon_{\min}(\text{Raw Key} | \text{Eve's information})}$, where the raw key consists of Alice's answers on a random subset, can be interpreted as the winning probability for this game. Since the winning probability decays exponentially in the number of rounds due to the parallel repetition result, the smooth min-entropy can be bounded by $\Omega(n)$. This proof relies on three key properties of the Magic Square game: (1) it samples questions uniformly, (2) there exists a quantum strategy to win it perfectly, and (3) under this perfect strategy, Alice and Bob receive a perfectly correlated uniform random bit.\\

\cite{Vidick17} significantly simplified the security proof given by \cite{Jain20}. The key idea remains viewing the QKD protocol as a parallel repetition of a 3-player game between Alice, Bob and Eve. This 3-player Magic Square game is won if Alice and Bob win the Magic Square game and if Eve correctly guesses Alice's answer. A technique developed for studying non-local games called ``immunization'' \cite{Kempe08} is used by \cite{Vidick17} to prove that the 3-player game has a winning probability strictly less than 1. The parallel repetition theorem for anchored games \cite{Bavarian21} is then used to show that the winning probability of the repeated game is $2^{-\Omega(n)}$. This is subsequently used to bound Eve's guessing probability of Alice's answers and hence the min-entropy for Alice's answers given Eve's system. \cite{Vidick17}'s proof also utilises the same properties of the Magic Square game as \cite{Jain20}. Building on these works, \cite{Jain22} also proves the security of a similar parallel DIQKD protocol in the presence leakage from Alice and Bob's devices. \\

In comparison to these proofs, our proof is not limited by the properties of the games used for the DIQKD protocol. We demonstrate our protocol using the CHSH game, showing how to convert it into an anchored game suitable for parallel DIQKD-- a technique we believe should be applicable to other games as well. Further, our work offers an alternative security proof for parallel DIQKD, employing a more information-theoretic approach. This method decomposes the large quantum device playing parallel CHSH games into smaller single-round CHSH game playing devices. In contrast, \cite{Jain20} and \cite{Vidick17} reduce the security proof to bounding the winning probability of a parallelly repeated game. We hope that our approach can provide greater insight into the problem and aid in solving open problems like the security of parallel device-independent randomness expansion. From another perspective, techniques from parallel repetition lie at the heart of both our proof and those of \cite{Jain20} and \cite{Vidick17}, highlighting the fundamental importance of these methods in analysing parallel protocols.\\

On the downside, our proof strategy couples the security parameter of the DIQKD protocol to its rate. For a choice of security parameter of $\tilde{O}(\epsilon)$, our approach can only prove security for a rate of $\Omega(\epsilon^{192})$. This is not a limitation of the proofs in \cite{Jain20} and \cite{Vidick17}. It might be possible to break this linkage by further refining the unstructured approximate EAT as mentioned in \cite{Marwah24_uni_ch}, but we leave this for future work. Finally, it is important to note that our protocol and the protocols in \cite{Jain20} and \cite{Vidick17} are intended as proofs of concept. The key rates for these parallel DIQKD protocols are currently too small for practical implementation.

\section{Background}

\begin{sloppypar}
	\subsection{Notation}
	
	For a classical probability distribution $p_{AB}$, the conditional probability distribution $p_{A|B}$ is defined as $p_{A|B}(a|b) := \frac{p_{AB}(a,b)}{p_B(b)}$ when $p_B(b) >0$. For the case when $p_B(b) =0$, we define $p_{A|B}$ to be the uniform distribution for our purposes. The probability distribution $q_B p_{A|B}$ for a distribution $q$ on random variable $B$ is defined as $q_B p_{A|B} (b,a) := q_B(b) p_{A|B}(a|b)$.\\
	
	The term normalised (subnormalised) quantum states is used for positive semidefinite operators with unit trace (trace less than $1$). We denote the set of registers a quantum state describes (equivalently, its Hilbert space) using a subscript. Partial states of a quantum state are simply denoted by restricting the set of registers in the subscript. For example, if $\rho_{AB}$ denotes a quantum state on registers $A$ and $B$, then the partial states on registers $A$ and $B$, will be denoted as $\rho_{A}$ and $\rho_{B}$ respectively. In this paper, we follow \cite{Bavarian21} in using the notation $\puretomixed{x} = \ket{x}\bra{x}$ to represent a classical value $x$. We also denote the density operator for a pure state $\ket{\psi}$ as $\psi$.\\
	
	The term ``channel'' is used for completely positive trace preserving (CPTP) linear maps between two spaces of Hermitian operators. A channel $\cN$ mapping registers $A$ to $B$ will be denoted by $\cN_{A \rightarrow B}$. \\
	
	We use the notation $[n]$ to denote the set $\{1,2, \cdots, n\}$. For $n$ quantum registers $(X_1, X_2, \cdots, X_n)$, the notation $X_i^j$ for $i <j$ refers to the set of registers $(X_i, X_{i+1}, \cdots, X_{j})$. For a register $A$, $|A|$ represents the dimension of the underlying Hilbert space. \\
	
	For two Hermitian operators $X$ and $Y$, the operator inequality $X \geq Y$ is used to denote that $X-Y$ is a positive semidefinite operator and $X>Y$ denotes that $X-Y$ is a strictly positive operator. The notation $X \ll Y$ denotes that the support of operator $X$ is contained in the support of $Y$. The identity operator on register $A$ is denoted using $\Id_A$.\\
	
	\subsection{Information theory}
	The trace norm is defined as $\norm{X}_1 := \tr\big(\rndBrk{X^\dag X}^{\frac{1}{2}}\big)$. The fidelity between two positive operators $P$ and $Q$ is defined as $F(P,Q)= \norm{\sqrt{P}\sqrt{Q}}_1^2$. The generalised fidelity between two subnormalised states $\rho$ and $\sigma$ is defined as 
	\begin{align}
		F_\ast(\rho, \sigma) := \rndBrk{\norm{\sqrt{\rho}\sqrt{\sigma}}_1 + \sqrt{(1- \tr\rho)(1- \tr\sigma)}}^2.
	\end{align}
	The purified distance between two subnormalised states $\rho$ and $\sigma$ is defined as 
	\begin{align}
		P(\rho, \sigma) = \sqrt{1- F_{\ast}(\rho, \sigma)}.
	\end{align}
	Throughout this paper, we use base $2$ for both the functions $\log$ and $\exp$. The von Neumann entropy of a state $\rho_{A}$ is defined as 
	\begin{align}
		H(A)_\rho := -\tr\rndBrk{\rho_A \log \rho_A}.
	\end{align}
	For a state $\rho_{AB}$, the conditional entropy of register $A$ with respect to register $B$ is defined as 
	\begin{align}
		H(A|B)_{\rho} := H(AB) - H(B).
	\end{align}
	For a subnormalised state $\rho_{AB}$, the min-entropy of register $A$ given register $B$ can be defined as 
	\begin{align}
		H_{\min}(A|B)_{\rho} &:= \sup \curlyBrk{\lambda \in \mathbb{R}: \text{ there exists state }\sigma_B \text{ such that } \rho_{AB} \leq 2^{-\lambda} \Id_A \otimes \sigma_B}.
	\end{align} 	
	For the purpose of smoothing, we define the $\epsilon$-ball around a subnormalised state $\rho$ as the set
	\begin{align}
		B_{\epsilon}(\rho) = \{ \tilde{\rho} \geq 0 : P(\rho, \tilde{\rho}) \leq \epsilon \text{ and } \tr\tilde{\rho} \leq 1\}.
	\end{align}
	The smooth min-entropy of register $A$ given $B$ for the state $\rho_{AB}$ is defined as
	\begin{align}
		H_{\min}^{\epsilon}(A|B)_{\rho} := \max_{\tilde{\rho} \in B_{\epsilon}(\rho)} H_{\min}(A|B)_{\tilde{\rho}}.
	\end{align} 
	The amount of randomness, which can be extracted from a register independent of some side information is quantified by the smooth min-entropy as the leftover hashing lemma shows below.
	\begin{lemma}[Leftover hashing lemma \cite{Renner05, Tomamichel10}]
		For a set of 2-universal hash functions $\mathcal{F}$ from $A$ to $Z$ and a classical-quantum state $\rho_{AE} = \sum_a p(a) \ket{a} \bra{a} \otimes \rho_{E}^{(a)}$, let $\rho_{ZEF}$ be the state produced by applying a random hash function $F \in \mathcal{F}$ to register $A$ to produce the output register $Z$. This state satisfies 
		\begin{align}
			\frac{1}{2} \norm{\rho_{ZEF} - \tau_{Z} \otimes \rho_{EF}} \leq 2\epsilon+ 2^{\frac{1}{2}(\log |Z| - H_{\min}^\epsilon (A|E))} 
		\end{align}
		where $\tau_Z := \frac{1}{|Z|} \Id_Z$ is the completely mixed state on register $Z$. 
		\label{lemm:leftover_hashing}
	\end{lemma}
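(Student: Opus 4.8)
The plan is to first remove the smoothing and then establish the non-smooth leftover hashing bound through a $\sigma$-weighted Hilbert--Schmidt norm estimate. I would begin by choosing a state $\tilde\rho_{AE}\in B_\epsilon(\rho_{AE})$ attaining the optimum in the definition of $H_{\min}^\epsilon(A|E)_\rho$, so that $H_{\min}(A|E)_{\tilde\rho}=H_{\min}^\epsilon(A|E)_\rho$ and $P(\rho_{AE},\tilde\rho_{AE})\le\epsilon$; since dephasing $A$ fixes $\rho_{AE}$, cannot increase the purified distance, and cannot decrease the min-entropy, we may take $\tilde\rho_{AE}$ to be classical on $A$. Applying a randomly chosen hash function while recording its label $F$ is a channel, as is tracing out registers, and the purified distance is non-increasing under channels, so the hashed states $\tilde\rho_{ZEF}$ and $\tau_Z\otimes\tilde\rho_{EF}$ stay within purified distance $\epsilon$ of $\rho_{ZEF}$ and $\tau_Z\otimes\rho_{EF}$. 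Using $\tfrac12\norm{\cdot}_1\le P(\cdot,\cdot)$ with the triangle inequality then reduces the claim to bounding $\tfrac12\norm{\tilde\rho_{ZEF}-\tau_Z\otimes\tilde\rho_{EF}}_1$, at the cost of the additive $2\epsilon$ term.

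Because $F$ is classical, this trace distance equals the average $\Ex_F\,\tfrac12\norm{\tilde\rho_{ZE}^{(f)}-\tau_Z\otimes\tilde\rho_E}_1$ over a uniformly chosen $f\in\mathcal F$, where $\tilde\rho_{ZE}^{(f)}$ is the state produced by the fixed function $f$. By the definition of the min-entropy there is a normalised $\sigma_E$ with $\tilde\rho_{AE}\le 2^{-\lambda}\Id_A\otimes\sigma_E$, where $\lambda=H_{\min}^\epsilon(A|E)_\rho$. The central step is to control each trace norm by the $\sigma_E$-weighted Hilbert--Schmidt norm $\norm{X}_{2,\sigma}:=\norm{\rndBrk{\Id_Z\otimes\sigma_E^{-1/4}}\,X\,\rndBrk{\Id_Z\otimes\sigma_E^{-1/4}}}_2$, where $\norm{\cdot}_2$ is the Hilbert--Schmidt norm. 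Taking $\gamma=\tau_Z\otimes\sigma_E$ (a normalised state) and applying H\"older's inequality in the form $\norm{\gamma^{1/4}Y\gamma^{1/4}}_1\le\norm{\gamma^{1/4}}_4^2\,\norm{Y}_2$ yields $\norm{X}_1\le\sqrt{|Z|}\,\norm{X}_{2,\sigma}$, since $\norm{\gamma^{1/4}}_4^2=\sqrt{\tr\gamma}=1$ while the $\tau_Z^{-1/4}$ factors contribute $\sqrt{|Z|}$. Combining this with Jensen's inequality (concavity of the square root) gives
\begin{align}
    \Ex_F\,\tfrac12\norm{\tilde\rho_{ZE}^{(f)}-\tau_Z\otimes\tilde\rho_E}_1 \le \tfrac12\sqrt{|Z|}\,\sqrt{\Ex_F\norm{\tilde\rho_{ZE}^{(f)}-\tau_Z\otimes\tilde\rho_E}_{2,\sigma}^2}.
\end{align}

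It then remains to evaluate the expected weighted norm. Writing $\hat\rho_E^{(a)}:=p(a)\,\sigma_E^{-1/4}\tilde\rho_E^{(a)}\sigma_E^{-1/4}$ and expanding the square, the cross terms split according to whether $f(a)=f(a')$; the $2$-universality of $\mathcal F$ bounds the probability over a random $f$ that $f(a)=f(a')$ by $1/|Z|$ for $a\neq a'$, and these off-diagonal contributions cancel against the subtracted $\tau_Z\otimes\tilde\rho_E$ term, leaving only the collision contribution $\Ex_F\norm{\cdot}_{2,\sigma}^2\le\tr\rndBrk{M^2}$ with $M:=\rndBrk{\Id_A\otimes\sigma_E^{-1/4}}\tilde\rho_{AE}\rndBrk{\Id_A\otimes\sigma_E^{-1/4}}$. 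Finally, the min-entropy condition gives $M\le 2^{-\lambda}\Id_A\otimes\sigma_E^{1/2}$, so that $\tr\rndBrk{M^2}\le 2^{-\lambda}\tr\rndBrk{M\,\Id_A\otimes\sigma_E^{1/2}}=2^{-\lambda}\tr\tilde\rho_{AE}\le 2^{-\lambda}$, using $\tr\rndBrk{M^2}\le\tr(MN)$ whenever $0\le M\le N$. Substituting back produces $\tfrac12\sqrt{|Z|}\sqrt{2^{-\lambda}}=\tfrac12\,2^{\frac12(\log|Z|-\lambda)}$, which with the $2\epsilon$ term proves the stated inequality with room to spare. I expect the main obstacle to be the middle step: arranging the weighted norm so that H\"older's inequality contributes exactly the factor $\sqrt{|Z|}$, and then orchestrating the $2$-universality expansion so that the off-diagonal terms cancel precisely against the completely mixed reference, leaving the collision quantity $\tr\rndBrk{M^2}$. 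Correctly tracking how the $|Z|$ factors partially cancel between the Cauchy--Schwarz loss and the collision bound is the delicate part of the argument.
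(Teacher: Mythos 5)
Your proposal is correct, but note that the paper does not prove this lemma at all---it imports it verbatim from the cited references \cite{Renner05, Tomamichel10}. Your reconstruction is essentially the standard argument from those sources (smoothing removal via data processing of the purified distance, the H\"older bound $\norm{X}_1 \leq \sqrt{|Z|}\,\norm{X}_{2,\sigma}$, and the $2$-universal collision computation yielding $\tr(M^2) \leq 2^{-\lambda}$), carried out correctly and even with a spare factor of $\tfrac{1}{2}$ relative to the stated bound.
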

	So, if one chooses a family of hash functions with output length $|Z| = H_{\min}^\epsilon (A|E)_{\rho} - 2\log 1/\epsilon$, the output state $\rho_{ZEF}$ is $3\epsilon$ close to $\tau_Z \otimes \rho_{EF}$. Thus, one can extract $H_{\min}^\epsilon (A|E)_{\rho} - O(1)$ amount of randomness independent of the adversary for the state $\rho$. 

	\subsection{QKD security proof}
	\label{sec:qkd_bg}
	
	Quantum key distribution (QKD) protocols can be broken into two stages. During the first stage Alice and Bob use quantum states and measurements to generate a partially secret raw key. In the second stage, they use classical post-processing to ensure that their keys match and are completely random with respect to an adversary. This post-processing involves using information reconciliation, raw key validation and privacy amplification protocols. These also determine the entropic bounds required for proving the security of QKD. We provide a brief description of these protocols and how they are used in QKD. For more details, we refer the reader to \cite[Chapter 6]{Renner06} and \cite[Section 4.2.2]{Friedman20}. \\

	An information reconciliation (IR) protocol \cite{Brassard94} allows two parties holding correlated strings to derive a common string, while minimising communication. Raw key validation simply uses a random hash function from a 2-universal family to verify that Alice and Bob hold the same raw key. Raw key validation is often included in the information reconciliation protocol itself. We describe it as a separate step here, since it makes correctness evident and aids our explanation. \\

	\cite[Lemma 6.3.4]{Renner06} demonstrates a one-way protocol from Alice to Bob for which the number of bits communicated during information reconciliation and raw key validation, which we will simply refer to as $\text{leak}_{\text{IR}}$, can be bounded as 
	\begin{align}
		\text{leak}_{\text{IR}} \leq H^{\epsilon}_{\max}(X|Y) + O(1)
		\label{eq:IR_cost}
	\end{align}
	where $X$ and $Y$ are Alice and Bob's raw keys respectively, and the parameter $\epsilon \in [0,1]$.  \\

	In QKD protocols, information reconciliation ensures correctness. For example, in a QKD protocol, adversarial interference and noise may cause the raw keys generated by Alice and Bob to be unequal. Through information reconciliation, Bob can adjust his raw key to match Alice's. \\

	Privacy amplification, allows Alice and Bob to extract a random secret key from their shared raw key obtained through information reconciliation. The process involves selecting a random 2-universal hash function and applying it to their raw key. When parameters are appropriately chosen, the Leftover Hashing Lemma  (Lemma~\ref{lemm:leftover_hashing}) guarantees that the resulting key is independent of the adversary's state.\\

	Let's suppose that Alice and Bob's raw keys before applying classical post-processing are $A_1^n$ and $B_1^n$, and Eve's state is $E$. If Alice sends message $C$ to Bob during information reconciliation, then the Leftover Hashing Lemma guarantees that the QKD protocol's key length is lower bounded (up to a constant) by:
	\begin{align}
		H_{\min}^\epsilon (A_1^n|E C)_{\rho} &\geq H_{\min}^\epsilon (A_1^n|E )_{\rho} - \text{leak}_{\text{IR}}
	\end{align}
	using the dimension bound \cite[Lemma 6.8]{TomamichelBook16}. The information reconciliation cost can be bounded as
	\begin{align}
		\text{leak}_{\text{IR}} \leq H^{\epsilon'}_{\max}(A_1^n|B_1^n)_{\rho_{\text{honest}}} + O(1).
	\end{align}
	$H^{\epsilon'}_{\max}$ in the bound above is evaluated on $\rho_{\text{honest}}$, which represents the state produced at the end of the QKD rounds in the absence of an adversary. In the presence of an adversary, this amount of communication may not suffice for successful information reconciliation, and in that case the raw key validation step would fail with high probability (see \cite[Section 6.3]{Renner06} for a detailed discussion), which is permissible according to the security definition.\\

	This term is typically straightforward to bound as the form of $\rho_{\text{honest}}$ is known. It can usually be expressed as $nf(e)$, where $e$ quantifies the protocol noise and $f$ is a small function such that $f(e) \rightarrow 0$ as $e \rightarrow 0$. Therefore, while proving security for QKD, the main challenge lies in establishing a linear lower bound for $H_{\min}^\epsilon (A_1^n|E)_{\rho}$ in cases where the protocol does not abort. \\

	If one is able to prove that 
	\begin{align}
		H_{\min}^\epsilon (A_1^n|E )_{\rho} - H^{\epsilon'}_{\max}(A_1^n|B_1^n)_{\rho_{\text{honest}}} \geq r n -O(1)
	\end{align}
	for some $r >0$, then Alice and Bob can produce a secure key of length $(rn - O(1))$ according to the Leftover Hashing Lemma. The \emph{key rate} for a QKD protocol is defined as its key length divided by the number of rounds. In this case, it asymptotically tends to $r$.  

\end{sloppypar}

\subsection{CHSH game and related properties}

Recall that a non-local game $G$ is represented as $G = (\mathcal{X}, \mathcal{Y}, \mathcal{A}, \mathcal{B}, \Pi_{XY}, V)$, where $\mathcal{X}$ and $\mathcal{Y}$ are the sets of Alice and Bob's questions, $\mathcal{A}$ and $\mathcal{B}$ are the sets of their answers, $\Pi_{XY}$ is the probability distribution of their questions and $V$ is the winning predicate. \\

For the standard CHSH game, we have $\mathcal{X} =\mathcal{Y} =\mathcal{A} =\mathcal{B} = \{0,1\}$, $\Pi_{XY}$ is the uniform distribution on all possible questions and the predicate $V(x,y,a,b) = \neg [a \oplus b \oplus (x \wedge y)]$. We refer to this game as the CHSH game or the 2CHSH game. The maximum winning probabilities for this game are $\omega_c(2\text{CHSH}) = 3/4$ and $\omega_q(2\text{CHSH}) = (2+\sqrt{2})/4$. \\

This nomenclature helps distinguish it from the \emph{3CHSH game}, which is usually used for DIQKD. In this game, Alice's questions lie in $\{ 0,1\}$ and Bob's question lie in $\{ 0,1,2\}$. These questions are sampled according to the probability distribution 
\begin{align*}
	P_{XY} (x,y) = \begin{cases} 1-\nu \quad &\text{if } x=0, y=2 \\ 
	\nu/4 \quad & \text{if } x, y \in \{0,1\}
	\end{cases}
\end{align*}
where $\nu \in (0,1)$ is a parameter which we will fix later. For the questions $x,y \in \{0,1\}$, Alice and Bob win this game if they win the standard CHSH game, that is, if $\neg [a \oplus b \oplus (x \wedge y)]$ is true. On questions $(x,y) = (0,2)$, they win if their answers are equal. The maximum winning probabilities for this game are $\omega_c(3\text{CHSH}) = 1-\frac{\nu}{4}$ and $\omega_q(3\text{CHSH}) = 1 - \frac{(2 -\sqrt{2})\nu}{4}$. \\

We require the following \emph{anchoring transform} for non-local games in order to describe our protocol. 
\begin{definition}[Anchoring transform {\cite{Bavarian21}}]
    Let $G=(\mathcal{X}, \mathcal{Y}, \mathcal{A}, \mathcal{B}, \Pi_{XY}, V)$ be a non-local game and $0<\alpha \leq 1$. In the $\alpha$-anchored game $G_\perp :=(\mathcal{X}\cup \{\perp\}, \mathcal{Y}\cup \{\perp\}, \mathcal{A}, \mathcal{B}, \Pi^\perp_{XY}, V_\perp)$, the Referee first uses $\Pi_{XY}$ to sample questions $x,y$ for Alice and Bob. Then, randomly and independently with probability $\alpha$, he replaces each of $x$ and $y$ with an auxiliary ``anchor'' symbol $\perp$ to obtain the questions for the anchored game $G_\perp$. Alice and Bob win the game if either one of the questions was $\perp$, or if their answers $a,b$ satisfy the original game's predicate, that is, $V(x,y,a,b)=1$. The quantum winning probability for the anchored game satisfies 
    \begin{align*}
        \omega_q(G_\perp) = 1 - (1-\alpha)^2 (1-\omega_q(G)).
    \end{align*}
    \label{def:Anchoring_transform}
\end{definition}
We call the game obtained after applying the anchoring transform for $\alpha \in (0,1)$ to the 3CHSH game, the 3CHSH$_\perp$ game. Once again, we will consider $\alpha$ to be a parameter and fix it later. \\

A strategy $S$ for a non-local game $G = (\mathcal{X}, \mathcal{Y}, \mathcal{A}, \mathcal{B}, \Pi_{XY}, V)$ consists of the tuple $(\Psi_{E_A E_B}, \{ A_x \}_{x \in \mathcal{X}}, \{ B_y\}_{y \in \mathcal{Y}})$ where $\Psi_{E_A E_B}$ is the quantum state shared by Alice and Bob, $A_x$ is the measurement used by Alice on question $x$, and $B_y$ is the measurement used by Bob on question $y$.\\

If a quantum strategy wins the 2CHSH game with a probability strictly greater than $3/4$, then Alice's answer is guaranteed to be random with respect to any purification of the initial state held by the adversary, Eve, and the questions for the game. This statement was first proven in \cite{Pironio09} (also stated here in Lemma \ref{lemm:2CHSH_entropy}). We state a counterpart for this lemma for the 3CHSH$_\perp$ game. It follows fairly easily from the bound in \cite{Pironio09}. We prove it in Appendix \ref{sec:app_single_box}. 

\begin{lemma}
   Suppose that a given quantum strategy for the 3CHSH$_{\perp}$ game starting with $\rho_{E_A E_B E}^{(0)}$ wins the 3CHSH$_{\perp}$ game with probability $\omega \in \lr{[ 1- \frac{(1-\alpha)^2 \nu}{4} , 1- \frac{2-\sqrt{2}}{4} (1-\alpha)^2 \nu }]$. Let $X$ and $Y$ be Alice and Bob's questions during the game, and $A$ and $B$ be their answers produced according to this strategy. Then, for the post measurement state $\rho_{XYABE}$, we have
   \begin{align}
		H(AB| E XY)_\rho \geq H(A| E X)_\rho \geq (1- \alpha)F(g_{\alpha, \nu}(\omega))
		\label{eq:3CHSH_anch_entropy}
   \end{align}
	where the functions $F$ and $g_{\alpha, \nu}(\omega)$ are given by
	\begin{align}
		F(x) = 1 - h \rndBrk{\frac{1}{2} + \frac{1}{2} \sqrt{3 - 16\, x\left(1 - x\right)}} \quad \text{for}\quad g_{\alpha, \nu}(\omega) = 1 - \frac{1 - \omega}{\nu (1 - \alpha)^2}.
		\label{eq:3CHSH_bd_phi}
	\end{align}
   \label{lemm:3CHSH_anch_entropy}
\end{lemma}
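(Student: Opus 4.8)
The plan is to reduce the anchored three-input game back to a single-round 2CHSH strategy and then invoke the Pironio bound (Lemma~\ref{lemm:2CHSH_entropy}). I would begin by peeling off the two easy inequalities. The claim $H(AB|EXY)_\rho \geq H(A|EX)_\rho$ splits as $H(AB|EXY) \geq H(A|EXY)$ (data processing, discarding $B$) followed by $H(A|EXY) = H(A|EX)$; the latter equality holds because Alice measures only her share of the state using a POVM indexed by $X$ alone, so by the no-signalling structure her answer $A$ is independent of Bob's question $Y$ conditioned on $(E,X)$. This leaves the genuine content, the bound $H(A|EX)_\rho \geq (1-\alpha)F(g_{\alpha,\nu}(\omega))$.

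The core step is a chain of winning-probability reductions. By the anchoring identity of Definition~\ref{def:Anchoring_transform} applied to $G = 3\mathrm{CHSH}$, the value $\omega$ satisfies $1-\omega = (1-\alpha)^2(1-\omega_{3\mathrm{CHSH}})$, where $\omega_{3\mathrm{CHSH}}$ is the winning probability of the strategy induced on the non-anchored rounds. Decomposing the 3CHSH winning condition over its key round $(0,2)$ (weight $1-\nu$) and its CHSH test rounds (weight $\nu$) gives $1-\omega_{3\mathrm{CHSH}} = (1-\nu)\Pr[a\neq b\mid 0,2] + \nu(1-\omega_{2\mathrm{CHSH}}) \geq \nu(1-\omega_{2\mathrm{CHSH}})$, where $\omega_{2\mathrm{CHSH}}$ is the CHSH value of the single-round strategy obtained by restricting the induced strategy to inputs in $\{0,1\}$. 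Combining the two relations and solving for $\omega_{2\mathrm{CHSH}}$ yields exactly $\omega_{2\mathrm{CHSH}} \geq 1 - \frac{1-\omega}{\nu(1-\alpha)^2} = g_{\alpha,\nu}(\omega)$. A short check confirms that as $\omega$ ranges over the stated interval, $g_{\alpha,\nu}(\omega)$ sweeps $[3/4,\,(2+\sqrt{2})/4]$, which is precisely the regime where $F$ is defined and, since $x(1-x)$ is decreasing for $x>1/2$, monotonically increasing.

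With this in hand I would apply Lemma~\ref{lemm:2CHSH_entropy} to the induced single-round 2CHSH strategy---whose shared state and measurements are inherited from the 3CHSH$_\perp$ device, with $E$ as the purifying adversary---to obtain a lower bound on the entropy of Alice's genuine answer of the form $F(\omega_{2\mathrm{CHSH}}) \geq F(g_{\alpha,\nu}(\omega))$, the last inequality by monotonicity of $F$. The anchoring prefactor then comes from conditioning: writing $H(A|EX) \geq (1-\alpha)\,H(A \mid E, X, X \neq\perp)$ (dropping the non-negative contribution of the $X=\perp$ branch) and identifying the conditional state on genuine inputs with the state to which the Pironio bound was applied gives $H(A|EX)_\rho \geq (1-\alpha)F(g_{\alpha,\nu}(\omega))$.

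I expect the main obstacle to be the faithful construction of the single-round object: one must verify that the POVMs and shared state extracted from the device really induce a bona fide single-round CHSH strategy on the same purification $E$, so that Lemma~\ref{lemm:2CHSH_entropy} applies verbatim, and that the conditional state entering the entropy bound matches the input distribution for which the lemma is stated. The winning-probability algebra and the monotonicity/range check are routine; the care lies in making sure the constant $(1-\alpha)$---rather than $(1-\alpha)(1-\nu/2)$---emerges, which hinges on exactly how Pironio's bound is phrased for the relevant input marginal.
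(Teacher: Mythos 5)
Your proposal is correct and takes essentially the same route as the paper: the paper likewise reduces 3CHSH$_\perp$ to 3CHSH and then to 2CHSH via induced strategies sharing the same state and measurements (packaging the middle step as a standalone lemma, Lemma~\ref{lemm:3CHSH_entropy}), obtains exactly your winning-probability algebra $\omega_{2\mathrm{CHSH}} \geq g_{\alpha,\nu}(\omega)$, applies the per-question Pironio bound (Lemma~\ref{lemm:2CHSH_entropy}), and extracts the $(1-\alpha)$ prefactor by dropping the $X=\perp$ branch just as you describe. One small wording fix: the first inequality $H(AB|EXY)_\rho \geq H(A|EXY)_\rho$ holds because $B$ is classical (so $H(B|AEXY)\geq 0$), not by a generic discard-a-subsystem data-processing step, which can fail for quantum $B$.
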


\section{Protocol}

Before we describe our protocol for parallel DIQKD, we must introduce a key result from \cite{Bavarian21}. For every $\alpha$-anchored game $G_{\perp} := (\mathcal{X}, \mathcal{Y}, \mathcal{A}, \mathcal{B}, P_{XY},V)$ (a game produced by applying the anchoring transform), \cite[Section 4.1]{Bavarian21} shows that the probability distribution $P_{XY}$ can be extended to the distribution $\hat{P}_{\Omega X Y}$ such that 
\begin{align}
	&\hat{P}_{XY} := P_{XY} \\
	&\hat{P}_{\Omega X Y} = \hat{P}_{\Omega} \hat{P}_{X|\Omega} \hat{P}_{Y|\Omega}.
\end{align}
This extension plays a crucial role in our protocol, as we will demonstrate in the subsequent sections. In this work, we will call the random variable $\Omega$ the \emph{seed randomness} for the questions. We note that $\Omega$ is defined such that it can be sampled efficiently by Alice. Given $\Omega$, Alice and Bob can independently sample their questions for the game, $G_\perp$. \\

For the rest of the paper, let the tuple $(\mathcal{X}, \mathcal{Y}, \mathcal{A}, \mathcal{B}, P_{XY},V)$ represent the 3CHSH$_\perp$ game. We will drop the hat notation while referring to the extension distribution for this question distribution. We simply refer to it as $P_{\Omega X Y}$.\\

We present the protocol for parallel DIQKD in Protocol \ref{prot:par_diqkd_prot}. Note that this protocol does not fully reveal the questions to Eve. Unlike sequential DI-QKD, where Alice and Bob can reveal all of their questions to Eve, the security proofs for parallel DI-QKD require that Alice and Bob only reveal a small fraction of their questions to Eve\footnote{In the parallel DIQKD protocol of \cite{Jain20} only a small fraction of the questions are announced publicly. In the protocol used by \cite{Vidick17}, Alice and Bob can reveal any fraction smaller than 1.}. Similarly, in our setting, since only $\Omega_1^n$ are revealed to Eve, only a fraction of the information about the questions is leaked to Eve. Importantly, this is the main obstacle to extending these techniques to prove security for parallel device-independent randomness extraction. It should be noted that the previous protocols and proofs \cite{Jain20, Vidick17} required the probability distributions of Alice and Bob's questions to be a product distribution, so that both Alice and Bob could sample their questions independently. However, by utilising the anchoring transform and the seed randomness we are able to relax this constraint. 

\begin{figure}
  \begin{mdframed}
	\textbf{Parameters:}
	\begin{itemize}
		\renewcommand{\labelitemi}{--}
		\item $\alpha, \nu \in \rndBrk{0, 0.1}$ are parameters for the 3CHSH$_\perp$
		\item $\delta \in \rndBrk{0,\frac{1}{2}}$ determines the size of the raw key
		\item $\omega_{\text{th}} \in \rndBrk{ 1- \frac{(1-\alpha)^2 \nu}{4} , 1- \frac{2-\sqrt{2}}{4} (1-\alpha)^2 \nu }$ is the threshold for the winning probability on the test round
		\item $\gamma \in (0,1)$ parameter for sampling testing rounds.
	\end{itemize}
  \textbf{Parallel DIQKD protocol}
  \begin{enumerate}
    \item Alice randomly samples $\Omega_1^n$ independently and identically. 
    \item Alice sends $\Omega_1^n$ to Bob. 
    \item Alice and Bob use $\Omega_1^n$ to sample the questions $X_1^n$ and $Y_1^n$ for the 3CHSH$_\perp$ game.
    \item Alice and Bob use their questions to play $n$ 3CHSH$_\perp$ games. Let $A_1^n$ and $B_1^n$ be the answers. 
    \item Alice randomly selects a subset $J = \{ I_1, I_2, \cdots, I_t \} $ of size $t = \frac{\delta}{\log |\mathcal{A}||\mathcal{B}| + \delta} n$. She announces this subset to Bob.
    \item For each $i \in [t]$, Alice randomly selects a $T_j \in \{0,1\}$ with probability $\Pr(T_j = 1) = \gamma$. Let $S := \curlyBrk{I_j : j \in [t] \text{ and } T_j =1} \subseteq J$. She announces $S$, her questions $X_S$, and answers $A_S$ for this subset of the games. 
    \item Bob checks whether $\sum_{i \in S} V(X_i, Y_i, A_i, B_i) \geq \gamma \omega_{\text{th}} t$. Alice and Bob abort if this is not satisifed. 
    \item $A_J$ and $B_J$ are Alice and Bob's raw keys. They use information reconciliation, raw key validation and privacy amplification to create a secret key. 
  \end{enumerate}
  \end{mdframed}
  {\captionof{Protocol}{}
  \label{prot:par_diqkd_prot}}
\end{figure}

\section{Setup}

As indicated in Protocol \ref{prot:par_diqkd_prot} we let $\Omega_1^n$ be the randomness seed for the questions shared by Alice. $X_1^n$ and $Y_1^n$ denote Alice and Bob's questions for the $n$ 3CHSH$_\perp$ games during the protocol, and $A_1^n$ and $B_1^n$ denote their answers for these games. \\

In order to analyse the protocol, we fix a strategy for Eve. Let's suppose that Eve distributes the registers $E_A$ and $E_B$ of the pure state $\psi_{E_A E_B E}$ between Alice and Bob, keeping register $E$ for herself\footnote{The state can be considered pure. If it were not, then purifying it and providing Eve the purification register would only increase Eve's information.}. Let Alice and Bob use measurements $\{A_{x_1^n}^{E_A}(a_1^n)\}_{a_1^n}$ and $\{B_{y_1^n}^{E_B}(b_1^n)\}_{b_1^n}$ to measure their registers $E_A$ and $E_B$ respectively given questions $x_1^n$ and $y_1^n$. \\

The state after all the 3CHSH$_\perp$ games have been played will be denoted as $\rho$. We have that 
\begin{align}
	\rho_{\Omega_1^n X_1^n Y_1^n A_1^n B_1^n E} &= \sum_{\omega_1^n, x_1^n, y_1^n} P_{\Omega_1^n X_1^n Y_1^n}(\omega_1^n, x_1^n, y_1^n) \puretomixed{\omega_1^n, x_1^n, y_1^n} \nonumber \\ 
	&\qquad \qquad \otimes \sum_{a_1^n, b_1^n} \puretomixed{a_1^n, b_1^n} \otimes \tr_{E_A E_B} \rndBrk{A_{x_1^n}^{E_A}(a_1^n) \otimes B_{y_1^n}^{E_B}(b_1^n) \psi_{E_A E_B E}}
	\label{eq:rho_form1}
\end{align}
where $P_{\Omega_1^n X_1^n Y_1^n}$ is the i.i.d distribution $P_{\Omega X Y}^{\otimes n}$. We can also write the above as 
\begin{align}
	\rho_{\Omega_1^n X_1^n Y_1^n A_1^n B_1^n E} &= \sum_{\omega_1^n, x_1^n, y_1^n} P_{\Omega_1^n X_1^n Y_1^n}(\omega_1^n, x_1^n, y_1^n) \puretomixed{\omega_1^n, x_1^n, y_1^n} \nonumber \\
	&\qquad \qquad \otimes \sum_{a_1^n, b_1^n} P_{A_1^n B_1^n | X_1^n Y_1^n}(a_1^n, b_1^n | x_1^n, y_1^n) \puretomixed{a_1^n, b_1^n} \otimes \rho_{E}^{(x_1^n, y_1^n, a_1^n, b_1^n)}
	\label{eq:rho_form2}
\end{align}
for $P_{A_1^n B_1^n | X_1^n Y_1^n}(a_1^n, b_1^n | x_1^n, y_1^n) = \tr\rndBrk{A_{x_1^n}^{E_A}(a_1^n) \otimes B_{y_1^n}^{E_B}(b_1^n) \psi_{E_A E_B E}}$ and 
\begin{align}
	\rho_{E}^{(x_1^n, y_1^n, a_1^n, b_1^n)} := \frac{\tr_{E_A E_B} \rndBrk{A_{x_1^n}^{E_A}(a_1^n) \otimes B_{y_1^n}^{E_B}(b_1^n) \psi_{E_A E_B E}}}{P_{A_1^n B_1^n | X_1^n Y_1^n}(a_1^n, b_1^n | x_1^n, y_1^n)}.
\end{align}
We will also use the notation $P_{\Omega_1^n X_1^n Y_1^n A_1^n B_1^n} := \rho_{\Omega_1^n X_1^n Y_1^n A_1^n B_1^n}$. \\

Using $\rho$ in the form in Eq. \ref{eq:rho_form2}, we can further define the state of register $E$ conditioned on other classical variables, for example register $E$ conditioned on $\omega_1^n$ is
\begin{align}
	\rho_{E}^{(\omega_1^n)} &= \sum_{x_1^n, y_1^n, a_1^n, b_1^n} P_{X_1^n Y_1^n A_1^n B_1^n| \Omega_1^n}(x_1^n, y_1^n, a_1^n, b_1^n | \omega_1^n) \rho_{E}^{(x_1^n, y_1^n, a_1^n, b_1^n)} \\
	&= \Ex_{x_1^n y_1^n a_1^n b_1^n| \omega_1^n} \sqBrk{\rho_{E}^{(x_1^n, y_1^n, a_1^n, b_1^n)}}.
\end{align}
Finally, we let $F$ (for ``fail'') denote the event that the protocol aborts. 

\section{Key results from \cite{Bavarian21}}

In Protocol \ref{prot:par_diqkd_prot}, we have chosen 
\begin{align}
  t = \frac{\delta}{\log |\mathcal{A}||\mathcal{B}| + \delta} n.
\end{align}
In our security analysis, we will fix a subset $C \subseteq J$ and show that for any such set we can embed a single-round 3CHSH$_\perp$ game in a random index outside this set. The value of $t$ above guarantees that for any such subset $C \subseteq J$, we have 
\begin{align}
  \frac{|C|}{n-|C|} \log |\mathcal{A}| |\mathcal{B}| \leq \delta.
\end{align}
The results established in \cite{Bavarian21} for parallel repetition settings remain applicable in our context, with minor modifications (see Appendix \ref{sec:anch_games_result_carry} for a detailed discussion). Specifically, these results can be adapted to our setting by introducing a reference register $E$. In this section, we present the key definitions and results from \cite{Bavarian21} which are used in our security proof.\\

Following \cite[Section 4.2]{Bavarian21}, for the subset $C \subseteq [n]$ and $i \in C^c= [n]\setminus C$, we define the \emph{dependency breaking variable} $R_{-i}$ as
\begin{align}
  R_{-i} := (\Omega_j)_{j \in [n] \setminus (C \cup \{i\})}\cup (X_C, Y_C, A_C, B_C). 
  \label{eq:R_i_defn}
\end{align}
We also use the shorthand $\Omega_{-i}$ for $(\Omega_j)_{j \in [n] \setminus (C \cup \{i\})} \cup (X_C, Y_C)$. \\

For questions $x_1^n$ and $y_1^n$, we define the following measurements for answers on the subset $C \subseteq [n]$:
\begin{align}
	A_{x_1^n}^{E_A}(a_C) &= \sum_{a_1^n | a_C} A_{x_1^n}^{E_A}(a_1^n) \\
	B_{y_1^n}^{E_B}(b_C) &= \sum_{b_1^n | b_C} B_{y_1^n}^{E_B}(b_1^n).
\end{align}
Here $a_1^n | a_C$ denotes strings $a_1^n$ consistent with $a_C$. $b_1^n | b_C$ is defined similarly. \\

For subset $C \subseteq [n]$, $i \in C^c$, the seed $\omega_{-i} = (\omega_j)_{j \in [n] \setminus (C \cup \{i\})} \cup (x_C, y_C)$ (an instantiation of $\Omega_{-i}$), and questions $x$ and $y$ we also define the measurements:
\begin{align}
	A_{\omega_{-i}, x}^{E_A}(a_1^n) :=& \Ex_{X_1^n | \Omega_{-i} = \omega_{-i}, X_i = x} A_{x_1^n}^{E_A}(a_1^n) \\
	=& \sum_{x_1^n} P_{X_1^n | \Omega_{-i} X_i }(x_1^n | \omega_{-i}, x) A_{x_1^n}^{E_A}(a_1^n)
\end{align} 
and
\begin{align}
	B_{\omega_{-i}, y}^{E_B}(b_1^n) :=& \Ex_{Y_1^n | \Omega_{-i} = \omega_{-i}, Y_i = y} B_{y_1^n}^{E_B}(b_1^n) \\
	=& \sum_{y_1^n} P_{Y_1^n | \Omega_{-i} Y_i }(y_1^n | \omega_{-i}, y) B_{y_1^n}^{E_B}(b_1^n).
\end{align}
Also define
\begin{align}
	A_{\omega_{-i}, x}^{E_A}(a_C) &:= \sum_{a_1^n | a_C} A_{\omega_{-i}, x}^{E_A}(a_1^n) \\
	B_{\omega_{-i}, y}^{E_B}(b_C) &:= \sum_{b_1^n | b_C} B_{\omega_{-i}, y}^{E_B}(b_1^n).
\end{align}
For subset $C \subseteq [n]$, $i \in C^c$, $r_{-i} = (\omega_{-i}, a_C, b_C)$ (an instance of $R_{-i}$) and questions $X_i = x$ and $Y_i = y$, we also define the unnormalised state
\begin{align}
	& \ket{\Phi^{(r_{-i}, x, y)}}_{E_A E_B E} := \sqrt{A_{\omega_{-i}, x}^{E_A}(a_C)} \otimes \sqrt{B_{\omega_{-i}, y}^{E_B}(b_C)} \ket{\psi}_{E_A E_B E}
\end{align}
and its normalisation 
\begin{align}
	\ket{\tilde{\Phi}^{(r_{-i}, x, y)}}_{E_A E_B E} := \frac{1}{\norm{\ket{\Phi^{(r_{-i}, x, y)}}}} \ket{\Phi^{(r_{-i}, x, y)}}_{E_A E_B E}.
\end{align}
It is shown in \cite[Proposition 4.9]{Bavarian21} that 
\begin{align}
	\norm{\ket{\Phi^{(r_{-i}, x, y)}}} = \rndBrk{P_{A_C B_C | \Omega_{-i} X_i Y_i} (a_c, b_c | \omega_{-i}, x ,y)}^{1/2}.
\end{align}

The following result from \cite{Bavarian21} is the key towards showing that it is possible to simulate the answers produced by the parallelly repeated strategy at a random index outside the set $C$, using a strategy for the single-round 3CHSH$_\perp$ game which embeds the game at this index.

\begin{proposition}[{\cite[Proposition 5.1]{Bavarian21}}]
	For every $C \subseteq J$, $i \in C^c$, dependency breaking variable $r_{-i}$, and questions $x$ and $y$, there exist unitaries $U_{r_{-i}, x}^{E_A}$ acting on $E_A$ and $V_{r_{-i}, y}^{E_B}$ acting on $E_B$ such that 
	\begin{align}
		\Ex_{I}\Ex_{R_{-i}} \Ex_{XY} \norm{U_{r_{-i}, x}^{E_A} \otimes V_{r_{-i}, y}^{E_B} \otimes \Id^E \ket{\tilde{\Phi}^{(r_{-i}, \perp, \perp)}}_{E_A E_B E} - \ket{\tilde{\Phi}^{(r_{-i}, x, y)}}_{E_A E_B E}} = O(\delta^{1/16}/\alpha^3),
	\end{align}
	where $\Ex_{I}$ denotes expectation over index $I$ which is sampled uniformly at random from $C^c$, $\Ex_{R_{-i}}$ denotes expectation over $r_{-i}$ sampled from $P_{R_{-i}}$ and $\Ex_{XY}$ denotes expectation over the questions sampled according to the question distribution for the single-round game $P_{XY}$.
	\label{prop:unit_steering}
\end{proposition}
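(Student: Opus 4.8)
The plan is to prove the two-sided steering by chaining together two one-sided steering steps through the intermediate state $\ket{\tilde{\Phi}^{(r_{-i}, x, \perp)}}$, in which only Alice's question has been switched on from the anchor. Concretely, I would first construct a unitary $U_{r_{-i},x}^{E_A}$ on Alice's register that maps $\ket{\tilde{\Phi}^{(r_{-i}, \perp, \perp)}}$ close to $\ket{\tilde{\Phi}^{(r_{-i}, x, \perp)}}$, then a unitary $V_{r_{-i},y}^{E_B}$ on Bob's register that maps $\ket{\tilde{\Phi}^{(r_{-i}, x, \perp)}}$ close to $\ket{\tilde{\Phi}^{(r_{-i}, x, y)}}$. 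Since $U$ acts only on $E_A$ and $V$ only on $E_B$, they commute and their product never touches the reference register $E$, matching the claimed operator $U_{r_{-i},x}^{E_A} \otimes V_{r_{-i},y}^{E_B} \otimes \Id^E$. A triangle inequality then reduces the proposition to bounding each one-sided steering error in expectation over $I$, $R_{-i}$, and $(X,Y)$.

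For the one-sided steps I would invoke Uhlmann's theorem. Because $\ket{\tilde{\Phi}^{(r_{-i}, \perp, \perp)}}$ and $\ket{\tilde{\Phi}^{(r_{-i}, x, \perp)}}$ are purifications, on $E_A$, of their respective reduced states on $E_B E$, the smallest Euclidean distance achievable after an optimal unitary on $E_A$ equals, up to the usual fidelity-to-distance conversion, the fidelity of those two reduced states on $E_B E$. Thus the first steering error is governed by how little the $E_B E$-marginal moves when Alice's question is switched from $\perp$ to $x$ while Bob stays anchored; symmetrically, the second error is governed by the movement of the $E_A E$-marginal when Bob's question is switched on with Alice fixed at $x$. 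The anchoring transform is exactly what makes these marginals stable on average: since $X_i = \perp$ (resp. $Y_i = \perp$) occurs with probability at least $\alpha$, the anchored state serves as a reference ``pivot'', and dividing by these anchor weights when renormalising the conditional states is the source of the inverse powers of $\alpha$ in the final bound.

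To bound the expected marginal movement I would pass to information-theoretic quantities. The expected (over $x$) infidelity of the $E_B E$-marginal as a function of $X_i$ is controlled, via a quantum Pinsker-type inequality, by a conditional mutual information between $X_i$ and the $E_B E$ registers given the dependency-breaking variable $R_{-i}$. The decisive point is that summing these terms over all $i \in C^c$ telescopes, by the chain rule, into the total information carried by $(X_C, Y_C, A_C, B_C)$ together with the leftover seeds, which is $O(|C|\log|\mathcal{A}||\mathcal{B}|)$. Averaging over the uniform index $I \in C^c$ therefore divides by $|C^c| = n - |C|$, and the choice $t = \frac{\delta}{\log|\mathcal{A}||\mathcal{B}| + \delta}\, n$ guarantees $\frac{|C|}{n-|C|}\log|\mathcal{A}||\mathcal{B}| \le \delta$, so each averaged mutual information is $O(\delta)$. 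Throughout, the reference register $E$ is carried along passively on the fixed side of each Uhlmann step, so none of these bounds demand that the steering unitaries act on $E$.

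The main obstacle is the quantitative bookkeeping rather than any single conceptual step. Each conversion --- conditional mutual information to trace distance (Pinsker), trace distance to fidelity, fidelity of marginals to Euclidean distance of purifications (Uhlmann), and the renormalisation against anchor weights of order $\alpha$ --- loses a square root or a power of $\alpha$, and these losses compound across the two one-sided steps. Tracking them so that the $O(\delta)$ per-coordinate information degrades to exactly the stated $O(\delta^{1/16}/\alpha^3)$, while keeping all expectations over $I$, $R_{-i}$ and $(X,Y)$ correctly ordered and using Jensen and Cauchy--Schwarz to pull the square roots outside the expectations, is the delicate part. A secondary point to verify is that introducing the adversary's purifying register $E$ into the argument of \cite{Bavarian21} is harmless: because $E$ only ever appears inside the fixed marginals of the Uhlmann steps and is never measured, every estimate carries over with $E_B E$ (resp. $E_A E$) in place of $E_B$ (resp. $E_A$).
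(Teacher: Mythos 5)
Your proposal is correct and takes essentially the same route as the paper, whose ``proof'' of this proposition is precisely to invoke \cite[Proposition 5.1]{Bavarian21} and argue (Appendix \ref{sec:anch_games_result_carry}) that its proof carries over when Eve's register $E$ is carried passively alongside $E_B$ (resp.\ $E_A$) in every step. The ingredients you identify --- the triangle inequality through the intermediate state $\ket{\tilde{\Phi}^{(r_{-i},x,\perp)}}$ with one-sided Uhlmann steering on each side, the mutual-information/Pinsker control of the marginals with chain-rule telescoping over $C^c$ giving $O(\delta)$ per coordinate under the choice of $t$, and the anchor weights as the source of the inverse powers of $\alpha$ --- are exactly the structure of that proof (its Claims 5.13--5.16 and Lemmas 5.12, 5.17), including the observation that $E$ is never acted upon and so all estimates survive with $E_B E$ (resp.\ $E_A E$) in place of $E_B$ (resp.\ $E_A$).
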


\section{Proof approach with von Neumann entropies}
\label{sec:vN_proof}
\begin{sloppypar}

We will first demonstrate our proof strategy by proving the requisite entropic statements with von Neumann entropies instead of one-shot entropies. This approach allows us to separate the inherent complexity of one-shot entropies from the core concepts of the proof. We can focus on using the results from the parallel repetition setting for analysing DIQKD and establishing a clear roadmap for the security proof, instead of grappling with the various technicalities associated with one-shot entropies. In the next section, we will proceed to develop a comprehensive one-shot security proof.\\

For simplicity, we also set aside the testing procedure and the conditioning of the state on the protocol not aborting. Instead, we simply assume that the state $\psi_{E_A E_B E}$ and the measurements set up by Eve are such that 
\begin{align}
	\Pr_{\rho} \sqBrk{\frac{1}{n}\sum_{i \in [n]} W_i  \geq \omega_{\text{th}} } \geq 1- \epsilon
	\label{eq:HighAvgWinningProb}
\end{align}
where $\epsilon$ is negligibly small. This assumption is effectively equivalent to stating that the protocol only aborts with a negligible probability. Once again, this restriction allows us to concentrate on the more challenging aspects of the security proof. In the next section, we will see that once we properly account for the testing procedure, this assumption can be eliminated.\\

As we discussed in Sec. \ref{sec:qkd_bg}, in order to prove that a protocol can securely produce a key of length $\Omega(n)$ (that is, it has a positive key rate), we need to show that the difference
\begin{align}
	H^{\epsilon}_{\min} (A_J | E \Omega_1^n S J A_S X_S )_{\rho_{|\lnot F}} - H_{\max}^{\epsilon'}(A_J | B_J J)_{\rho_{\text{honest}}} \geq \Omega(n)
\end{align}
for small $\epsilon$ and $\epsilon'$. $\rho_{\text{honest}}$ above is the state produced at the end of a protocol with no adversary, Eve. \\

We use the conditional entropy $H(A_J | E \Omega_1^n J)_{\rho}$ as proxy for the entropy $H^{\epsilon}_{\min} (A_J | E \Omega_1^n S J A_S X_S )_{\rho_{|\lnot F}}$, which quantifies the amount of randomness that can be safely extracted from the Alice's raw key using privacy amplification. Similarly, the entropy $H(A_J | B_J J)_{\rho_{\text{honest}}}$ serves as a proxy for the information reconciliation cost, which is given by $H^{\epsilon'}_{\max}(A_J | B_J J)_{\rho_{\text{honest}}}$. In our von Neumann security proof, we aim to demonstrate that 
\begin{align}
	H(A_J | E \Omega_1^n J)_{\rho} - H(B_J | A_J J)_{\rho_{\text{honest}}} \geq \Omega(n).
\end{align}

\end{sloppypar}
\subsection{Bounding entropy production for privacy amplification}

We begin by showing that Eve's uncertainty of Alice's answers $A_J$ measured using von Neumann entropy is high, that is, 
\begin{align}
    H(A_J | E \Omega_1^n J)_{\rho} \geq \Omega(t).
\end{align}
This entropy can be expanded using the chain rule as 
\begin{align}
    H(A_J | E \Omega_1^n J)_{\rho} &= \sum_{k=1}^{t} H(A_{I_k} | E \Omega_1^n A_{I_1} \cdots A_{I_{k-1}} J)_{\rho}.
    \label{eq:chain_rule_for_privacy_amplification}
\end{align}
Further, we can write the term inside the summation above as the expectation
\begin{align}
	H(A_{I_k} | E \Omega_1^n A_{I_1} \cdots A_{I_{k-1}} J)_{\rho} &= \Ex_{i_1^{k-1}} \left[ H(A_{I_k} | E \Omega_1^n A_{i_1} \cdots A_{i_{k-1}} I_k)_{\rho} \right].
\end{align}
For the rest of this section, we focus our attention on this term. Let us fix the choice of random variables $I_{1}^{k-1} = i_{1}^{k-1}$. We then define the subset $C := \{i_{1}, i_2, \ldots, i_{k-1}\} \subseteq J$. This notation allows us to maintain clarity and enables us to collectively bound the term inside the expectation for various values of $k$ and $i_{1}^{k-1}$. \\

The term $H(A_{I_k} | E \Omega_1^n A_{i_1} \cdots A_{i_{k-1}} I_k)_{\rho}$ will be bounded in two steps. In the first step, we lower bound it using $H(A_{I_k} | E I_k R_{-I_k} X_{I_k} )_{\rho}$, where $R_{-I_k}$ is the dependency breaking random variable defined with respect to $C$ in Eq. \ref{eq:R_i_defn}. In the second step, we show that it is possible to approximately simulate the state $\rho_{A_{I_k} I_k R_{-I_k} X_{I_k} E}$ using a quantum strategy for a single instance of the game, which has high winning probability. This allows us to use the single-round entropy bound for the 3CHSH$_\perp$ game (Lemma \ref{lemm:3CHSH_anch_entropy}) to lower bound the entropy of the answer $A_{I_k}$ for the simulated state, and subsequently for $\rho$ as well.\\

\noindent \textbf{\normalsize Step 1: Reduction to parallel repetition variables}\\

\noindent In the first step, we will show that 
\begin{align}
    H(A_{I_k} | E \Omega_1^n A_{i_1} \cdots A_{i_{k-1}} I_j)_{\rho} \geq H(A_{I_k} | E I_k R_{-I_k} X_{I_k} )_{\rho}
\end{align} 
where $R_{-I_k} = (\Omega_{-I_k}, X_C, Y_C, A_C, B_C)$. This is fairly simple. It only requires one to use Markov chain properties and data processing. Informally speaking, Alice and Bob's quantum devices only ever get to \emph{see} the questions $X_1^n$, hence, the uncertainty of the answers only decreases if we change some $\Omega_i$s with $X_i$s. We formalise this argument in the following lemma. 

\begin{lemma}
	Let $\rho^{(0)}_{\Omega_1^2 X_1^2 E_A E} = \rho_{\Omega_1^2 X_1^2} \otimes \rho^{(0)}_{E_A E}$ be a classical-quantum density operator, where $\Omega_1^2$ and $X_1^2$ are classical registers and $E_A$ and $E$ are quantum registers. Further, suppose that $\Omega_1 X_1$ and $\Omega_2 X_2$ are sampled independently, that is, $\rho_{\Omega_1^2 X_1^2} = \rho_{\Omega_1 X_1} \rho_{\Omega_2 X_2}$. If the register $E_A$ is measured according to measurement operators $\{A_{x_1^2}^{E_A}(a_1^2)\}_{a_1^2}$, which depend only on the classical register $X_1^2$, then $\Omega_1 \leftrightarrow X_1 \leftrightarrow \Omega_2 A_1^2 E$ forms a Markov chain. 
	\label{lemm:MarkovChainOmegaXA}
\end{lemma}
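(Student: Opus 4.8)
The plan is to write the post-measurement state explicitly and exploit the single fact that carries the lemma: the measurement operators $\{A^{E_A}_{x_1^2}(a_1^2)\}$ depend only on the questions $X_1^2$, never on the seed $\Omega_1^2$. Consequently the quantum register $E$ and the answers $A_1^2$ can only be correlated with $\Omega_1^2$ \emph{through} $X_1^2$, and once we condition on $X_1$ and use the product structure of the question distribution, the seed $\Omega_1$ decouples from everything on the other side of the chain. The whole argument is then a matter of making this manifest.

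Concretely, I would first record that, since $\rho^{(0)}_{E_A E}$ is uncorrelated with the classical registers and the measurement is controlled by $X_1^2$ alone, tracing out $E_A$ leaves a subnormalised state on $E$ that is a function of $(x_1^2,a_1^2)$ only,
\[
  \sigma^{(x_1^2, a_1^2)}_E := \tr_{E_A}\!\left[\left(A^{E_A}_{x_1^2}(a_1^2) \otimes \Id_E\right)\rho^{(0)}_{E_A E}\right],
\]
and in particular is independent of $\omega_1^2$. The post-measurement state is then $\rho = \sum_{\omega_1^2,x_1^2,a_1^2} P_{\Omega_1^2 X_1^2}(\omega_1^2,x_1^2)\,\puretomixed{\omega_1^2,x_1^2,a_1^2}\otimes\sigma^{(x_1^2,a_1^2)}_E$. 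Next I would insert the independence assumption $P_{\Omega_1^2 X_1^2}=P_{\Omega_1 X_1}P_{\Omega_2 X_2}$ and pass to the reduced state on the registers appearing in the chain, i.e.\ trace out $X_2$ (and $E_A$, already done). Summing over $x_2$ produces the averaged state
\[
  \tau^{(x_1,\omega_2,a_1^2)}_E := \sum_{x_2} P_{X_2\mid\Omega_2}(x_2\mid\omega_2)\,\sigma^{(x_1,x_2,a_1^2)}_E,
\]
which depends on $\omega_2$ through the conditional law of $X_2$ but remains independent of $\omega_1$.

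Finally I would factor $P_{\Omega_1 X_1}=P_{X_1}P_{\Omega_1\mid X_1}$ and read off the Markov property. Because $\tau$ does not depend on $\omega_1$, the sum over $\omega_1$ factors out, giving
\[
  \rho_{\Omega_1 X_1 \Omega_2 A_1^2 E} = \sum_{x_1} P_{X_1}(x_1)\,\puretomixed{x_1}\otimes\Big(\sum_{\omega_1} P_{\Omega_1\mid X_1}(\omega_1\mid x_1)\puretomixed{\omega_1}\Big)\otimes\Big(\sum_{\omega_2,a_1^2} P_{\Omega_2}(\omega_2)\puretomixed{\omega_2,a_1^2}\otimes\tau^{(x_1,\omega_2,a_1^2)}_E\Big).
\]
Conditioned on $X_1=x_1$ this is a tensor product of a state on $\Omega_1$ and a state on $(\Omega_2,A_1^2,E)$, so $\Omega_1$ is conditionally independent of $\Omega_2 A_1^2 E$ given $X_1$, which is precisely the claimed chain $\Omega_1\leftrightarrow X_1\leftrightarrow\Omega_2 A_1^2 E$.

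I do not expect a genuine obstacle here; the entire content lies in the first observation that the measurement sees only $X_1^2$, after which everything is bookkeeping. The only point requiring a little care is the step where $X_2$ is traced out: one must check that mixing over $X_2$ (weighted by $P_{X_2\mid\Omega_2}$) reintroduces a dependence on $\omega_2$ but not on $\omega_1$, so that the factorisation after conditioning on $X_1$ is genuinely between $\Omega_1$ and the remaining registers. In the main proof this two-register statement is applied with $(\Omega_1,X_1)$ playing the role of the singled-out index and $(\Omega_2,X_2)$ collecting all the other coordinates, which is why the i.i.d.\ product structure of $P_{\Omega_1^n X_1^n}$ supplies exactly the independence hypothesis needed.
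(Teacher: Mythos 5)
Your proposal is correct and follows essentially the same route as the paper: write the post-measurement state explicitly, use the product structure $\rho_{\Omega_1 X_1}\rho_{\Omega_2 X_2}$ together with Bayes' rule, trace out $X_2$ to obtain an $\omega_2$-dependent (but $\omega_1$-independent) average, and read off the tensor factorisation conditioned on $X_1$. The only cosmetic difference is that the paper averages the measurement operators, defining $A_{x_1 \omega_2}^{E_A}(a_1^2) := \sum_{x_2} \rho(x_2 | \omega_2) A_{x_1^2}^{E_A}(a_1^2)$, whereas you average the post-measurement states on $E$; these coincide by linearity of the partial trace.
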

\begin{proof}
	We can write $\rho^{(0)}_{\Omega_1^2 X_1^2 E_A E}$ as
	\begin{align*}
		\rho^{(0)}_{\Omega_1^2 X_1^2 E_A E} = \sum_{\omega_1^2 x_1^2} \rho(x^2_1) \rho(\omega_1|x_1)\rho(\omega_2|x_2) \puretomixed{\omega_1^2 x_1^2} \otimes \rho^{(0)}_{E_A E}.
	\end{align*}
	The post-measurement state is 
	\begin{align*}
		\rho_{\Omega_1^2 X_1^2 A_1^2 E} = \sum_{\omega_1^2 x_1^2} \rho(x^2_1) \rho(\omega_1|x_1)\rho(\omega_2|x_2) \puretomixed{\omega_1^2 x_1^2} \otimes \sum_{a_1^2} \puretomixed{a_1^2} \otimes \tr_{E_A}(A_{x_1^2}^{E_A}(a_1^2) \rho^{(0)}_{E_A E}),
	\end{align*}
	and
	\begin{align*}
		\rho_{\Omega_1^2 X_1 A_1^2 E} &= \sum_{\omega_1^2 x_1^2} \rho(x^2_1) \rho(\omega_1|x_1)\rho(\omega_2|x_2) \puretomixed{\omega_1^2 x_1} \otimes \sum_{a_1^2} \puretomixed{a_1^2} \otimes \tr_{E_A}(A_{x_1^2}^{E_A}(a_1^2) \rho^{(0)}_{E_A E}) \\
		&= \sum_{\omega_1^2 x_1} \rho(x_1) \rho(\omega_1|x_1)\rho(\omega_2) \puretomixed{\omega_1^2 x_1} \otimes \sum_{a_1^2} \puretomixed{a_1^2} \otimes \tr_{E_A}(\sum_{x_2} \rho(x_2 | \omega_2) A_{x_1^2}^{E_A}(a_1^2) \rho^{(0)}_{E_A E}) \\
		&= \sum_{\omega_1^2 x_1} \rho(x_1) \rho(\omega_1|x_1)\rho(\omega_2) \puretomixed{\omega_1^2 x_1} \otimes \sum_{a_1^2} \puretomixed{a_1^2} \otimes \tr_{E_A}(A_{x_1 \omega_2}^{E_A}(a_1^2) \rho^{(0)}_{E_A E})
	\end{align*}
	where we define $A_{x_1 \omega_2}^{E_A}(a_1^2) := \sum_{x_2} \rho(x_2 | \omega_2) A_{x_1^2}^{E_A}(a_1^2)$. This state can also be written as
	\begin{align*}
		\rho_{X_1 \Omega_1 \Omega_2 A_1^2 E} = \sum_{ x_1} \rho(x_1) \puretomixed{ x_1} & \otimes \left( \sum_{\omega_1} \rho(\omega_1|x_1) \puretomixed{\omega_1} \right) \\
		& \otimes \left( \sum_{\omega_2} \rho(\omega_2) \puretomixed{\omega_2} \otimes\sum_{a_1^2} \puretomixed{a_1^2} \otimes \tr_{E_A}(A_{x_1 \omega_2}^{E_A}(a_1^2) \rho^{(0)}_{E_A E}) \right).
	\end{align*}
	This proves that $\Omega_1 \leftrightarrow X_1 \leftrightarrow \Omega_2 A_1^2 E$ form a Markov chain in the state $\rho_{X_1 \Omega_1 \Omega_2 A_1^2 E}$. 
\end{proof}
Note that for the variables in the lemma above, we also have $\Omega_1 \leftrightarrow \Omega_2 X_1 A_1 E \leftrightarrow  A_2 $ using properties of Markov chains. We will use this fact in the following.\\

Fix $I_j = i_j$. Define $C' := C \cup \{ i_j \}$. The state $\rho^{(0)}_{\Omega_{C'} \Omega_{C'^c} X_{C'} X_{C'^c} E_A E} := \rho_{\Omega_{C'} \Omega_{C'^c} X_{C'} X_{C'^c}} \otimes \psi_{E_A E}$ satisfies the conditions for Lemma \ref{lemm:MarkovChainOmegaXA}. Moreover, the register $E_A$ is measured using the measurement $A_{x_1^n}(\cdot)$, which only depends on $X_1^n$ to create $A_1^n$. Therefore, using the lemma above, we have that the state $\rho$ satisfies
\begin{align}
	& \Omega_{C'} \leftrightarrow \Omega_{C'^c} X_{C'} A_C E \leftrightarrow A_{C^c} \\
	\Rightarrow\ & \Omega_{C'} \leftrightarrow \Omega_{C'^c} X_{C'} A_C E \leftrightarrow A_{i_j}.
\end{align}
Since, this is true for every $i_j$, we have
\begin{align}
	\Omega_{C} \Omega_{I_j} \leftrightarrow I_j \Omega_{C'^c} X_{C} X_{I_j} A_C E \leftrightarrow A_{I_j}.
\end{align}
This allows us to bound
\begin{align*}
	H(A_{I_j} | E I_j \Omega_1^n A_C )_{\rho} & \geq H(A_{I_j} | E I_j \Omega_1^n  X_{C} X_{I_j} A_C)_{\rho} \\
	& = H(A_{I_j} | E I_j \Omega_{C'^c} X_{C} X_{I_j} A_C)_{\rho} \\
	& \geq  H(A_{I_j} | E I_j R_{-I_j} X_{I_j} )_{\rho}. \numberthis \label{eq:BoundToCont}
\end{align*}
In the second line above, we used the fact that if $A \leftrightarrow B \leftrightarrow C$, then $H(A|BC) = H(A|B)$. \\

\noindent \textbf{\normalsize Step 2: Simulating $\rho$ using a simple single-round strategy}\\

We will now show that it is possible to approximate the state $\rho_{I_j R_{-I_j} X_{I_j} Y_{I_j} A_{I_j} B_{I_j} E}$ using a quantum strategy for a single instance of the 3CHSH$_\perp$ game, which uses $(I_j , R_{-I_j})$ as shared classical random variables between Alice and Bob. This step is much more challenging and requires us to use deeper results from \cite{Bavarian21}. We use these results primarily to show that the random variable $R_{-I_j}$ is not too correlated with the answer $A_{I_j}$ and hence $H(A_{I_j} | E I_j R_{-I_j} X_{I_j} )_{\rho}$ is large.\\

Consider Proposition \ref{prop:unit_steering} applied to the subset $C \subseteq J$ defined above. Let $U_{r_{-i}, x}^{E_A}$ and $V_{r_{-i}, y}^{E_B}$ be the unitaries provided by this proposition. Then, we have that 
\begin{align}
	\Ex_{I}\Ex_{R_{-i}} \Ex_{XY} \norm{U_{r_{-i}, x}^{E_A} \otimes V_{r_{-i}, y}^{E_B} \otimes \Id^E \ket{\tilde{\Phi}^{(r_{-i}, \perp, \perp)}}_{E_A E_B E} - \ket{\tilde{\Phi}^{(r_{-i}, x, y)}}_{E_A E_B E}} = O(\delta^{1/16}/\alpha^3),
\end{align}
where $i$ is sampled uniformly at random from $C^c$, $r_{-i}$ sampled from $P_{R_{-i}}$ and $x,y$ are sampled from $P_{XY}$. This relation hints at a plausible simulation strategy: Eve samples and distributes $I$, $R_{-I}$ and the state $\ket{\Phi^{(r_{-i}, \perp, \perp)}}_{E_A E_B E}$ between Alice and Bob. Then, given questions $x$ and $y$ during the single-round game, Alice and Bob apply the unitaries $U_{r_{-i}, x}^{E_A}$ and $V_{r_{-i}, y}^{E_B}$ to their registers. This allows them to bring their shared state close to the state $\ket{\Phi^{(r_{-i}, x, y)}}_{E_A E_B E}$. Finally, Alice and Bob use appropriately defined measurements to sample their answers, simulating a state close to $\rho_{I_j R_{-I_j} X_{I_j} Y_{I_j} A_{I_j} B_{I_j} E}$ through a single-round strategy for 3CHSH$_\perp$. \\

The measurements used by Alice and Bob in this last step are defined as 
\begin{align}
	\hat{A}_{r_{-i}, x}(a_i) &:= {A}_{\omega_{-i}, x}(a_c)^{-1/2} \rndBrk{\sum_{a_1^n | a_i, a_C}  {A}_{\omega_{-i}, x}(a_1^n)} {A}_{\omega_{-i}, x}(a_C)^{-1/2} \\
	\hat{B}_{r_{-i}, y}(b_i) &:= {B}_{\omega_{-i}, y}(b_c)^{-1/2} \rndBrk{\sum_{b_1^n | b_i, b_C} {B}_{\omega_{-i}, y}(b_1^n)} {B}_{\omega_{-i}, y}(b_C)^{-1/2} 
\end{align}
where $r_{-i} = (\omega_{-i}, a_C, b_C)$. We state the simulation protocol for the state $\rho_{I_j R_{-I_j} X_{I_j} Y_{I_j} A_{I_j} B_{I_j} E}$ in Box \ref{box:rho_sim}. Let the state obtained through this simulation procedure be $\sigma$.\\
\begin{figure}
  \begin{mdframed}
  \textbf{Single-round protocol for simulating $\rho_{I_j R_{- I_j} X_{I_j} Y_{I_j} A_{I_j} B_{I_j} E}$:}
	\begin{enumerate}
		\item Eve chooses the random variable $I$ uniformly at random from $C^c$, the random variable $R_{-I}$ according to $P_{R_{-I}}$ depending on the value of $I$, and distributes both of them to Alice and Bob. 
		\item Eve distributes the registers $E_A$ and $E_B$ of the state $\ket{\tilde{\Phi}^{(r_{-i}, \perp, \perp)}}_{E_A E_B E}$ between Alice and Bob.
		\item Alice and Bob play the 3CHSH$_{\perp}$ game as follows:
		\begin{enumerate}
			\item Let $(i, r_{-i})$ be the classical variables provided to them by Eve in the first step, and let $x$ and $y$ be their questions. 
			\item Alice applies the unitary $U_{r_{-i}, x}^{E_A}$ to her register $E_A$. Similarly, Bob applies $V_{r_{-i}, y}^{E_B}$ to his register $E_B$. \label{st:sim_prot_AB_apply_UV} 
			\item Alice and Bob measure their registers with $\{\hat{A}_{r_{-i}, x}(a)\}_a$ and $\{\hat{B}_{r_{-i},y}(b) \}_b$ to generate their answers for the game. 
		\end{enumerate}
	\end{enumerate}
  \end{mdframed}
  {\captionof{SimBox}{}
  \label{box:rho_sim}}
\end{figure}

The proof that the state produced by the simulation in Box \ref{box:rho_sim} is close to $\rho_{I_j R_{-I_j} X_{I_j} Y_{I_j} A_{I_j} B_{I_j} E}$ parallels the arguments used to prove \cite[Lemma 6.2]{Bavarian21}. We cannot directly use the results in \cite[Section 6.1]{Bavarian21} because they only focus on the classical distributions of the answers, whereas we also need to take into account Eve's partial state. \\

The following lemma shows that if Alice and Bob share the state $\tilde{\Phi}_{E_A E_B E}^{(r_{-i} , x, y)}$ between them and measure it with the measurements $\hat{A}_{r_{-i} , x}$ and $\hat{B}_{r_{-i} , y}$, then the resulting answers and Eve's state are distributed as in the parallely repeated strategy conditioned on the classical variables $r_{-i} , x, y, a, b$. Its proof follows the proof of \cite[Claim 6.3]{Bavarian21}. 
\begin{lemma}
	Let $\rho_{E}^{(r_{-i} , x, y, a, b)}$ represent the state $\rho$ conditioned on the classical variables $R_{-i} = r_{-i}, X_i =x, Y_i =y, A_i=a$, and $B_i = b$, that is, 
	\begin{align}
		\rho_{E}^{(r_{-i}, x,y,a,b)} &= \Ex_{x_1^n, y_1^n, a_1^n, b_1^n | r_{-i}, x,y,a,b} \left[ \rho_{E}^{(x_1^n, y_1^n, a_1^n, b_1^n)} \right].
	\end{align}
	We have the equality
	\begin{align}
		\tr_{E_A E_B} \left( \hat{A}_{r_{-i} , x} (a) \otimes \hat{B}_{r_{-i} , y}(b) \tilde{\Phi}_{E_A E_B E}^{(r_{-i} , x, y)} \right) = P_{A_i B_i | R_{-i} X_i Y_i} (a,b | r_{-i}, x, y ) \rho_{E}^{(r_{-i} , x, y, a, b)}.
	\end{align}
\end{lemma}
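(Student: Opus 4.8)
The plan is to unfold the definitions on the left-hand side, cancel the measurement operators against the operators that define $\tilde\Phi$ via a pseudo-inverse identity, and then recognise the surviving expression as a conditional probability multiplied by a conditional state, using the product structure of the question distribution. First I would write $\tilde\Phi_{E_A E_B E}^{(r_{-i},x,y)} = \tfrac{1}{p_C}\, G\,\psi\, G$, where $G := \sqrt{A_{\omega_{-i},x}(a_C)}\otimes\sqrt{B_{\omega_{-i},y}(b_C)}\otimes\Id_E$ and $p_C := \norm{\ket{\Phi^{(r_{-i},x,y)}}}^2 = P_{A_C B_C|\Omega_{-i}X_i Y_i}(a_C,b_C|\omega_{-i},x,y)$ by the norm formula of \cite[Proposition 4.9]{Bavarian21}. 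Plugging in the definition of $\hat A_{r_{-i},x}(a)$, the product $\hat A_{r_{-i},x}(a)\,\sqrt{A_{\omega_{-i},x}(a_C)}$ telescopes to $A_{\omega_{-i},x}(a_C)^{-1/2}M_a$, where $M_a := \sum_{a_1^n|a,a_C}A_{\omega_{-i},x}(a_1^n)$, and likewise Bob's side yields $B_{\omega_{-i},y}(b_C)^{-1/2}N_b$ with $N_b := \sum_{b_1^n|b,b_C}B_{\omega_{-i},y}(b_1^n)$. Because $M_a \leq A_{\omega_{-i},x}(a_C)$ and $N_b \leq B_{\omega_{-i},y}(b_C)$, we have $M_a \ll A_{\omega_{-i},x}(a_C)$ and $N_b \ll B_{\omega_{-i},y}(b_C)$, so the pseudo-inverses lose nothing. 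Applying the elementary partial-trace identity $\tr_{E_A E_B}\bigl[(M\otimes\Id)\,\psi\,(L\otimes\Id)\bigr] = \tr_{E_A E_B}\bigl[(LM\otimes\Id)\,\psi\bigr]$ on each of $E_A$ and $E_B$ then cancels the square roots entirely and reduces the left-hand side to $\tfrac{1}{p_C}\,\tr_{E_A E_B}\bigl[(M_a\otimes N_b\otimes\Id_E)\,\psi\bigr]$.

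Next I would expand $M_a$ and $N_b$ through the definition $A_{\omega_{-i},x}(a_1^n) = \sum_{x_1^n}P_{X_1^n|\Omega_{-i}X_i}(x_1^n|\omega_{-i},x)\,A_{x_1^n}(a_1^n)$ and its analogue for Bob, turning the trace into a double sum over $x_1^n,y_1^n$ and over $a_1^n|a,a_C$, $b_1^n|b,b_C$, weighted by $P_{X_1^n|\Omega_{-i}X_i}(x_1^n|\omega_{-i},x)\,P_{Y_1^n|\Omega_{-i}Y_i}(y_1^n|\omega_{-i},y)$. Each inner trace $\tr_{E_A E_B}\bigl[A_{x_1^n}(a_1^n)\otimes B_{y_1^n}(b_1^n)\,\psi\bigr]$ equals $P_{A_1^n B_1^n|X_1^n Y_1^n}(a_1^n,b_1^n|x_1^n,y_1^n)\,\rho_E^{(x_1^n,y_1^n,a_1^n,b_1^n)}$ by the form of $\rho$ in Eq.~\eqref{eq:rho_form2}, so the expression becomes a weighted average of the conditional states $\rho_E^{(x_1^n,y_1^n,a_1^n,b_1^n)}$.

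The decisive step is then to recognise the weights as a genuine conditional distribution. Here I would use that the extension distribution $P_{\Omega X Y}$ makes $X$ and $Y$ conditionally independent given the seed, which gives the factorisation $P_{X_1^n Y_1^n|\Omega_{-i}X_i Y_i}(x_1^n,y_1^n|\omega_{-i},x,y) = P_{X_1^n|\Omega_{-i}X_i}(x_1^n|\omega_{-i},x)\,P_{Y_1^n|\Omega_{-i}Y_i}(y_1^n|\omega_{-i},y)$; combined with the measurement kernel $P_{A_1^n B_1^n|X_1^n Y_1^n}$ this assembles exactly the joint conditional $P_{X_1^n Y_1^n A_1^n B_1^n|\Omega_{-i}X_i Y_i}$. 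Since $\Omega_{-i}$ already carries $X_C,Y_C$ and the summation fixes $A_C=a_C$, $B_C=b_C$, $A_i=a$, $B_i=b$, the conditioning event is precisely $R_{-i}=r_{-i},X_i=x,Y_i=y,A_i=a,B_i=b$. A final Bayes step factors the resulting scalar as $P_{A_C B_C A_i B_i|\Omega_{-i}X_i Y_i}(\cdots) = p_C\cdot P_{A_i B_i|R_{-i}X_i Y_i}(a,b|r_{-i},x,y)$, cancelling the $1/p_C$ prefactor, while the remaining averaged state is by definition $\rho_E^{(r_{-i},x,y,a,b)}$, yielding the claim. I expect the main obstacle to be precisely this conditional-independence factorisation: one must verify that conditioning on $\Omega_{-i},X_i,Y_i$ screens $X_1^n$ from $Y_i$ and $Y_1^n$ from $X_i$, so that the two single-party question kernels multiply to the joint one. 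This is where the specific design of the seed randomness and of the dependency-breaking variable $R_{-i}$ from \cite{Bavarian21} is essential, and the argument mirrors \cite[Claim 6.3]{Bavarian21}, the only new ingredient being that we track Eve's conditional state $\rho_E^{(\cdots)}$ rather than only the classical answer distribution.
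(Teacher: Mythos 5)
Your proposal is correct and takes essentially the same route as the paper's own proof (which follows Claim 6.3 of \cite{Bavarian21}): unfold $\tilde{\Phi}^{(r_{-i},x,y)}$, cancel the pseudo-inverses against the square roots under the partial trace, expand the averaged measurements into sums over $x_1^n, y_1^n$, use the factorisation $P_{X_1^n Y_1^n|\Omega_{-i}X_iY_i} = P_{X_1^n|\Omega_{-i}X_i}P_{Y_1^n|\Omega_{-i}Y_i}$ together with the fact that the answers depend only on the questions, and finish with the Bayes step and the norm formula of Proposition 4.9. The only cosmetic differences are that you invoke the norm formula upfront rather than at the end, and you are somewhat more explicit than the paper about the support-containment argument ($M_a \leq A_{\omega_{-i},x}(a_C)$ implies $M_a \ll A_{\omega_{-i},x}(a_C)$) justifying the pseudo-inverse cancellation.
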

\begin{proof}
	Using the definitions of $\tilde{\Phi}_{E_A E_B E}^{(r_{-i} , x, y)}$, $\hat{A}_{r_{-i} , x}(a)$ and $\hat{B}_{r_{-i} , y}(b)$, we have 
	\begin{align}
		\tr&_{E_A E_B} \left( \hat{A}_{r_{-i} , x} (a) \otimes \hat{B}_{r_{-i} , y}(b) \tilde{\Phi}_{E_A E_B E}^{(r_{-i} , x, y)} \right) \\
		&= \norm{\Phi^{(r_{-i} , x, y)}}^{-2} \tr_{E_A E_B} \big(\hat{A}_{r_{-i} , x} (a) \otimes \hat{B}_{r_{-i} , y}(b)  A_{\omega_{-i} , x} (a_C)^{1/2} \otimes B_{\omega_{-i}, y} (b_C)^{1/2} \Psi \nonumber \\
		&\hspace{8cm} A_{\omega_{-i}, x} (a_C)^{1/2} \otimes B_{\omega_{-i}, y} (b_C)^{1/2} \big)\\
		&= \norm{\Phi^{(r_{-i} , x, y)}}^{-2} \tr_{E_A E_B} \left( \sum_{a_1^n | a_i, a_C} A_{\omega_{-i}, x } (a_1^n) \otimes \sum_{b_1^n | b_i, b_C} B_{\omega_{-i},y } (b_1^n) \Psi \right) \\
		&= \norm{\Phi^{(r_{-i} , x, y)}}^{-2} \sum_{\substack{a_1^n | a_i, a_C \\ b_1^n | b_i, b_C}} \tr_{E_A E_B} \left( \Ex_{x_1^n |\omega_{-i}, x} A_{ x_1^n } (a_1^n) \otimes \Ex_{y_1^n |\omega_{-i}, y} B_{y_1^n} (b_1^n) \Psi \right) \\
		&= \norm{\Phi^{(r_{-i} , x, y)}}^{-2} \Ex_{x_1^n y_1^n | \omega_{-i}, x, y} \sum_{\substack{a_1^n | a_i, a_C \\ b_1^n | b_i, b_C}} \tr_{E_A E_B} \left(A_{ x_1^n } (a_1^n) \otimes B_{y_1^n} (b_1^n) \Psi \right) \label{eq:use_ques_ind_given_omega}\\
		&= \norm{\Phi^{(r_{-i} , x, y)}}^{-2} \Ex_{x_1^n y_1^n | \omega_{-i}, x, y} \sum_{\substack{a_1^n | a_i, a_C \\ b_1^n | b_i, b_C}} P_{A_1^n B_1^n | X_1^n Y_1^n}(a_1^n, b_1^n | x_1^n, y_1^n) \rho_{E}^{(x_1^n, y_1^n, a_1^n, b_1^n)} \\
		&= \norm{\Phi^{(r_{-i} , x, y)}}^{-2} \sum_{x_1^n y_1^n} \sum_{\substack{a_1^n | a_i a_C \\ b_1^n | b_i b_C}} P_{X_1^n Y_1^n | \Omega_{-i} X_i Y_i}(x_1^n, y_1^n|\omega_{-i}, x, y) P_{A_1^n B_1^n | X_1^n Y_1^n}(a_1^n, b_1^n | x_1^n, y_1^n) \rho_{E}^{(x_1^n, y_1^n, a_1^n, b_1^n)} \\
		&= \norm{\Phi^{(r_{-i} , x, y)}}^{-2} \sum_{x_1^n y_1^n} \sum_{\substack{a_1^n | a_i a_C \\ b_1^n | b_i b_C}} P_{X_1^n Y_1^n A_1^n B_1^n|\Omega_{-i} X_i Y_i}(x_1^n, y_1^n, a_1^n, b_1^n|\omega_{-i}, x, y) \rho_{E}^{(x_1^n, y_1^n, a_1^n, b_1^n)} \label{eq:use_ans_omega_markov_ch}\\
		&= \norm{\Phi^{(r_{-i} , x, y)}}^{-2} P_{A_C B_C A_i B_i |\Omega_{-i} X_i Y_i}(a_C, b_C, a_i, b_i |\omega_{-i}, x, y ) \nonumber\\
		& \qquad  \sum_{x_1^n y_1^n} \sum_{\substack{a_1^n | a_i a_C \\ b_1^n | b_i b_C}} P_{X_1^n Y_1^n A_1^n B_1^n |\Omega_{-i} X_i Y_i A_C B_C A_i B_i}(x_1^n, y_1^n, a_1^n, b_1^n|\omega_{-i}, x, y, a_C, b_C, a_i, b_i) \rho_{E}^{(x_1^n, y_1^n, a_1^n, b_1^n)} \\
		&= \norm{\Phi^{(r_{-i} , x, y)}}^{-2} P_{A_C B_C A_i B_i |\Omega_{-i} X_i Y_i}(a_C, b_C, a_i, b_i |\omega_{-i}, x, y ) \Ex_{x_1^n, y_1^n, a_1^n, b_1^n | r_{-i}, x,y,a_i ,b_i} \left[ \rho_{E}^{(x_1^n, y_1^n, a_1^n ,b_1^n)} \right] \\
		&= P_{A_C B_C |\Omega_{-i} X_i Y_i}(a_C, b_C| \omega_{-i}, x, y)^{-1} P_{A_C B_C A_i B_i |\Omega_{-i} X_i Y_i}(a_C, b_C, a_i, b_i |\omega_{-i}, x, y ) \rho_{E}^{(r_{-i} , x, y, a, b)} \\
		& = P_{A_i B_i | R_{-i} X Y} (a,b | r_{-i} x, y ) \rho_{E}^{(r_{-i} , x, y, a, b)}
	\end{align}
	where we have used the fact that $P_{X_1^n Y_1^n | \Omega_{-i}, X_i, Y_i} = P_{X_1^n| \Omega_{-i}, X_i} P_{Y_1^n | \Omega_{-i}, Y_i}$ in Eq. \ref{eq:use_ques_ind_given_omega} and the fact that given the questions the answers do not depend on $\Omega_{-i}$ in Eq. \ref{eq:use_ans_omega_markov_ch}. 
\end{proof}
\noindent Define the auxiliary state $\theta^{(0)}$ as 
\begin{align}
	\theta_{R_{-i} X_i Y_i E_A E_B E}^{(0)} :=& \sum_{r_{-i}, x, y }  P_{R_{-i}}(r_{-i}) P_{XY}(x,y) \puretomixed{r_{-i}, x, y} \otimes \tilde{\Phi}_{E_A E_B E}^{(r_{-i}, x, y)}.
\end{align}
One can view this as the state produced when $r_{-i}$ is sampled according to $P_{R_{-i}}$, $x,y$ are sampled according to the question distribution $P_{XY}$ and the state $\tilde{\Phi}_{E_A E_B E}^{(r_{-i}, x, y)}$ is distributed between Alice, Bob and Eve. We also define $\theta$ to be the state produced when Alice and Bob use the measurements $\hat{A}_{r_{-i}, x}$ and $\hat{B}_{r_{-i}, y}$ to measure $\theta^{(0)}$
\begin{align}
	\theta&_{R_{-i}X_i Y_i A_i B_i E} \nonumber\\
	:=& \sum_{r_{-i},x, y} P_{R_{-i}}(r_{-i}) P_{XY}(x,y) \puretomixed{r_{-i},x, y} \otimes \sum_{a, b} \puretomixed{a, b} \otimes \tr_{E_A E_B} \left( \hat{A}_{r_{-i}, x}(a) \otimes \hat{B}_{r_{-i}, y}(b)\tilde{\Phi}_{E_A E_B E}^{(r_{-i}, x, y)} \right) \label{eq:theta_defn1}\\
	=& \sum_{r_{-i},x, y} P_{R_{-i}}(r_{-i}) P_{XY}(x,y) \puretomixed{r_{-i},x, y} \otimes \sum_{a, b} P_{A_i B_i | R_{-i} X_i Y_i } (a,b |r_{-i}, x, y ) \puretomixed{a, b} \otimes \rho_{E}^{(r_{-i} , x, y, a, b)}. \label{eq:theta_defn2}
\end{align}
We will show that $\theta$ is close to both the simulated state and the real state. Using the triangle inequality, this will imply that the simulated state is also close to the real state. \\

Similar to $\theta^{(0)}$ above, we also define the state $\sigma^{(0)}$ after Step \ref{st:sim_prot_AB_apply_UV} in Box \ref{box:rho_sim} (conditioned on $I= i$) as
\begin{align}
	\sigma_{R_{-i} X_i Y_i E_A E_B E}^{(0)} :=& \sum_{r_{-i}, x, y }  P_{R_{-i}}(r_{-i}) P_{XY}(x,y) \puretomixed{r_{-i}, x, y} \otimes \rndBrk{{U}_{r_{-i}, x}^{E_A} \otimes {V}_{r_{-i}, y}^{E_B}} \tilde{\Phi}_{E_A E_B E}^{(r_{-i}, \perp, \perp)} \rndBrk{{U}_{r_{-i}, x}^{E_A\dagger} \otimes {V}_{r_{-i}, y}^{E_B\dagger}}.
\end{align}
The simulated state $\sigma_{R_{-i} X_i Y_i A_i B_i E}$ for $I=i$ is simply the state obtained when $\sigma^{(0)}$ is measured using the measurements $\hat{A}_{r_{-i} x}$ and $ \hat{B}_{r_{-i} y}$. \\

Using Proposition \ref{prop:unit_steering}, it is straightforward to show that for a random $i \in C^c$, $\theta$ and $\sigma$ are close to each other. The following lemma is essentially a generalisation of \cite[Claim 6.4]{Bavarian21}.
\begin{lemma}
	For the state $\theta$ defined in Eq. \ref{eq:theta_defn1} and the simulated state $\sigma$ (conditioned on $I=i$), we have that
	\begin{align*}
		\Ex_I \Vert \theta_{R_{-i} X_i Y_i A_i B_i E } - \sigma_{R_{-i} X_i Y_i A_i B_i E }\Vert_1 \leq O(\delta^{1/16}/\alpha^3)
	\end{align*}
	where $I$ is uniformly distributed at random in $C^c$. 
	\label{claim:theta-sigmadist}
\end{lemma}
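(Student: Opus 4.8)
The plan is to recognize that both $\theta$ and $\sigma$ (conditioned on $I=i$) arise by applying one and the same measurement channel to the pre-measurement states $\theta^{(0)}$ and $\sigma^{(0)}$, and then to reduce the claim to Proposition~\ref{prop:unit_steering} via data processing. Concretely, I would introduce the channel $\mathcal{M}$ that leaves the classical registers $R_{-i} X_i Y_i$ untouched and, controlled on their value $(r_{-i}, x, y)$, measures $E_A E_B$ with $\{\hat{A}_{r_{-i},x}(a)\otimes \hat{B}_{r_{-i},y}(b)\}_{a,b}$, writing the outcomes into $A_i B_i$ and tracing out $E_A E_B$. By the definition in Eq.~\ref{eq:theta_defn1} and the simulation in Box~\ref{box:rho_sim}, we have $\theta = \mathcal{M}(\theta^{(0)})$ and $\sigma = \mathcal{M}(\sigma^{(0)})$: the only difference between the two is the input fed into this common measurement, since in $\sigma$ the unitaries $U_{r_{-i},x}^{E_A}, V_{r_{-i},y}^{E_B}$ have already been absorbed into the definition of $\sigma^{(0)}$.

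First I would apply the monotonicity of the trace norm under the CPTP map $\mathcal{M}$ (which is a valid channel, being a measurement controlled on preserved classical data followed by a partial trace) to get
\[
	\bigl\Vert\theta_{R_{-i}X_i Y_i A_i B_i E} - \sigma_{R_{-i}X_i Y_i A_i B_i E}\bigr\Vert_1 \leq \bigl\Vert\theta^{(0)}_{R_{-i}X_i Y_i E_A E_B E} - \sigma^{(0)}_{R_{-i}X_i Y_i E_A E_B E}\bigr\Vert_1.
\]
Both pre-measurement states are classical--quantum with the identical classical marginal $P_{R_{-i}}(r_{-i})P_{XY}(x,y)$ on $R_{-i}X_iY_i$, so their trace distance splits over the orthogonal classical branches:
\[
	\bigl\Vert\theta^{(0)} - \sigma^{(0)}\bigr\Vert_1 = \Ex_{R_{-i}}\Ex_{XY}\Bigl\Vert \tilde{\Phi}^{(r_{-i},x,y)} - \bigl(U_{r_{-i},x}^{E_A}\otimes V_{r_{-i},y}^{E_B}\bigr)\tilde{\Phi}^{(r_{-i},\perp,\perp)}\bigl(U_{r_{-i},x}^{E_A\dagger}\otimes V_{r_{-i},y}^{E_B\dagger}\bigr)\Bigr\Vert_1.
\]

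Next I would pass from the trace distance of density operators to the Euclidean distance of the underlying vectors. Both arguments of the norm are pure states, so I would invoke the standard inequality $\Vert \psi - \phi\Vert_1 \leq 2\Vert \ket{\psi} - \ket{\phi}\Vert$ for normalised vectors, which follows from $1 - |\braket{\psi|\phi}|^2 \leq 2(1 - \operatorname{Re}\braket{\psi|\phi}) = \Vert\ket{\psi}-\ket{\phi}\Vert^2$. This bounds each branch by $2\,\bigl\Vert (U^{E_A}_{r_{-i},x}\otimes V^{E_B}_{r_{-i},y}\otimes\Id^E)\ket{\tilde{\Phi}^{(r_{-i},\perp,\perp)}} - \ket{\tilde{\Phi}^{(r_{-i},x,y)}}\bigr\Vert$. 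Taking $\Ex_I$ throughout and substituting, the right-hand side becomes exactly twice the quantity controlled by Proposition~\ref{prop:unit_steering}, giving $\Ex_I\Vert\theta - \sigma\Vert_1 \leq 2\,O(\delta^{1/16}/\alpha^3) = O(\delta^{1/16}/\alpha^3)$, as claimed.

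I do not expect a genuine obstacle here: the quantitative content is entirely delivered by Proposition~\ref{prop:unit_steering}, and the remaining steps are data processing, the cq-decomposition of the trace distance, and a pure-state norm conversion. The single point requiring care is the verification that $\theta$ and $\sigma$ are images of the \emph{same} measurement channel---that in Box~\ref{box:rho_sim} the unitaries act \emph{before} the measurement and can therefore be folded into $\sigma^{(0)}$---since this is precisely what lets us apply data processing and avoid ever having to manipulate the awkward operators $\hat{A}_{r_{-i},x}$ and $\hat{B}_{r_{-i},y}$ directly.
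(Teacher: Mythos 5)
Your proposal is correct and follows essentially the same route as the paper's proof: data processing under the common (classically controlled) measurement channel, the cq-splitting of the trace distance over the classical branches, the pure-state conversion from trace norm to vector norm, and finally Proposition~\ref{prop:unit_steering}. If anything, your conversion constant is the more careful one — the paper quotes $\norm{\psi - \phi}_1 \leq \sqrt{2}\,\norm{\ket{\psi}-\ket{\phi}}$, which fails in general (e.g.\ for real overlap $\braket{\psi|\phi}=1/2$), whereas your bound with constant $2$ is valid and equally sufficient for the $O(\delta^{1/16}/\alpha^3)$ conclusion.
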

\begin{proof}
	We have that 
	\begin{align*}
		\Ex_I \Vert \theta_{R_{-i} X_i Y_i A_i B_i E } - \sigma_{R_{-i} X_i Y_i A_i B_i E }\Vert_1 &\leq \Ex_I \Vert \theta_{R_{-i} X_i Y_i E_A E_B E }^{(0)} - \sigma_{R_{-i} X_i Y_i E_A E_B E }^{(0)} \Vert_1 \\
		& = \Ex_I \Ex_{R_{-i}} \Ex_{XY} \left[ \Vert \tilde{\Phi}_{E_A E_B E}^{(r_{-i}, x, y)} - \left( {U}_{r_{-i}, x}^{E_A} \otimes {V}_{r_{-i}, y}^{E_B} \right) \tilde{\Phi}_{E_A E_B E}^{(r_{-i}, \perp, \perp)} \left( {U}_{r_{-i}, x}^{E_A\dagger} \otimes {V}_{r_{-i}, y}^{E_B\dagger} \right)  \Vert_1 \right] \\
		& \leq O(\delta^{1/16}/\alpha^3)
	\end{align*}
	where the first line follows from the data processing inequality for norms, and the last line follows from Proposition \ref{prop:unit_steering} and the fact that $\norm{\psi - \phi}_1 \leq \sqrt{2}\norm{\ket{\psi} - \ket{\phi}}$ for all pure states $\psi$ and $\phi$. 
\end{proof}
Real state of the protocol $\rho$ conditioned on $I_j = i$ can be written as 
\begin{align*}
	\rho_{R_{-i} X_i Y_i A_i B_i E} &= \sum_{r_{-i} } P_{R_{-i}}(r_{-i}) \puretomixed{r_{-i}} \otimes  \sum_{x, y} P_{X_i Y_i | R_{-i}} (x, y |r_{-i}) \puretomixed{x, y} \\
	& \qquad \qquad \otimes \sum_{a, b} P_{A_i B_i | R_{-i} X_i Y_i} (a,b | r_{-i},x,y) \puretomixed{a, b} \otimes \rho_{E}^{(r_{-i} , x, y, a, b)}. \numberthis
\end{align*}
\begin{lemma}
	The real state of the protocol $\rho$ (conditioned on $I_j = i$) and the auxiliary state $\theta$ satisfy
	\begin{align*}
		\Ex_I \Vert \theta_{R_{-i} X_i Y_i A_i B_i E } - \rho_{R_{-i} X_i Y_i A_i B_i E } \Vert_1 \leq O(\delta_{}^{1/2}/\alpha^2)
	\end{align*}
	where $I$ is uniformly distributed at random in $C^c$. 
	\label{claim:theta-rhodist}
\end{lemma}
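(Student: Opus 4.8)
The plan is to reduce the trace distance to a purely classical distance between question distributions, and then to bound that distance by a mutual information supplied by the parallel-repetition machinery of \cite{Bavarian21}.

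First I would observe that $\theta$ (in the form of Eq.~\ref{eq:theta_defn2}) and $\rho$ conditioned on $I_j = i$ depend on the classical triple $(R_{-i}, X_i, Y_i)$ in exactly the same way: in both states the answers $(a,b)$ are generated from $(r_{-i}, x, y)$ according to $P_{A_i B_i | R_{-i} X_i Y_i}$ and Eve's register is set to $\rho_{E}^{(r_{-i}, x, y, a, b)}$. The only difference is the law of $(R_{-i}, X_i, Y_i)$ itself: in $\theta$ it is the product $P_{R_{-i}} \otimes P_{XY}$, whereas in $\rho$ it is the true joint $P_{R_{-i} X_i Y_i}$. Hence both states are obtained by applying one and the same channel to their respective classical registers, and contractivity of the trace norm under channels (data processing) gives
\[
\norm{\theta_{R_{-i} X_i Y_i A_i B_i E} - \rho_{R_{-i} X_i Y_i A_i B_i E}}_1 \leq \norm{P_{R_{-i}} \otimes P_{XY} - P_{R_{-i} X_i Y_i}}_1 .
\]
Since the single-coordinate marginal $P_{X_i Y_i}$ equals $P_{XY}$ (the questions are i.i.d.) and the $R_{-i}$-marginals agree, the right-hand side is exactly twice the total-variation distance between $P_{R_{-i} X_i Y_i}$ and the product of its marginals, i.e. a measure of how correlated the question at index $i$ is with the dependency-breaking variable $R_{-i}$.

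Next I would convert this correlation into a mutual information by Pinsker's inequality, $\norm{P_{R_{-i}} \otimes P_{XY} - P_{R_{-i} X_i Y_i}}_1 \leq \sqrt{2 \ln 2 \cdot I(X_i Y_i : R_{-i})}$, then take the expectation over $I$ uniform in $C^c$ and pull it inside the square root by concavity (Jensen), obtaining $\Ex_I \norm{\cdots}_1 \leq \sqrt{2 \ln 2 \cdot \Ex_I I(X_I Y_I : R_{-I})}$. It therefore suffices to prove $\Ex_I I(X_I Y_I : R_{-I}) = O(\delta/\alpha^4)$.

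The hard part will be this mutual-information bound, which is where the genuine parallel-repetition content resides. I would first peel off the seed part of $R_{-i} = ((\Omega_j)_{j \notin C \cup \{i\}}, X_C, Y_C, A_C, B_C)$: because the seeds are i.i.d. and $X_i, Y_i$ are functions of $\Omega_i \notin R_{-i}$ alone, $X_i Y_i$ is independent of the seed/question part, so the chain rule collapses the mutual information to $I(X_i Y_i : A_C B_C \mid (\Omega_j)_{j \notin C \cup \{i\}}, X_C, Y_C)$. Intuitively this is small because $A_C, B_C$ carry at most $|C|\log|\mathcal{A}||\mathcal{B}|$ bits, and averaged over the $n-|C|$ indices of $C^c$ this is $O(\delta)$ per coordinate by the choice of $t$, which guarantees $\frac{|C|}{n-|C|}\log|\mathcal{A}||\mathcal{B}| \leq \delta$. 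The subtlety is that we condition on \emph{all} the other seeds, so the naive chain rule runs the wrong way (it lower-bounds, rather than upper-bounds, the sum of conditional mutual informations by the total). This is precisely the obstacle handled by the anchoring analysis of \cite[Section 5]{Bavarian21}: conditioning on the anchor symbol $\perp$, which occurs independently with probability $\alpha$ at each party, decouples a coordinate from the rest, and bounding the true correlation by the anchored one costs the powers of $1/\alpha$ appearing in the final estimate. Feeding $\Ex_I I(X_I Y_I : R_{-I}) = O(\delta/\alpha^4)$ into the Jensen bound then yields $\Ex_I \norm{\cdots}_1 = O(\delta^{1/2}/\alpha^2)$, as claimed.
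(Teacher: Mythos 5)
Your proposal is correct and follows essentially the same route as the paper: both reduce the trace distance to the purely classical distance $\Ex_I \Vert P_{R_{-i}} P_{XY} - P_{R_{-i} X_i Y_i} \Vert_1$ (the paper notes this is in fact an equality, since the conditional states $\rho_{A_i B_i E}^{(r_{-i},x,y)}$ attached to the classical values coincide in $\theta$ and $\rho$), and both then defer the bound $O(\delta^{1/2}/\alpha^2)$ on that classical quantity to the dependency-breaking analysis of \cite{Bavarian21}. The only difference is where you invoke the citation: the paper quotes the $\ell_1$-form directly (the equation after Eq.~88 in \cite{Bavarian21}, with $W_C$ trivial), whereas you re-derive that $\ell_1$ bound from the mutual-information bound via Pinsker and Jensen---which is precisely how \cite{Bavarian21} obtain it internally---so the substantive content deferred is the same.
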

\begin{proof}
	\begin{align*}
		\Ex_I & \left[  \Vert \theta_{R_{-i} X_i Y_i A_i B_i E } - \rho_{R_{-i} X_i Y_i A_i B_i E } \Vert_1 \right]\\ 
		&= \Ex_I \left[ \lr{\Vert \sum_{r_{-i}, x,y} P_{R_{-i}}(r_{-i}) P_{XY} (x, y ) \puretomixed{r_{-i}, x, y} \otimes \rho_{A_i B_i E}^{(r_{-i}, x, y)} - \sum_{r_{-i}, x,y} P_{R_{-i}XY}(r_{-i}, x, y ) \puretomixed{r_{-i}, x, y} \otimes \rho_{A_i B_i E}^{(r_{-i}, x, y)} }\Vert_1 \right] \\
		&= \Ex_I \sqBrk{ \lr{\Vert P_{R_{-i}} P_{XY} - P_{R_{-i} X_i Y_i }}\Vert_1 }\\
		&\leq O(\delta^{1/2}/\alpha^2),
	\end{align*}
	where the last line follows from the equation after Eq. 88 in \cite{Bavarian21} (setting $W_C$ to be the trivial event).
\end{proof}
\begin{lemma}
	The state $\sigma$ produced by the single-round protocol for the 3CHSH$_\perp$ game in Box \ref{box:rho_sim} approximates the state $\rho$. Specifically, 
	\begin{align*}
		\lr{\Vert \rho_{I_j R_{-I_j} X_{I_j} Y_{I_j} A_{I_j} B_{I_j} E } -  \sigma_{I R_{-I} X_{I} Y_{I} A_{I} B_{I} E }}\Vert_1 &= O(\delta^{1/16}/\alpha^3).
	\end{align*}
	\label{lemm:SimulDist}
\end{lemma}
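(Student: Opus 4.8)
The plan is to prove this statement by a routine triangle-inequality argument that chains the two preceding lemmas through the auxiliary state $\theta$. The only conceptual point to get right first is that the index register $I$ (equivalently $I_j$) is classical and uniformly distributed over $C^c$ in both $\rho$ and $\sigma$. For any two classical-quantum states carrying such a classical register, the trace distance of the full states equals the $I$-average of the conditional trace distances, so
\begin{align}
	\norm{\rho_{I R_{-I} X_I Y_I A_I B_I E} - \sigma_{I R_{-I} X_I Y_I A_I B_I E}}_1 = \Ex_I \norm{\rho_{R_{-i} X_i Y_i A_i B_i E} - \sigma_{R_{-i} X_i Y_i A_i B_i E}}_1,
\end{align}
and the same identity holds with $\theta$ in place of $\rho$ or $\sigma$. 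This reduces the claim (whose norm contains $I$) precisely to the $\Ex_I$-form in which Lemmas \ref{claim:theta-sigmadist} and \ref{claim:theta-rhodist} are stated.

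Next I would insert $\theta$ and apply the triangle inequality inside the expectation over $I$:
\begin{align}
	\Ex_I \norm{\rho_{R_{-i} X_i Y_i A_i B_i E} - \sigma_{R_{-i} X_i Y_i A_i B_i E}}_1 \leq \Ex_I \norm{\rho - \theta}_1 + \Ex_I \norm{\theta - \sigma}_1.
\end{align}
By Lemma \ref{claim:theta-rhodist} the first term is $O(\delta^{1/2}/\alpha^2)$, and by Lemma \ref{claim:theta-sigmadist} the second is $O(\delta^{1/16}/\alpha^3)$. Finally, since $\delta \in (0,\tfrac12)$ and $\alpha \in (0,1)$ give $\delta^{1/2} \leq \delta^{1/16}$ and $1/\alpha^2 \leq 1/\alpha^3$, the slower-decaying $\theta$-$\sigma$ term dominates and the two contributions combine into a single $O(\delta^{1/16}/\alpha^3)$ bound, which is exactly the claimed estimate.

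There is essentially no hard step: all the substantive analytic content was already spent in proving Lemmas \ref{claim:theta-sigmadist} (via Proposition \ref{prop:unit_steering}) and \ref{claim:theta-rhodist} (via the question-distribution closeness bound from \cite{Bavarian21}). The only thing requiring a moment's care — and the closest thing to an obstacle — is the bookkeeping in the first paragraph: recognising that placing the classical index $I$ inside the trace norm is identical to averaging the conditional distances over $I$, since this is precisely what licenses plugging in the $\Ex_I$-bounds of the two earlier lemmas without further work.
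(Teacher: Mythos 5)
Your proposal is correct and follows essentially the same route as the paper's own proof: both rewrite the trace distance as an expectation over the uniform classical index $I$, insert the auxiliary state $\theta$ via the triangle inequality, and invoke Lemmas \ref{claim:theta-rhodist} and \ref{claim:theta-sigmadist}, with the $O(\delta^{1/16}/\alpha^3)$ term dominating the $O(\delta^{1/2}/\alpha^2)$ term. Your explicit justification of why the $\theta$--$\sigma$ bound dominates is a small additional detail the paper leaves implicit, but the argument is otherwise identical.
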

\begin{proof}
	Using the fact that $I_j$ in $\rho$ is distributed uniformly at random in $C^c$, and Lemma \ref{claim:theta-sigmadist} and \ref{claim:theta-rhodist}, we have  
	\begin{align*}
		& \lr{\Vert \rho_{I_j R_{-I_j} X_{I_j} Y_{I_j} A_{I_j} B_{I_j} E } -  \sigma_{I R_{-I} X_{I} Y_{I} A_{I} B_{I} E }}\Vert_1 =\Ex_{I} \lr{\Vert \rho_{R_{-i} X_{i} Y_{i} A_{i} B_{i} E } -  \sigma_{R_{-i} X_{i} Y_{i} A_{i} B_{i} E }}\Vert_1 \\
		& \leq \Ex_I \Vert \theta_{R_{-i} X_i Y_i A_i B_i E } - \rho_{R_{-i} X_i Y_i A_i B_i E } \Vert_1 + \Ex_I \Vert \theta_{R_{-i} X_i Y_i A_i B_i E } - \sigma_{R_{-i} X_i Y_i A_i B_i E }\Vert_1 \\
		& \leq O(\delta^{1/16}/\alpha^3)
	\end{align*}
	where $\Ex_I$ represents expectation over $I$, which is sampled uniformly at random from $C^c$. 
\end{proof}

\noindent \textbf{\normalsize Step 3: Bounding $H(A_{I_j} | E I_j R_{-I_j} X_{I_j} )_{\rho}$ using the single-round entropy bound}\\

Using Lemma \ref{lemm:SimulDist} above, we show that the single-round strategy in Box \ref{box:rho_sim} has a large winning probability if $\rho$ has a large average winning probability. This will be helpful for bounding the entropy of the answers of the game given Eve's register while using the single-round bound in Lemma \ref{lemm:3CHSH_anch_entropy}. \\

Let $W_i$ denote the indicator random variable for the event that game $G_i$ is won by the players. Since, $W_i$ is a deterministic function of $X_i, Y_i, A_i, B_i$ we have
\begin{align}
\vert \Pr_{\rho}( W_{I_j}) -  \Pr_\sigma(W_{I})\vert &\leq \lr{\Vert \rho_{I_j R_{-I_j} X_{I_j} Y_{I_j} A_{I_j} B_{I_j} W_{I_j} E } - \sigma_{I R_{-I} X_{I} Y_{I} A_{I} B_{I} W_I E } }\Vert_1 \leq O \rndBrk{ \delta_{}^{1/16}/\alpha^3}.
\label{eq:winning_prob_reln}
\end{align}
For the state, $\rho$, $I_j$ is an index chosen uniformly at random in $C^c$. Using Eq. \ref{eq:HighAvgWinningProb}, we have
\begin{align*}
	\Pr_{\rho}(W_{I_j}) & \geq (1-\epsilon) \frac{\omega_{\text{th}} n - |C|}{n-|C|} \\
	&\geq (1-\epsilon) \rndBrk{\omega_{\text{th}} - \delta} \numberthis
\end{align*}
Combining this with Eq. \ref{eq:winning_prob_reln}, we get that 
\begin{align}
	\Pr_\sigma(W_{I}) \geq \omega_{\text{th}} - O \rndBrk{ \epsilon + \delta_{}^{1/16}/\alpha^3}
\end{align}
We now bound $H(A_{I_j} | E I_j R_{-I_j} X_{I_j} )_{\rho}$ using the entropy bound for the single-round 3CHSH$_\perp$ game. To use Lemma \ref{lemm:3CHSH_anch_entropy}, we first use the Alicki-Fannes-Winter (AFW) continuity bound for the conditional entropy \cite{Alicki04,Winter16} to lower bound the entropy on $\rho$ with the corresponding entropy on the simulated state $\sigma$:
\begin{align*}
	H(A_{I_j} | E I_j R_{-I_j} X_{I_j} )_{\rho} &\geq H(A_{I} | E I R_{-I} X_{I} )_{\sigma} - g(O(\delta^{1/16}/\alpha^3)) \\
	&\geq F\rndBrk{g_{\alpha, \nu}(\Pr_{\sigma} [W])} - g(O(\delta^{1/16}/\alpha^3)) \\
	&\geq F\rndBrk{g_{\alpha, \nu}( \omega_{\text{th}} - O(\epsilon + \delta^{1/16}/\alpha^3) )} - O\rndBrk{ \frac{\delta^{1/16}}{\alpha^3 }\log \frac{1}{\delta}} \\
	&\geq  F\rndBrk{g_{\alpha, \nu}( \omega_{\text{th}})} -  O((F \circ g_{\alpha, \nu})'(\omega_{\text{th}}) (\epsilon + \delta^{1/16}/\alpha^3)- O\rndBrk{ \frac{\delta^{1/16}}{\alpha^3 }\log \frac{1}{\delta}} \\
	&\geq  F\rndBrk{g_{\alpha, \nu}( \omega_{\text{th}})} - O\rndBrk{ \frac{\epsilon}{\nu}+ \frac{\delta^{1/16}}{\alpha^3 \nu }\log \frac{1}{\delta}}
\end{align*}
where $g(x) = 2x \log(|A|)+ (x+1) \log(x+1) - x \log(x) = O\rndBrk{x \log \frac{|A|}{x}}$. We use the single-round entropy bound (Lemma \ref{lemm:3CHSH_anch_entropy}) for the 3CHSH$_\perp$ game in the second line. This is valid because in the simulation of $\sigma$, Eve classically distributes $I$ and $R_{-I}$ to the players, so we can assume that she also holds copies of these registers. $(F \circ g_{\alpha, \nu})'$ represents the derivative of the function $F \circ g_{\alpha, \nu}$ with respect to $\omega$. We have used the fact that $F \circ g_{\alpha, \nu}$ is convex and increasing, and the bound $(F \circ g_{\alpha, \nu})'(\omega_{\text{th}}) \leq O(\frac{1}{\nu})$ above.\\

Finally, if we plug in the bound above in Eq. \ref{eq:chain_rule_for_privacy_amplification}, we get 
\begin{align}
	H(A_J | E \Omega_1^n J)_{\rho} &\geq \sum_{k=1}^t \Ex_{i_1^{k-1}} \left[ H(A_{I_k} | E I_k R_{-I_k} X_{I_k} )_{\rho} \right] \nonumber \\
	&\geq t \cdot \rndBrk{F\rndBrk{g_{\alpha, \nu} ( \omega_{\text{th}})} - O\rndBrk{ \frac{\epsilon}{\nu}+ \frac{\delta^{1/16}}{\alpha^3 \nu }\log \frac{1}{\delta}}}
\end{align}

\subsection{Bounding information reconciliation cost}
\begin{sloppypar}
To take the information reconciliation cost into account, we consider the quantity $H(A_J | B_J J)_{\rho_{\text{honest}}}$ for the honest protocol (with no Eve) under noisy conditions. We model the noise between Alice and Bob as a depolarising channel with noise parameter $2Q$. If Alice sends one half of a perfect Bell state to Bob for each round in the honest protocol, then their shared state is $\eta_{E_A E_B}^{\otimes n}$ where 
\begin{align}
	\eta_{E_A E_B} = (1-2Q) \ket{\Phi^+}\bra{\Phi^+} + 2Q \tau_{E_A E_B} 
\end{align}
where $\ket{\Phi^+}$ is a Bell state and $\tau_{E_A E_B}$ is the completely mixed state. The noise parameter, $Q$ can be measured using the qubit error rate through the equation \cite[Section 4.2.4]{Friedman20}
\begin{align}
	\Pr[A \neq B | X, Y = (0,2)] = Q.
\end{align}
We will assume that $Q \leq 0.1$. With probability at least $(1-\alpha)^2 (1-\nu) (1-Q)$, Alice and Bob's answers are equal in every round. Since the shared state is i.i.d, we can simply bound the entropy as
\begin{align*}
	H(A_J | B_J J)_{\rho_{\text{honest}}} &= t \cdot H(A | B)_{\eta} \\
	& \leq t \cdot h(1 - (1-\alpha)^2 (1-\nu) (1-Q)) \\
	& \leq t \cdot h(2\alpha + \nu + Q)
\end{align*}
where the last line is true for $\alpha, \nu, Q \in (0, 0.1)$.

% In fact, we can simply extend this to a bound on the one-shot cost for information reconciliation, which is given by the following entropy:
% \begin{align*}
%     H_0^{\epsilon_{ir}}({A}_J | {B}_J J)_{\rho_{|\lnot F}} &\leq H_{\max}^{\epsilon_{ir}/2}({A}_J | {B}_J J)_{\rho_{|\lnot F}} + O(1)\\
% 	&\leq H_{\max}^{\epsilon_{ir}/2}({A}_J | {B}_J J)_{\rho} + O(1)\\
% 	&\leq \sum_{j \in J} H({A}_j | {B}_j)_{\eta} + O(\sqrt{t}) \\
%     &\leq t \cdot h(2\alpha + \nu + Q) + O(\sqrt{t})
% \end{align*}
% for some constant $\epsilon_{ir} =\Omega(1) \in (0,1)$. Here we use \cite[Lemma 18]{tssr10} for the first line, \cite[Lemma B.5]{Dupuis20} in the second line and the asymptotic equipartition property \cite{Tomamichel09} in the third line (see \cite[Section 11.3.3]{Friedman20} for a more detailed explanation).

\subsection{Bounding key length}
We can now estimate the length of the key produced during the parallel DIQKD protocol as:
\begin{align*}
	H(A_J &| E J \Omega_1^n )_{\rho} - H(A_J | B_J J)_{\rho_{\text{honest}}} \\
	&\geq t \rndBrk{ F\rndBrk{g_{\alpha, \nu} ( \omega_{\text{th}})} - O\rndBrk{ \frac{\epsilon}{\nu}+ \frac{\delta^{1/16}}{\alpha^3 \nu }\log \frac{1}{\delta}} - h(2\alpha + \nu + Q) }
\end{align*}
where $t = \frac{\delta}{\log |\mathcal{A}| |\mathcal{B}| + \delta} n = \Omega(n)$. Note that we can choose $\omega_{\text{th}}$ close to the maximum probability of winning so that $F\circ g_{\alpha, \nu} ( \omega_{\text{th}})$ is at least some constant, say $0.5$. $\alpha, \nu$ and $Q$ can be chosen close to zero so that the information reconciliation term is small. Finally, with these parameters fixed, we can choose $\delta$ small enough so that the key length is $\Omega(n)$ and the key rate is positive. This completes the proxy von Neumann entropy based security proof. 

\end{sloppypar}

\section{Security proof}
\label{sec:one_shot_pf}

The von Neumann entropy based security argument illuminates the fundamental mechanism behind entropy accumulation in parallel DIQKD. It demonstrates that Alice's answers $A_J$ are random with respect to the adversary Eve because, for every $k \in [t]$, one can approximate the questions and answers in the partial state $\rho_{I_j R_{-I_j} X_{I_j} Y_{I_j} A_{I_j} B_{I_j} E}$ as the output of a single-round strategy for the 3CHSH$_\perp$ game, which has a high winning probability. Now, we need to port the lower bound on the von Neumann entropy of Alice's answers to a smooth min-entropy lower bound. To chart the course forward, we can compare this situation with sequential DIQKD, where Alice's answers in each round directly result from a single-round strategy. This direct relationship ensures their randomness according to the single-round entropy bound. Consequently, when the average winning probability is high, entropy accumulates across all rounds. The entropy accumulation theorem (EAT) \cite{Dupuis20} (also see \cite{Metger22}) serves as the primary information-theoretic tool for demonstrating this accumulation in the security proof for sequential DIQKD \cite{Friedman18,Arnon_Friedman19}. \\

EAT provides a method to bound the smooth min-entropy of a state generated by an $n$-step sequential process. This process, illustrated in Fig. \ref{fig:EAT_setup}, produces secret information $A_k$ and side information $B_k$ at each step. In the context of sequential DIQKD, the channels $\cM_k$ represent the operations performed by Alice and Bob to generate questions (constituting the register $B_k$) and answers (forming the register $A_k$) for the 3CHSH game. EAT lower bounds the smooth min-entropy of the secret information (answers) with respect to the side information (questions) and Eve's initial register, provided the side information meets certain independence criteria. For the sequential DIQKD protocol, this is essentially the smooth min-entropy bound required for proving security. \\

\begin{figure}
    \centering
    \includegraphics[scale=0.1]{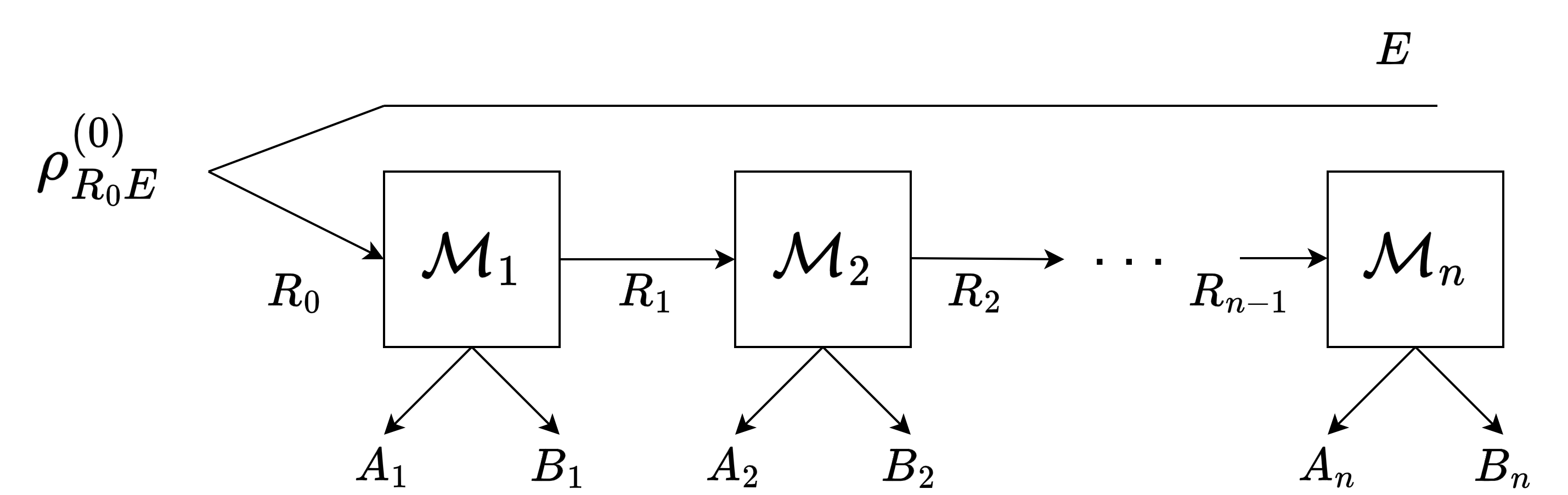}
    \caption{The setting for entropy accumulation. The channels $\cM_k$ are sequentially applied to the registers $R_{k-1}$ and produce the secret information $A_k$ and the side information $B_k$.}
    \label{fig:EAT_setup}
  \end{figure}

To prove security for the parallel DIQKD protocol, we require a tool analogous to EAT. Drawing insights from the von Neumann entropy based argument, we can identify key features necessary for such a tool. The most crucial distinction between the setting in Fig. \ref{fig:EAT_setup} and the state $\rho$ produced in parallel DIQKD is the absence of a sequential or structured process generating the answers in $\rho$. Instead, Alice and Bob produce answers \emph{in parallel} using a single measurement channel. The primary source of structure, and the reason for expecting entropy accumulation, lies in our ability to approximate Alice and Bob's questions and answers on round $I_k$ along with Eve's information using the output of a single-round strategy. Consequently, the entropic tool we employ should not rely on a sequential structure for the state. Rather, it should demonstrate entropy accumulation when the partial state of the given state can be approximated as the output of a suitable channel. To address this need, we developed an \emph{unstructured} approximate entropy accumulation theorem in our companion paper \cite{Marwah24_uni_ch}.\\

This theorem shows that for any state $\rho_{A_1^n B_1^n E}$ whose partial states $\rho_{A_1^k B_1^k E}$ can be $\epsilon$-approximated as the output of channels $\cM_k$, which sample the side information $B_k$ independent of the previous registers $A_1^{k-1} B_1^{k-1} E$, we can recover a statement similar to EAT. Crucially, Theorem \ref{th:approx_EAT_wtest} does not require $\rho_{A_1^n B_1^n E}$ to be produced by a structured process like Fig. \ref{fig:EAT_setup}. In essence, the state may even be produced by a completely parallel process, as is the case in our scenario. \\

Incorporating approximations into EAT itself presents significant challenges. We previously developed another approximate entropy accumulation theorem in \cite[Theorem 5.1]{Marwah24_approx_ch}. Unlike the unstructured approximate EAT, this earlier theorem applies only to sequential processes of the form shown in Fig. \ref{fig:EAT_setup}. Moreover, it considers much stricter channel approximations to the channels $\cM_k$ forming this process. These limitations prevent us from using \cite[Theorem 5.1]{Marwah24_approx_ch} in our security proof. The unstructured approximate EAT's relaxed assumptions come with an inherent limitation: the smoothing parameter must depend on the approximation parameter, unlike in \cite[Theorem 5.1]{Marwah24_approx_ch} where it could be chosen arbitrarily. Nevertheless, in our security proof for parallel DIQKD, this limitation is not significant since we can choose the approximation parameter using the protocol parameter $\delta$. \\

In the following section, we state the unstructured approximate EAT. Subsequently, we apply this theorem to prove security for parallel DIQKD in the one-shot regime.

\subsection{Unstructured approximate entropy accumulation theorem}

In this section, we present the unstructured approximate entropy accumulation theorem (EAT) \cite{Marwah24_uni_ch}, so that we can use it to prove security for parallel DIQKD. Before stating the theorem, we will introduce the fundamental concepts and notations it employs. \\

We begin by defining the testing channels $\mathcal{T}_k$. These channels measure the outputs $A_k$ and $B_k$ of the state $\rho$ and output a result $X_k$ based on these measurements. Concretely, for every $k \in [n]$ the channel $\mathcal{T}_k : A_k B_k \rightarrow A_k B_k X_k$ is of the form
\begin{align}
    \mathcal{T}_k (\omega_{A_k B_k}) = \sum_{a, b} \Pi_{A_k}^{(a)} \otimes \Pi_{B_k}^{(b)} \omega_{A_k B_k} \Pi_{A_k}^{(a)} \otimes \Pi_{B_k}^{(b)} \otimes \ket{x(a,b)}\bra{x(a,b)}_{X_k}
    \label{eq:test_maps}
\end{align}
where $\{\Pi_{A_k}^{(a)}\}_a$ and $\{\Pi_{B_k}^{(b)}\}_b$ are orthogonal projectors and $x(\cdot)$ is some deterministic function which uses the measurements $a$ and $b$ to create the output register $X_k$.\\

% Using these orthogonal projectors, we further define the orthogonal projectors on the registers $A_1^k B_1^k$
% \begin{align}
%   \Pi_{A_1^k B_1^k}^{(a_1^k, b_1^k)} := \bigotimes_{i=1}^k \rndBrk{\Pi_{A_i}^{(a_i)} \otimes \Pi_{B_i}^{(b_i)}}
% \end{align}
% for every $a_1^k, b_1^k$. Together these for a measurement on the registers $A_1^k B_1^k$.\\

Next we define the min-tradeoff functions. We consider $n$ channels $\cN_k$ for $k \in [n]$ and assume that the registers $X_k$, which consist of the result of the testing channels $\mathcal{T}_k$ are isomorphic, that is, $X_k \equiv X$ for all $k$. Let $\mathbb{P}$ be the set of probability distributions over the alphabet of $X$ registers. Let $R$ be any register isomorphic to $R_{k}$. For a probability $q \in \mathbb{P}$ and a channel $\cN_{k} : R_{k} \rightarrow A_k B_k$, we define the set 
\begin{align}
    \Sigma (q | \cN_{k}) := \curlyBrk{\nu_{A_k B_k X_k R} = \mathcal{T}_k \circ \cN_{k}(\omega_{R_{k}R}): \text{ for a state } \omega_{R_{k-1}R} \text{ such that }\nu_{X_k}= q}.
    \label{eq:Sigma_set_defn}
\end{align}

\begin{definition}
    A function $f: \mathbb{P} \rightarrow \mathbb{R}$ is called a min-tradeoff function for the channels $\{\cN_k \}_{k=1}^n$ if for every $k$ and $q \in \mathbb{P}$, it satisfies
    \begin{align}
        f(q) \leq \inf_{\nu \in \Sigma (q| \cN_{k})} H(A_k | B_k R)_{\nu}.
    \end{align}
\end{definition}
We now state the unstructured approximate EAT with testing. 
\begin{theorem}
  \label{th:approx_EAT_wtest}
    Let $\epsilon \in (0,1)$ and for every $k \in [n]$, the registers $A_k$ and $B_k$ be such that $|A_k| = |A|$ and $|B_k| = |B|$. Suppose, the state $\rho_{A_1^n B_1^n X_1^n E}$ is such that 
    \begin{enumerate}
      \item The registers $X_1^n$ can be recreated by applying the testing maps to the registers $A_1^n$ and $B_1^n$, that is, 
      \begin{align}
        \rho_{A_1^n B_1^n X_1^n E} = \mathcal{T}_n \circ \cdots \circ \mathcal{T}_1 (\rho_{A_1^n B_1^n E})
      \end{align}
      \item For every $k \in [n]$, there exists a channel $\cM_k: R_{k} \rightarrow A_k B_k$ and a state $\theta_{B_k}^{(k)}$ such that 
      \begin{align}
        &\tr_{X_k}\circ \mathcal{T}_k \circ \cM_k = \cM_k \\
        &\tr_{A_k}\circ \cM_k (X_{R_k}) = \tr(X) \theta_{B_k}^{(k)} \quad \text{ for all operators } X_{R_k}
      \end{align}
      and a state $\tilde{\rho}^{(k, 0)}_{A_1^{k-1} B_1^{k-1} E R_{k}}$ for which 
      \begin{align}
        \frac{1}{2}\norm{\rho_{A_1^k B_1^k E} - \cM_k(\tilde{\rho}^{(k, 0)}_{A_1^{k-1} B_1^{k-1} E R_{k}})}_1\leq \epsilon.
      \end{align} 
    \end{enumerate}
    Then, for an event $\Omega$ defined using $X_1^n$, an affine min-tradeoff function $f$ for $\{\cM_k\}_{k=1}^n$ such that for every $x_1^n \in \Omega$, $f(\text{freq}(x_1^n)) \geq h$, we have
    \begin{align}
      H_{\min}^{\mu' + \epsilon'}(A_1^n|B_1^n E)_{\rho_{|\Omega}} &\geq n(h  - V(3\sqrt{\mu} + 4\epsilon) -g_2 (2\epsilon)) \nonumber \\
      & \quad - \frac{V}{\sqrt{\mu}}\rndBrk{2\log\frac{1}{P_{\rho}(\Omega) - \mu} + \frac{2}{\mu^2} + 2 \log \frac{1}{1- \mu^2} + g_1(\epsilon', \mu')}
    \end{align}
    where 
    \begin{align}
      &\mu := \rndBrk{\frac{8 \sqrt{\epsilon} + 2\epsilon}{1- \epsilon^2/(|A| |B|)^2} \log\frac{|A| |B|}{\epsilon}}^{1/3}\\
      &\mu' := 2\sqrt{\frac{\mu}{P_{\rho}(\Omega)}} \\
      &V := \log \rndBrk{1 + 2|A|} + 2\lceil \norm{\nabla f}_\infty \rceil\\
      &g_1(x, y):= - \log(1- \sqrt{1-x^2}) - \log (1-y^2) \\
      &g_2(x) :=  x\log \frac{1}{x} + (1+x)\log (1+x)
    \end{align}
    and $\epsilon' \in (0,1)$ such that $\mu' + \epsilon' <1$.
\end{theorem}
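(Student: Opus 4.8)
The plan is to run the entire argument in terms of the sandwiched R\'enyi conditional entropy $\tilde{H}_\alpha^\uparrow(A|B)_\rho$ for an order $\alpha$ slightly larger than $1$, converting back to smooth min-entropy only at the very end. First I would reduce the target quantity $H_{\min}^{\mu'+\epsilon'}(A_1^n|B_1^n E)_{\rho_{|\Omega}}$ to a R\'enyi entropy evaluated on the \emph{unconditioned} state $\rho$. This uses two standard steps: a smooth-min-to-R\'enyi conversion, which accounts for the $\frac{V}{\sqrt{\mu}}\,g_1(\epsilon',\mu')$ contribution (the function $g_1$ in the statement is precisely the cost of this conversion together with the accompanying smoothing), and an event-conditioning inequality of the form $\tilde{H}_\alpha^\uparrow(A_1^n|B_1^n E)_{\rho_{|\Omega}} \geq \tilde{H}_\alpha^\uparrow(A_1^n|B_1^n E)_{\rho} - \frac{c}{\alpha-1}\log\frac{1}{P_\rho(\Omega)}$, which produces the $\frac{V}{\sqrt{\mu}}\cdot 2\log\frac{1}{P_\rho(\Omega)-\mu}$ term once combined with the $\mu$-smoothing introduced below.

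The heart of the argument is an \emph{approximate chain rule} that peels the rounds off one at a time, from $k=n$ down to $k=1$. At round $k$ I would invoke hypothesis (2): replace the partial state $\rho_{A_1^k B_1^k E}$ by its channel approximation $\cM_k(\tilde{\rho}^{(k,0)}_{A_1^{k-1}B_1^{k-1}E R_k})$, which is $\epsilon$-close in trace distance. The crucial structural input is that $\cM_k$ outputs $B_k$ as the fixed state $\theta^{(k)}_{B_k}$, \emph{independent} of $A_1^{k-1}B_1^{k-1}E R_k$; this independence is the analogue of the Markov condition in the EAT and is exactly what licenses a R\'enyi chain rule $\tilde{H}_\alpha^\uparrow(A_1^k|B_1^k E) \geq \tilde{H}_\alpha^\uparrow(A_1^{k-1}|B_1^{k-1}E) + \inf_{\omega} \tilde{H}_\alpha^\uparrow(A_k|B_k R)_{\cM_k(\omega)}$ up to an approximation correction. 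A first-order Taylor expansion of the single-round R\'enyi term around $\alpha=1$ then replaces it by the von Neumann quantity $H(A_k|B_k R)_{\cM_k(\omega)}$ at an $(\alpha-1)$-proportional cost governed by $V = \log(1+2|A|) + 2\lceil\norm{\nabla f}_\infty\rceil$. Iterating over $k$ yields a sum of single-round von Neumann entropies minus $n$ times a per-round error: the continuity cost of the $\epsilon$-replacement contributes the $n\,g_2(2\epsilon)$ term and the Taylor step the $nV$-weighted correction.

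It remains to lower bound $\sum_k \inf_\omega H(A_k|B_k R)_{\cM_k(\omega)}$ by $nh$. Since $f$ is an affine min-tradeoff function for $\{\cM_k\}$ and $f(\mathrm{freq}(x_1^n)) \geq h$ for every $x_1^n \in \Omega$, I would push the infimum through the affine $f$ and use the observed frequency of $X_1^n$ on $\Omega$ to conclude the sum is at least $nh$, the deviation of the per-round statistics from the frequency being controlled by $\norm{\nabla f}_\infty$ and hence absorbed into $V$. Collecting the R\'enyi--von Neumann correction, the per-round $g_2$ continuity term, and the min-tradeoff bound yields the leading rate $n\big(h - V(3\sqrt{\mu}+4\epsilon) - g_2(2\epsilon)\big)$.

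The main obstacle is the approximate chain rule together with the discontinuity of R\'enyi entropy near $\alpha=1$: trace-distance closeness of the partial states does not translate directly into closeness of $\tilde{H}_\alpha^\uparrow$, so each replacement must be accompanied by a smoothing whose radius is tied to $\epsilon$. This is the origin of the auxiliary parameter $\mu = \big(\frac{8\sqrt{\epsilon}+2\epsilon}{1-\epsilon^2/(|A||B|)^2}\log\frac{|A||B|}{\epsilon}\big)^{1/3}$, where the $\sqrt{\epsilon}$ arises from Fuchs--van de Graaf (trace distance to purified distance), the $\log\frac{|A||B|}{\epsilon}$ from a dimension-dependent bound on the entropy of the smoothing correction, and the exponent $1/3$ from optimizing the free smoothing radius. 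Unlike the sequential exact EAT, one cannot separately tune $\alpha-1 \sim 1/\sqrt{n}$ to obtain a $\sqrt{n}$-correction; the smoothing forces $\alpha-1 = \Theta(\sqrt{\mu})$, which simultaneously fixes the linear rate penalty $\Theta(V\sqrt{\mu})$ and the additive constant $\frac{V}{\sqrt{\mu}}(\cdots)$, and makes the final smoothing parameter $\mu' = 2\sqrt{\mu/P_\rho(\Omega)}$ depend on the approximation. Balancing these terms and choosing $\alpha$ accordingly gives the stated bound.
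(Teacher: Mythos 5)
First, a point of fact: this paper does not prove Theorem \ref{th:approx_EAT_wtest}. The theorem is quoted from the companion work \cite{Marwah24_uni_ch} and is only \emph{applied} here, so there is no in-paper proof to compare your proposal against. Judged on its own terms, your outline has a genuine gap at its centre.

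What you describe is the standard EAT template (reduction to sandwiched R\'enyi entropy, a per-round chain rule licensed by the independence of $B_k$, a conversion to von Neumann entropy with $V$-weighted cost, the min-tradeoff step, and parameter balancing) together with an ``approximate chain rule'' whose error you propose to pay for by ``a smoothing whose radius is tied to $\epsilon$''. That approximate chain rule is precisely what has to be proven, and the two natural ways of obtaining it --- both of which your sketch gestures at --- fail. (i) If each of the $n$ peeling steps is paid for by smoothing, the telescoping breaks: the R\'enyi entropy peeled at step $k$ is then evaluated on a different smoothed state than at step $k+1$, and relating the two costs continuity again; alternatively, a single smoothed state that repairs all $n$ approximation hypotheses at once needs a radius growing with $n$. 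Either way the final smoothing parameter becomes $n$-dependent, whereas $\mu'+\epsilon'$ in the theorem depends only on $\epsilon$, $|A|$, $|B|$ and $P_\rho(\Omega)$. (ii) If each step is instead paid for by a continuity bound on $\tilde{H}_\alpha$, no bound of the needed strength exists: the approximation hypothesis concerns the full $k$-round state $\rho_{A_1^k B_1^k E}$, so any continuity argument carries $k\log|A|$ (or the unbounded dimension of $E$), not the single-round $\log(|A||B|/\epsilon)$ appearing in $\mu$; moreover R\'enyi continuity costs do not vanish as $\epsilon \to 0$, destroying the rate. Notably, the one level at which a dimension-independent per-round continuity bound does exist --- von Neumann entropy via AFW, which is visibly where the $g_2$ form comes from --- only reproduces the proxy argument of Section \ref{sec:vN_proof} of this very paper, i.e.\ a bound on $H(A_1^n|B_1^n E)_\rho$, which the authors stress is not a security proof because it says nothing about smooth min-entropy. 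The shape of the actual bound (approximation cost appearing in the \emph{rate} as $nV(3\sqrt{\mu}+4\epsilon)+n\,g_2(2\epsilon)$, plus an $n$-independent constant containing $2/\mu^2$, with $\mu = \tilde{O}(\epsilon^{1/6})$) is symptomatic of a different mechanism: converting each trace-distance approximation into an operator-inequality / $D_{\max}$-type statement (substate-theorem flavour), so that per-round losses are amortized into the entropy rate rather than accumulated in the smoothing parameter. That conversion, and the entropic inequality needed to combine it with the R\'enyi chain rule, is the actual content of \cite{Marwah24_uni_ch} and is absent from your proposal.
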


\subsection{Using unstructured approximate EAT for the smooth min-entropy bound}

In order to prove security for Protocol \ref{prot:par_diqkd_prot}, we need to bound the smooth min-entropy of Alice's raw key with respect to Eve's information, that is, we need to lower bound 
\begin{align}
    H^{\epsilon}_{\min}(A_J | E \Omega_1^n J T_1^t X_S A_S)_{\rho_{|\lnot F}}.
    \label{eq:one_shot_target}
\end{align}
Recall that $J = \{I_1, I_2, \cdots, I_t\}$. For simplicity, for a sequence of variables $V_1^n$ (like $X_1^n$, $A_1^n$ etc.) and $j \in [t]$, we define the variable 
\begin{align}
    \hat{V}_j := V_{I_j}.
\end{align}
For example the entropy above in Eq. \ref{eq:one_shot_target} can be written as $H^{\epsilon}_{\min}(\hat{A}_1^t | E \Omega_1^n I_1^t T_1^t X_S A_S)_{\rho_{|\lnot F}}$. We will begin by using the unstructured approximate EAT to prove that  
\begin{align}
    H^{\epsilon}_{\min}(\hat{A}_1^t \hat{B}_1^t | \hat{X}_1^t \hat{Y}_1^t T_1^t I_1^t E \Omega_{J^c})_{\rho_{|\lnot F}}\geq \Omega(t).
\end{align} 
Through a minor modification of the simulation in Box \ref{box:rho_sim}, we will show that the states $\rho_{\hat{A}_1^j \hat{B}_1^j \hat{X}_1^j \hat{Y}_1^j T_1^j I_1^t E \Omega_{J^c}}$ can be approximated by states, which are produced by playing a single-round 3CHSH$_\perp$ game. This modified simulation is presented in Box \ref{box:one_shot_sigma}. Let $\sigma^{(j)}$ denote the state produced by this simulation. 

\begin{figure}
    \begin{mdframed}
        \textbf{Single-round protocol for producing $\sigma^{(j)}_{\hat{A}_1^j \hat{B}_1^j \hat{X}_1^j \hat{Y}_1^j I_1^t T_1^j E \Omega_{J^c}}$:}
        \begin{enumerate}
            \item Eve randomly samples $J := \{I_1, I_2, \cdots, I_{t} \} \subseteq [n]$ of size $t$. Define $C:=I_1^{j-1}$. She also samples $T_1^{j-1}$ i.i.d from $\{0,1\}$ with $\Pr(T_i=1) = \gamma$.
            \item Eve randomly samples the random variable $R_{-I_j}= (A_C, B_C, X_C, Y_C, \Omega_{(C\cup\{i\})^c}) = (\hat{A}_1^{j-1}, \hat{B}_1^{j-1}, \hat{X}_1^{j-1}, \hat{Y}_1^{j-1}, \Omega_{J^c}, \hat{\Omega}_{j+1}^t)$ according to $P_{R_{-I_j}}$ depending on the value of $I_j$.
            \item Eve distributes $C,I_j, R_{-I_j}$ to both Alice and Bob. Call their copies $C^{(A)},I_j^{(A)}, R_{-I_j}^{(A)}$ and $C^{(B)},I_j^{(B)}, R_{-I_j}^{(B)}$.  
            \item Let $I_j = i $ and $R_{-I_j} = r_{-i}$. Eve distributes the registers $E_A$ and $E_B$ of the state $\ket{\tilde{\Phi}^{(r_{-i}, \perp, \perp)}}_{E_A E_B E}$ between Alice and Bob.
            \item Alice and Bob play the 3CHSH$_{\perp}$ (EAT map):
            \begin{enumerate}
                \item The questions $x$ and $y$ are sampled according to $P_{XY}$ and sent to Alice and Bob. 
                \item Let $U_{r_{-i}, x}^{E_A}$ and $V_{r_{-i}, y}^{E_B}$ be the unitaries defined by Proposition \ref{prop:unit_steering}. Alice applies the unitary $U_{r_{-i}, x}$ to her register $E_A$. Similarly, Bob applies $V_{r_{-i}, y}$ to his register $E_B$. 
                \item Alice and Bob measure their registers with $\{\hat{A}_{r_{-i}, x}(a)\}_a$ and $\{\hat{B}_{r_{-i},y}(b) \}_b$ to generate their answers for the game. 
                \item Alice randomly samples $T_{j} \in \{0,1\}$ with $\Pr(T_j=1) = \gamma$.
            \end{enumerate}
            \label{st:AB_play_CHSH}
            \item Eve traces over $\hat{\Omega}_{j+1}^t$ in her $R_{-I_j}$.
            \label{st:last_step}
        \end{enumerate}
    \end{mdframed}
    {\captionof{SimBox}{}
    \label{box:one_shot_sigma}}
\end{figure}

\begin{claim}
    For every $j \in [t]$, the state $\sigma^{(j)}$ produced through Box \ref{box:one_shot_sigma} satisfies
    \begin{align}
        \norm{\rho_{\hat{A}_1^j \hat{B}_1^j \hat{X}_1^j \hat{Y}_1^j T_1^j I_1^t E \Omega_{J^c}} - \sigma^{(j)}_{\hat{A}_1^j \hat{B}_1^j \hat{X}_1^j T_1^j \hat{Y}_1^j I_1^t E \Omega_{J^c}}}_1 \leq O\rndBrk{\frac{\delta^{1/16}}{\alpha^3}}.
    \end{align}
    \label{claim:rho_sigma_approx}
\end{claim}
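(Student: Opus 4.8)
The plan is to reduce Claim \ref{claim:rho_sigma_approx} to Lemma \ref{lemm:SimulDist}, which already performs the substantive approximation (inherited from Proposition \ref{prop:unit_steering}); what remains is careful bookkeeping of the additional registers that appear in the one-shot statement. The key observation is that, writing $C := I_1^{j-1}$, the first $j-1$ rounds of data are precisely the components of the dependency-breaking variable $R_{-I_j}$: we have $(\hat{A}_1^{j-1}, \hat{B}_1^{j-1}, \hat{X}_1^{j-1}, \hat{Y}_1^{j-1}) = (A_C, B_C, X_C, Y_C)$, while the seed-part of $R_{-I_j}$, namely $(\Omega_k)_{k \in [n] \setminus (C \cup \{I_j\})}$, splits --- once $I_1^t$ is known --- into $\Omega_{J^c}$ together with $\hat{\Omega}_{j+1}^t$. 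Thus every register on the left-hand side of the claim is obtained from the larger collection $(R_{-I_j}, \hat{A}_j, \hat{B}_j, \hat{X}_j, \hat{Y}_j, I_1^t, T_1^j, E)$ by discarding $\hat{\Omega}_{j+1}^t$.

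First I would set up a data-processing step. Discarding $\hat{\Omega}_{j+1}^t$ is a partial trace that is classically controlled on $I_{j+1}^t$ (which tells us which seed entries to remove), hence a fixed CPTP map $\Lambda$ that acts identically on $\rho$ and on $\sigma^{(j)}$. Moreover the unitaries $U_{r_{-i},x}, V_{r_{-i},y}$ and measurements $\hat{A}_{r_{-i},x}, \hat{B}_{r_{-i},y}$ applied to $\tilde{\Phi}^{(r_{-i},\perp,\perp)}$ in Steps 4--5 of Box \ref{box:one_shot_sigma} are exactly those of Steps 2--3 of Box \ref{box:rho_sim}, so conditioned on $(I_j, R_{-I_j})$ the state produced on $(\hat{A}_j, \hat{B}_j, \hat{X}_j, \hat{Y}_j, E)$ is the simulated state $\sigma$ of Lemma \ref{lemm:SimulDist}. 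By monotonicity of the trace norm under $\Lambda$ it therefore suffices to bound the distance of the pre-trace states on $(R_{-I_j}, \hat{A}_j, \hat{B}_j, \hat{X}_j, \hat{Y}_j, I_1^t, T_1^j, E)$.

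Next I would strip off the purely classical bookkeeping registers $T_1^j$ and $I_{j+1}^t$. The testing bits $T_1^j$ are sampled i.i.d.\ with $\Pr(T_i=1)=\gamma$ in both the real protocol and Box \ref{box:one_shot_sigma}, independently of everything else, so they constitute an identical tensor factor. Conditioned on $C$ and $I_j$, the tuple $I_{j+1}^t$ is a uniform ordered tuple of distinct indices from $[n]\setminus(C\cup\{I_j\})$ in both states, and is independent of $R_{-I_j}$ and of the quantum registers, since Alice and Bob's measured answers depend on the questions $X_1^n$ alone and never on which future indices are placed in $J$. Both $\rho$ and $\sigma^{(j)}$ are block-diagonal in the classical register $I_1^t$; decomposing over $C$ and $I_j$ and dropping these identical tensor factors within each block leaves exactly
\begin{align*}
  \Ex_{C}\ \Ex_{I_j}\ \norm{\rho_{R_{-i} X_i Y_i A_i B_i E} - \sigma_{R_{-i} X_i Y_i A_i B_i E}}_1,
\end{align*}
where $i$ denotes the value of $I_j$, which for each fixed $C$ is uniform in $C^c$.

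For each fixed $C$, the inner expectation is precisely the quantity that Lemma \ref{lemm:SimulDist} bounds by $O(\delta^{1/16}/\alpha^3)$, and that bound is uniform in $C$. Averaging over $C = I_1^{j-1}$ and chaining with the data-processing inequality of the first step yields the claim. I expect the only genuine difficulty to be organizational rather than analytic: one must verify that $\Lambda$ really acts identically on the two states, and that $T_1^j$ and $I_{j+1}^t$ have matching conditional laws that are independent of the quantum part, so that the whole problem collapses onto the quantity already controlled in Lemma \ref{lemm:SimulDist}. No fresh appeal to Proposition \ref{prop:unit_steering} is needed.
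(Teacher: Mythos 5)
Your proposal is correct and follows essentially the same route as the paper's proof: both reduce the claim to Lemma \ref{lemm:SimulDist} for each fixed $C = I_1^{j-1}$ (noting the bound is uniform in $C$), average over $C$, and dispose of $I_{j+1}^t$, $\hat{\Omega}_{j+1}^t$, and $T_1^j$ via data processing and the fact that their conditional distributions match in $\rho$ and $\sigma^{(j)}$ and are independent of the remaining registers. The only difference is organizational --- the paper builds up from the Lemma \ref{lemm:SimulDist} state by adding $I_{j+1}^t$ through a sampling channel $\Phi$ and then tracing out $\hat{\Omega}_{j+1}^t$, whereas you peel registers off the larger pre-trace state --- which is an equivalent argument.
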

\begin{proof}
    In the first step of Box \ref{box:one_shot_sigma}, condition on the choice of the indices $I_1^{j-1} = i_1^{j-1}$. This fixes the subset $C \subseteq J$. For this choice of $C$, consider the state $\sigma^{(j)}_{I_j R_{-I_j} X_{I_j} Y_{I_j} A_{I_j} B_{I_j} E | C= i_1^{j-1}}$ produced by the process in Box \ref{box:one_shot_sigma} before Step \ref{st:last_step} (ignoring $I_{j+1}^t$ and $T_1^j$ for now). Observe that this state is identical to the state produced using the simulation in Box \ref{box:rho_sim} for the fixed subset $C$. Using Lemma \ref{lemm:SimulDist}, we have that 
    \begin{align}
        \norm{\rho_{I_j R_{-I_j} X_{I_j} Y_{I_j} A_{I_j} B_{I_j} E | C= i_1^{j-1}} - \sigma^{(j)}_{I_j R_{-I_j} X_{I_j} Y_{I_j} A_{I_j} B_{I_j} E | C= i_1^{j-1}}}_1 \leq O\rndBrk{\frac{\delta^{1/16}}{\alpha^3}}.
    \end{align}
    Now, in the real protocol the subset $C = I_1^{j-1}$ is distributed as a random $j-1$ size subset of $[n]$. This is exactly how $I_1^{j-1}$ are also distributed in the simulation protocol in Box \ref{box:one_shot_sigma}. Therefore, we have 
    \begin{align}
        \norm{\rho_{I_1^j R_{-I_j} X_{I_j} Y_{I_j} A_{I_j} B_{I_j} E } - \sigma^{(j)}_{I_1^j R_{-I_j} X_{I_j} Y_{I_j} A_{I_j} B_{I_j} E }}_1 \leq O\rndBrk{\frac{\delta^{1/16}}{\alpha^3}}.
    \end{align} 
    We also need to incorporate ${I}_{j+1}^t$ in the inequality above. Let $\Phi : I_1^j \rightarrow I_1^t$ be the map which simply reads $I_1^j$ and randomly samples $I_j^{t}$ outside this set in $[n]$. This channel produces the correct distribution on $I_1^t$ when applied to both $\rho_{I_1^j}$ and $\sigma^{(j)}_{I_1^j}$. Further, observe that both $\rho$ and $\sigma^{(j)}$ satisfy the Markov chains $I_{j+1}^t \leftrightarrow I_1^j \leftrightarrow R_{-I_j} X_{I_j} Y_{I_j} A_{I_j} B_{I_j} E$. Therefore, using the data processing inequality for the trace norm we have 
    \begin{align}
        &\norm{\rho_{I_1^t R_{-I_j} X_{I_j} Y_{I_j} A_{I_j} B_{I_j} E } - \sigma^{(j)}_{I_1^t R_{-I_j} X_{I_j} Y_{I_j} A_{I_j} B_{I_j} E }}_1 \nonumber\\
        &\qquad\qquad= \norm{\Phi\rndBrk{\rho_{I_1^j R_{-I_j} X_{I_j} Y_{I_j} A_{I_j} B_{I_j} E }} - \Phi\rndBrk{\sigma^{(j)}_{I_1^j R_{-I_j} X_{I_j} Y_{I_j} A_{I_j} B_{I_j} E }} }_1 \nonumber \\
        &\qquad\qquad\leq O\rndBrk{\frac{\delta^{1/16}}{\alpha^3}}.
    \end{align}
    We can now also consider the last step in Box \ref{box:one_shot_sigma} by tracing over $\hat{\Omega}_{j+1}^t$. This gives us 
    \begin{align}
        \norm{\rho_{I_1^t \hat{X}_1^j \hat{Y}_1^j \hat{A}_1^j \hat{B}_1^j \Omega_{J^c} E } - \sigma^{(j)}_{I_1^t \hat{X}_1^j \hat{Y}_1^j \hat{A}_1^j \hat{B}_1^j \Omega_{J^c} E}}_1 \leq O\rndBrk{\frac{\delta^{1/16}}{\alpha^3}}.
    \end{align}
    Lastly, we also need to account for $T_1^j$. These are sampled independently in both $\rho$ and $\sigma^{(j)}$ with the same distribution. Hence, we can account for these simply and the claim follows from the bound above.  
\end{proof}

We will now apply Theorem \ref{th:approx_EAT_wtest} to the state $\rho$. The approximation chain given by the simulation procedure in Box \ref{box:one_shot_sigma} is useful for this purpose. For $j \in [t]$, let's define 
\begin{align}
    \tilde{E}^{(j)}_A :=  E_A I_j^{(A)} C^{(A)} R_{-I_j}^{(A)} \text{ and } \tilde{E}^{(j)}_B := E_B I_j^{(B)} C^{(B)} R_{-I_j}^{(B)} 
\end{align}
in Box \ref{box:one_shot_sigma} to be the entirety of Alice and Bob's registers before they start playing the 3CHSH$_\perp$ game in Step \ref{st:AB_play_CHSH} of Box \ref{box:one_shot_sigma}. Let $\mathcal{M}_j: \tilde{E}^{(j)}_A \tilde{E}^{(j)}_B  \rightarrow \hat{X}_j \hat{Y}_j T_j \hat{A}_j \hat{B}_j$ be the channel applied by Alice and Bob in Step \ref{st:AB_play_CHSH}. This channel samples the questions $\hat{X}_j$ and $\hat{Y}_j$ for the 3CHSH$_\perp$ game, applies Alice and Bob's measurements on $\tilde{E}^{(j)}_A$ and $\tilde{E}^{(j)}_B$ to produce their answers and also randomly samples $T_j$. Note that the Steps \ref{st:AB_play_CHSH} and \ref{st:last_step} commute, so we can change their order without affecting Claim \ref{claim:rho_sigma_approx}. For the following assume that Eve performs Step \ref{st:last_step} before Alice and Bob play the game (Step \ref{st:AB_play_CHSH}). Finally, let the state ${\sigma}^{(j,0)}_{\hat{A}_{1}^{j-1} \hat{B}_{1}^{j-1} \hat{X}_{1}^{j-1} \hat{Y}_{1}^{j-1} T_1^{j-1} I_1^t \tilde{E}^{(j)}_A \tilde{E}^{(j)}_B E \Omega_{J^c}}$ be the state between Alice, Bob and Eve before the 3CHSH$_\perp$ game is played. With these definitions, we have that for every $j \in [t]$
\begin{align}
    \sigma^{(j)}_{\hat{A}_1^j \hat{B}_1^j \hat{X}_1^j \hat{Y}_1^j T_1^j I_1^t E \Omega_{J^c}} = \mathcal{M}_j\rndBrk{\sigma^{(j,0)}_{\hat{A}_{1}^{j-1} \hat{B}_{1}^{j-1} \hat{X}_{1}^{j-1} \hat{Y}_{1}^{j-1} T_1^{j-1} I_1^t \tilde{E}^{(j)}_A \tilde{E}^{(j)}_B E \Omega_{J^c}}}.
\end{align}
Using Claim \ref{claim:rho_sigma_approx} we further have that for every $j \in [t]$
\begin{align}
    \frac{1}{2}\norm{\rho_{\hat{A}_1^j \hat{B}_1^j \hat{X}_1^j \hat{Y}_1^j T_1^j I_1^t E \Omega_{J^c}} - \sigma^{(j)}_{\hat{A}_1^j \hat{B}_1^j \hat{X}_1^j \hat{Y}_1^j T_1^j I_1^t E \Omega_{J^c}}}_1 \leq \epsilon
\end{align}
for $\epsilon = O\rndBrk{\frac{\delta^{1/16}}{\alpha^3}}$. We choose the testing maps $\mathcal{T}_j: \hat{X}_j \hat{Y}_j T_j \hat{A}_j \hat{B}_j \rightarrow \hat{X}_j \hat{Y}_j T_j \hat{A}_j \hat{B}_j W_j $ to be the classical channel which outputs the register $W_j$ according to 
\begin{align}
    W_j = \begin{cases}
    V(\hat{X}_j, \hat{Y}_j,\hat{A}_j, \hat{B}_j) & \text{ if } T_j = 1 \\
    \perp & \text{ if } T_j = 0
    \end{cases}
\end{align}
We make the following choices in order to use Theorem \ref{th:approx_EAT_wtest},
\begin{align}
    & A_j \leftarrow \hat{A}_j \hat{B}_j \\
    & B_j \leftarrow \hat{X}_j \hat{Y}_j T_j \\
    & X_j \leftarrow W_j \\
    & E \leftarrow E I_1^t \Omega_{J^c}.
\end{align}
Note that the side information $\hat{X}_j \hat{Y}_j T_j $ is sampled independent of the input state by the channel $\mathcal{M}_j$. Lastly, we will condition on the event $\lnot F$, which is equivalent to the event $\text{freq}(W_1^t)(1) \geq \gamma \omega_{\text{th}}$. \\

We will use Lemma \ref{lemm:3CHSH_anch_entropy} to define an affine min-tradeoff function for the channels $\{\cM_j \}_{j=1}^t$. Fix $j \in [t]$. For an arbitrary state $\nu^{(0)}_{\tilde{E}^{(j)}_A \tilde{E}^{(j)}_B R}$, the state $\nu_{\hat{X}_j \hat{Y}_j T_j \hat{A}_j \hat{B}_j W_j \tilde{R}} = \mathcal{T}_j \circ \cM_j \rndBrk{\nu^{(0)}}$ satisfies 
\begin{align}
    \nu_{T_j} = \gamma \ket{1}\bra{1} + (1-\gamma)\ket{0}\bra{0}.
\end{align}
Further, we can write the output state $\nu$ on register $W_j$ as 
\begin{align}
    \nu_{W_j} = \gamma (1-\omega) \ket{0}\bra{0} + \gamma \omega \ket{1}\bra{1} + (1- \gamma) \ket{\perp}\bra{\perp}
    \label{eq:arb_st_W_dist}
\end{align}
for some $\omega \in [0,1]$. Observe that $\omega$ here is actually the winning probability of the 3CHSH$_{\perp}$ game for the strategy given by the initial state $\nu^{(0)}$ and the measurements that comprise the channel $\cM_j$. Let's define the function $F_{\alpha, \nu}$ piecewise as
\begin{align}
    F_{\alpha, \nu}(x) &:= \begin{cases}
        (1- \alpha)F\rndBrk{g_{\alpha, \nu}\rndBrk{\frac{x}{\gamma}}} & \text{ if } \frac{x}{\gamma} \in \sqBrk{\omega_{\min}, \omega_{\max}} \\
        0 & \text{ else }
        \end{cases}
\end{align}
where $\omega_{\min} := 1- \frac{(1-\alpha)^2 \nu}{4}$, $\omega_{\max} := 1- \frac{2-\sqrt{2}}{4} (1-\alpha)^2 \nu $, and the functions $F$ and $g_{\alpha, \nu}$ are defined in Eq. \ref{eq:3CHSH_bd_phi}. Using Lemma \ref{lemm:3CHSH_anch_entropy} for the state $\nu$, we have
\begin{align}
    H(AB| E XY)_{\nu} \geq F_{\alpha, \nu}\rndBrk{\nu_{W_j}(1)}
\end{align}
We can transform this lower bound into an affine function by using the fact that $F_{\alpha, \nu}$ is convex and that a convex function lies above its slope. We consider the slope at point $\gamma \omega_{\text{th}}$. For $0\leq x \leq \gamma \omega_{\max}$, we have 
\begin{align}
    F_{\alpha, \nu}(x) &\geq F_{\alpha, \nu}(\gamma \omega_{\text{th}}) + F'_{\alpha, \nu}(\gamma \omega_{\text{th}}) \rndBrk{x- \gamma \omega_{\text{th}}} \\
    &=: \bar{F}_{\alpha, \nu} \rndBrk{x}.
\end{align}
We have defined the right-hand side above as the linear function $\bar{F}_{\alpha, \nu}$. Note that Eq. \ref{eq:arb_st_W_dist} implies that irrespective of the input state $\nu^{(0)}$, $\nu_{W_j}(\perp) = 1- \gamma$. So, for any probability distribution $q_{W_k}$ such that $q(\perp) \neq 1 - \gamma$, we have that $\Sigma (q | \cM_j) = \emptyset$ (Eq. \ref{eq:Sigma_set_defn}). This is also true for any distribution $q$ such that $q(1)/\gamma >  \omega_{\max}$, which is the maximum winning probability for the 3CHSH$_\perp$ game. As a result, for the distribution $q_{W_j}$, the function $q \mapsto \bar{F}_{\alpha, \nu}(q(1))$ is a min-tradeoff function for the channels $\{\cM_j \}_{j=1}^t$. \\

One can easily evaluate the derivative of $F_{\alpha, \nu}$:
\begin{align}
    F'_{\alpha, \nu}(\omega_{\text{th}}) = \frac{1}{\nu \gamma (1-\alpha)}F'(g_{\alpha, \nu}(\omega_{\text{th}})) \leq O\rndBrk{\frac{1}{\nu \gamma}}
\end{align}
This gives us that $\norm{\nabla \bar{F}_{\alpha, \nu}}_\infty \leq O\rndBrk{\frac{1}{\nu \gamma}}$, which is required while applying Theorem \ref{th:approx_EAT_wtest}. Finally, applying the approximate EAT to the state $\rho$ as described above, we get that if $\Pr(\lnot F) \geq 2 \mu$\footnote{If $\Pr(\lnot F) < 2 \mu$, then one can show that the secrecy condition for QKD is satisfied for a security parameter greater than $2\mu$ \cite[Section 4.3]{Portmann14}}, then
\begin{align}
    H_{\min}^{\mu' + \epsilon'}(\hat{A}_1^t \hat{B}_1^t |\hat{X}_1^t \hat{Y}_1^t T_1^t I_1^t \Omega_{J^c} E)_{\rho_{|\lnot F}} &\geq t \rndBrk{(1-\alpha)F\rndBrk{g_{\alpha, \nu} (\omega_{\text{th}})}  - O\rndBrk{\frac{\sqrt{\mu}}{\nu \gamma}}} - O(1)
    \label{eq:Hmin_AB_given_E}
  \end{align}
where 
\begin{align}
    & \delta \in (0,1) \\
    & t = \frac{\delta}{\log |\mathcal{A}||\mathcal{B}| + \delta} n \\
    & \epsilon = O\rndBrk{\frac{\delta^{1/16}}{\alpha^3}} \\
    &\mu := \rndBrk{4(4 \sqrt{\epsilon} + \epsilon) \log\frac{|\mathcal{A}| |\mathcal{B}||\mathcal{X}||\mathcal{Y}|||\mathcal{T}|}{\epsilon}}^{1/3} = O\rndBrk{\epsilon^{1/6} \rndBrk{\log \frac{1}{\epsilon}}^{1/3}}\\
    &\mu' := 2\sqrt{\frac{\mu}{\Pr_{\rho}(\lnot F)}} = O\rndBrk{\epsilon^{1/12} \rndBrk{\log \frac{1}{\epsilon}}^{1/6}} 
    \label{eq:Hmin_AB_fin_bd}
\end{align}
and $\epsilon' = \Omega(1) \in (0,1)$ such that $\mu' + \epsilon' <1$. \\

\begin{sloppypar}
In Appendix \ref{sec:suppl_args}, using standard techniques, we derive a bound for the entropy of Alice's raw key with respect to Eve's information, $H_{\min}^{O(\mu')}(\hat{A}_1^t | T_1^t I_1^t \Omega_{1}^n E X_S A_S)_{\rho_{|\lnot F}}$ from the bound above. We use the fact that if the 3CHSH$_\perp$ games are won with a high probability, then Alice's and Bob's answers have a small relative distance. Consequently, Alice's answers alone possess high entropy relative to Eve's information. We account for the information disclosed during the testing procedure, $X_S$ and $A_S$, by applying a straightforward dimension bound. Through these arguments, we prove that:
\begin{align}
    H^{\mu' + 8\epsilon' }_{\min}&(A_J | J T_1^t \Omega_{1}^n E X_S A_S)_{\rho_{|\lnot F}} - \text{leak}_{\text{IR}} \nonumber \\
    &\geq t \rndBrk{(1-\alpha)F\rndBrk{g_{\alpha, \nu} (\omega_{\text{th}})}  - O\rndBrk{\frac{\sqrt{\mu}}{\nu \gamma}} - 2 h(2(\nu + \alpha + \delta_1)) - 2 \log|\mathcal{A}| \gamma} - O(1)
    \label{eq:ultim_bd0}
\end{align}
where $\text{leak}_{\text{IR}}$ is the information reconciliation cost, $\delta_1 \in (0,1)$ is a small parameter and the rest of the parameters are defined and chosen as in Eq. \ref{eq:Hmin_AB_fin_bd}.\\

We can make the lower bound above at least $\succsim t/2$ by choosing $\omega_{\text{th}}$ such that $g_{\alpha, \nu}(\omega_{\text{th}}) \approx 0.84$ (the winning probability of the CHSH games). This choice results in $F(g_{\alpha, \nu}(\omega_{\text{th}})) \geq 3/4$. Note that this winning probability threshold is sufficiently below the maximal winning probability to allow for a robust implementation that accounts for experimental imperfections. We can further choose $\alpha = \gamma = \nu = \delta_1 = 10^{-3}$. With these choices, the combined terms $\alpha F(g_{\alpha, \nu}(\omega_{\text{th}})) + 2h(2(\nu + \alpha + \delta)) + 2 \gamma \log|\mathcal{A}|$ sum together less than $0.1$. By choosing $\delta$ small enough, the term $O\rndBrk{\frac{\sqrt{\mu}}{\nu \gamma}}$ can be made smaller than $0.1$ and the security parameter $\mu'$ can also be made small enough. Once we fix a value of $\delta$, we obtain a fixed small key rate ($\geq \frac{\delta}{2(\log |\mathcal{A}| |\mathcal{B}| + \delta)}$ in this example) for a fixed security parameter given by the corresponding value of $\mu'$. It is important to note that one limitation of our approach is the interdependence of the key rate and the security parameter. We have essentially proven that the protocol is $\tilde{O}(\epsilon_s)$ secure for a rate $\Omega(\epsilon_s^{192})$. Consequently, if one wishes to make the security parameter smaller, they must also reduce the key rate. 
\end{sloppypar}

\section{Conclusion}

In this work, we have developed an alternative approach to proving the security of parallel DIQKD. This approach is based on using results from the work on parallel repetition for anchored games to break the multi-round parallel strategy used by Alice and Bob's quantum devices into multiple approximately single-round strategies. These strategies can then be analysed using an appropriate counterpart of EAT called the unstructured approximate EAT. Our approach yields a more information theoretic and general proof compared to those presented by \cite{Jain20} and \cite{Vidick17}. \\

However, a major drawback of our technique is that it couples the key rate of the protocol with the security parameter. The most important and immediate problem arising from this work is whether this dependence can be broken. We expect that it might be possible to use stronger properties of testing in the unstructured approximate EAT to break this relation. We believe that efforts to enhance the performance of our entropic method to match and potentially surpass that of \cite{Jain20} and \cite{Vidick17} could reveal deeper insights about approximation chains. Furthermore, it would also be interesting to explore whether the results and techniques employed in this paper have implications in the broader context of parallel repetition.\\

At a broader level, whether we can improve the rates of parallel DIQKD to match those of its sequential counterpart still remains open. Given the complexity inherent to general non-local game and their parallel repetitions, it seems unlikely that one can match the performance of sequential DIQKD without relying on specific properties of the underlying games. A promising direction could be to select or engineer these games such that the parallel key rates approach those of sequential DIQKD. For example, certain games like XOR game satisfy strong parallel repetition \cite{Cleve08}, that is, the strategy of playing each game in the parallel repetition independently using the optimal strategy yields the optimal winning probability. It's conceivable that for a similar class of games, Eve's optimal attack in parallel DIQKD might be constrained to be i.i.d. or close to it. Finally, we note that proving the security of parallel device-independent randomness expansion still remains an interesting open problem.

\appendix

\addcontentsline{toc}{section}{APPENDICES}
\section*{APPENDICES}

\section{Single-round results}
\label{sec:app_single_box}

\begin{lemma}
	Suppose Alice and Bob use a strategy, which wins the 3CHSH$_{\perp}$ game with probability $\omega$. Then, their answers $A$ and $B$ for the game satisfy
	\begin{align}
		\Pr[A=B]\geq \omega-  \nu - 2\alpha.
	\end{align}
	\label{lemm:AEqBProb}
\end{lemma}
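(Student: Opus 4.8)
The plan is to isolate the single question configuration on which winning the 3CHSH$_\perp$ game is \emph{equivalent} to equality of the two answers, namely the pair $(x,y)=(0,2)$ with neither question anchored, and then argue that this configuration occurs with probability almost as large as the winning probability.

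First I would set up three events on the probability space of a single play of the game: let $W$ denote the event that Alice and Bob win, let $N$ denote the event that neither of the two questions was replaced by the anchor symbol $\perp$ during the anchoring transform, and let $Q$ denote the event that the question pair sampled from $P_{XY}$ \emph{before} anchoring equals $(0,2)$. By the definition of the 3CHSH game, on the configuration $(x,y)=(0,2)$ the winning predicate is exactly $a=b$. Hence on the intersection $W \cap N \cap Q$ the questions actually presented to the players are $(0,2)$ and they win, which forces $A=B$. That is, $W \cap N \cap Q \subseteq \{A=B\}$, so it suffices to lower bound $\Pr[W\cap N \cap Q]$.

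Next I would apply a union bound on the complements:
\begin{align*}
    \Pr[A=B] \geq \Pr[W \cap N \cap Q] \geq 1 - \Pr[W^c] - \Pr[N^c] - \Pr[Q^c] = \omega - \Pr[N^c] - \Pr[Q^c].
\end{align*}
Since the anchoring transform replaces each question independently with probability $\alpha$, one has $\Pr[N^c] = 1-(1-\alpha)^2 \leq 2\alpha$; and since $P_{XY}$ assigns probability $1-\nu$ to $(0,2)$, independently of the anchoring step, $\Pr[Q^c] = \nu$. Substituting these gives $\Pr[A=B] \geq \omega - \nu - 2\alpha$, which is the claim.

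There is essentially no hard technical step here; the only thing requiring care is the bookkeeping of the events. In particular I must ensure that $Q$ refers to the pre-anchoring sampled question (so that $\Pr[Q^c]=\nu$ and $Q$ is independent of $N$), and that the equivalence between winning and $A=B$ is invoked only on the non-anchored $(0,2)$ configuration, where it genuinely holds — on anchored rounds or on the CHSH question pairs winning carries no information about whether $A=B$, which is precisely why those possibilities are absorbed into the slack terms $2\alpha$ and $\nu$.
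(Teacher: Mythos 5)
Your proof is correct and follows essentially the same route as the paper: both arguments isolate the non-anchored $(0,2)$ question configuration, where winning the 3CHSH$_\perp$ game forces $A=B$, and absorb the anchored rounds and the CHSH question pairs into the slack terms $2\alpha$ and $\nu$. The only cosmetic difference is that you package this as a union bound over the three events $W$, $N$, $Q$, whereas the paper writes the exact decomposition of $\omega$ over question types and bounds each term (which is marginally tighter, giving $(1-\alpha)^2\nu$ in place of your $\nu$, though both yield the stated inequality).
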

\begin{proof}
Let $P_{XY}$ represent the probability distribution of the questions for the 3CHSH$_{\perp}$ game. It follows that the winning probability of the 3CHSH$_{\perp}$ game satisfies:
\begin{align*}
   \omega &= \sum_{x ,y} P_{XY}(xy) \sum_{a,b:V(x,y,a,b)=1} P_{AB|XY} (ab|xy) \\
   &= (1-\alpha)^2 (1-\nu) \Pr [A=B | X=0, Y=2] + (1-\alpha)^2 \nu \Pr ({A \oplus B =XY|X,Y \in \{0,1\}} )\\
   & \quad+ (1-(1-\alpha)^2) \\
   & \leq (1-\alpha)^2 (1-\nu) \Pr [A=B | X=0, Y=2] + (1-\alpha)^2 \nu + (1-(1-\alpha)^2) \\
   & \leq \Pr [A=B] + (1-\alpha)^2 \nu + (1-(1-\alpha)^2) \\
   & \leq \Pr [A=B] + \nu + 2\alpha.
\end{align*}
\end{proof}

\begin{lemma}[{\cite[Lemma 5.3]{Friedman20}}]
	Suppose that the quantum strategy for the 2CHSH game starting with $\rho_{E_A E_B E}^{(0)}$ wins the 2CHSH game with probability $\omega \in \lr{[ \frac{3}{4}, \frac{2+\sqrt{2} }{4}}]$. Let $A$ be Alice's answer produced according to the given strategy. Let $\rho_{XAE}$ be the state produced once Alice applies her measurements to $\rho^{(0)}$. Then, for question $x\in \{0,1\}$ we have
	\begin{align}
		H(A| E)_{\rho^{(x)}} \geq F(\omega)
	\end{align}
    where $F(\omega) = 1 - h \rndBrk{\frac{1}{2} + \frac{1}{2} \sqrt{3 - 16\, \omega\left(1 - \omega\right)}}$ and $\rho^{(x)}_{AE}$ are the $A$ and $E$ registers of $\rho_{XAE}$ for $X=x$.
	\label{lemm:2CHSH_entropy}
 \end{lemma}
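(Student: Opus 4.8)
The plan is to prove Lemma~\ref{lemm:2CHSH_entropy} by the device-independent analysis of Pironio et al.~\cite{Pironio09}, reducing the statement to a qubit computation via Jordan's lemma. Since handing Eve a purification can only increase her information, I would first assume $\ket{\psi}_{E_A E_B E}$ is pure with $E$ purifying the reduced state on $E_A E_B$, and fix the question $x = 0$ (the setting $x=1$ is handled identically, as the bound depends only on the overall winning probability). It is convenient to pass to the CHSH value: writing $S = 8\omega - 4 \in [2, 2\sqrt{2}]$ for $\omega \in [3/4, (2+\sqrt{2})/4]$, one checks the identity $(S/2)^2 - 1 = 3 - 16\,\omega(1-\omega)$, so that $F(\omega) = 1 - h(\tfrac12 + \tfrac12\sqrt{(S/2)^2 - 1})$ is precisely the Pironio bound written as a function of $S$.

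Next I would reduce to qubits. Alice's measurements for questions $0$ and $1$ are $\pm 1$-valued observables $A_0, A_1$ on $E_A$, and Jordan's lemma simultaneously block-diagonalizes them into blocks of dimension at most two; the same applies to Bob's $B_0, B_1$ on $E_B$. The joint block index is a classical random variable $\Lambda$, which I would give to Eve. Because conditioning on an additional classical register only decreases entropy, $H(A|E) \geq H(A|E\Lambda) = \Ex_\lambda\, H(A|E)_{\rho_\lambda}$, where $\rho_\lambda$ is the two-qubit-plus-purification state in block $\lambda$ and the per-block winning probabilities $\omega_\lambda$ average to $\omega$.

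The core of the argument is then the per-block estimate $H(A|E)_{\rho_\lambda} \geq F(\omega_\lambda)$. Within a single block the two qubits together with Eve's purifying system form a low-dimensional state, and one shows that the information Eve retains about Alice's outcome $A$ (after measuring $A_0$) is controlled by the block CHSH value: extremizing over two-qubit states and qubit observables subject to attaining CHSH value $S_\lambda$ yields exactly $H(A|E)_{\rho_\lambda} \geq 1 - h(\tfrac12 + \tfrac12\sqrt{(S_\lambda/2)^2 - 1}) = F(\omega_\lambda)$, with the inequality saturated by an explicit eavesdropping attack. I expect this extremal qubit computation to be the main obstacle, since it is where the precise analytic form of $F$ is forced; carrying it out requires either a direct optimization over the Bloch-sphere parametrization of the block observables and state, or an appeal to the operator structure of the CHSH Bell operator restricted to a single block.

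Finally I would recombine the blocks. The function $F$ is convex (and increasing) in $\omega$ on $[3/4, (2+\sqrt{2})/4]$ — the same convexity exploited elsewhere in the paper — so Jensen's inequality gives $\Ex_\lambda\, F(\omega_\lambda) \geq F(\Ex_\lambda\, \omega_\lambda) = F(\omega)$. Chaining with the estimates above, $H(A|E)_{\rho^{(0)}} \geq \Ex_\lambda\, H(A|E)_{\rho_\lambda} \geq \Ex_\lambda\, F(\omega_\lambda) \geq F(\omega)$, which is the claimed bound; the case $x=1$ follows by the identical block analysis.
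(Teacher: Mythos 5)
Your outline is the right argument, but note what the paper itself does here: it gives no self-contained proof at all. The lemma is imported verbatim from \cite{Friedman20} (Lemma 5.3 there), and the paper's ``proof'' is a single sentence pointing to an intermediate step in Appendix C.1 of that reference. The route you sketch --- purify, reduce to $2\times 2$ blocks via Jordan's lemma, hand the block index $\Lambda$ to Eve so that $H(A|E)\geq \Ex_\lambda H(A|E)_{\rho_\lambda}$, prove a per-block qubit bound, and recombine using convexity of $F$ and Jensen's inequality --- is precisely the Pironio et al.\ analysis \cite{Pironio09} underlying that cited proof, so you have reconstructed the content the paper treats as a black box. Your algebra relating $\omega$ to the CHSH value $S=8\omega-4$, namely $(S/2)^2-1=3-16\,\omega(1-\omega)$, is correct, as is the observation that the bound is insensitive to which question $x\in\{0,1\}$ is used.

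Two caveats. First, the analytic heart --- the extremal two-qubit computation showing $H(A|E)_{\rho_\lambda}\geq 1-h\bigl(\tfrac12+\tfrac12\sqrt{(S_\lambda/2)^2-1}\bigr)$ --- is exactly the step you defer; since this is also exactly what the paper imports via its citation, your proposal is no less complete than the paper's, but a genuinely self-contained proof would still have to carry out that optimization over Bloch-sphere parametrizations. Second, there is a small but real wrinkle in your Jensen step: individual blocks can have $\omega_\lambda<3/4$, where $3-16\,\omega_\lambda(1-\omega_\lambda)<0$ and the formula for $F$ is not even real-valued. The standard fix, used in the references, is to extend $F$ by $0$ on $[0,3/4)$ (the trivial entropy bound) and to verify that this piecewise extension remains convex and nondecreasing --- it does, since $F(3/4)=0$, $F\geq 0$, and $F$ has positive one-sided slope at $3/4$ --- before invoking $\Ex_\lambda F(\omega_\lambda)\geq F(\Ex_\lambda\,\omega_\lambda)$. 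Without that extension, the Jensen step is not well posed.
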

 \begin{proof}
	This is proved as an intermediate step in the Proof of Lemma 5.3 (Appendix C.1) \cite{Friedman20}.
 \end{proof}

\begin{lemma}
   Suppose that the quantum strategy for the 3CHSH game starting with $\rho_{E_A E_B E}^{(0)}$ wins the 3CHSH game with probability $\omega \in \lr{[ (1-\nu)+\frac{3}{4}\nu, (1-\nu)+\frac{2+\sqrt{2} }{4}\nu}]$. Let $A$ be Alice's answer produced according to the strategy. Then, for the post measurement state $\rho_{XAE}$ we have
   \begin{align*}
	   H(A| E X)_{\rho} \geq F(\hat{\omega}_{\nu})
   \end{align*}
   where $\hat{\omega}_{\nu} := \frac{\omega-(1-\nu)}{\nu}$ and $F$ is as in Lemma \ref{lemm:2CHSH_entropy}
\label{lemm:3CHSH_entropy}
\end{lemma}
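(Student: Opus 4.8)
The plan is to reduce the 3CHSH game to the 2CHSH game embedded in its $x,y\in\{0,1\}$ rounds, and then invoke Lemma \ref{lemm:2CHSH_entropy}. First I would decompose the winning probability according to which question pair is asked. Since $(x,y)=(0,2)$ occurs with probability $1-\nu$ and the four pairs with $x,y\in\{0,1\}$ occur with total probability $\nu$ (each with conditional probability $1/4$), the overall winning probability can be written as
\begin{align*}
\omega = (1-\nu)\Pr[A=B \mid X=0,Y=2] + \nu\,\omega_{\mathrm{CHSH}},
\end{align*}
where $\omega_{\mathrm{CHSH}}$ denotes the conditional winning probability of the standard CHSH predicate on the rounds with $x,y\in\{0,1\}$.

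From this identity I would extract the key inequality relating $\hat{\omega}_\nu$ to $\omega_{\mathrm{CHSH}}$. Rearranging gives
\begin{align*}
\hat{\omega}_\nu = \frac{\omega - (1-\nu)}{\nu} = \omega_{\mathrm{CHSH}} - \frac{(1-\nu)\big(1 - \Pr[A=B \mid X=0,Y=2]\big)}{\nu} \leq \omega_{\mathrm{CHSH}},
\end{align*}
since the subtracted term is nonnegative. Combined with the hypothesis $\hat{\omega}_\nu \geq 3/4$ and Tsirelson's bound $\omega_{\mathrm{CHSH}} \leq (2+\sqrt{2})/4$, this places $\omega_{\mathrm{CHSH}}$ in the interval $[3/4,(2+\sqrt{2})/4]$ on which Lemma \ref{lemm:2CHSH_entropy} applies.

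The conceptual crux is recognising that, conditioned on $x,y\in\{0,1\}$, the strategy is literally a 2CHSH strategy: Alice and Bob begin from the same state $\rho^{(0)}_{E_A E_B E}$, the conditional question distribution is uniform on $\{0,1\}^2$, and Alice's measurement $A_x$ on question $x\in\{0,1\}$ is unchanged. Importantly, the entropy $H(A \mid E)_{\rho^{(x)}}$ depends only on Alice's measurement and the initial state — not on Bob's operations or the full question distribution — so it is the very same quantity in both games. Hence Lemma \ref{lemm:2CHSH_entropy} yields, for each $x\in\{0,1\}$, $H(A \mid E)_{\rho^{(x)}} \geq F(\omega_{\mathrm{CHSH}})$.

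To finish I would use the monotonicity of $F$. On $[3/4,(2+\sqrt{2})/4]$ the product $\omega(1-\omega)$ is decreasing, so the argument of $h$ increases toward $1$ and $F$ is increasing; together with $\hat{\omega}_\nu \leq \omega_{\mathrm{CHSH}}$ this gives $F(\omega_{\mathrm{CHSH}}) \geq F(\hat{\omega}_\nu)$. Averaging over Alice's question (which is classical) then yields
\begin{align*}
H(A \mid EX)_\rho = \sum_{x\in\{0,1\}} \Pr[X=x]\, H(A \mid E)_{\rho^{(x)}} \geq F(\hat{\omega}_\nu).
\end{align*}
I expect no serious obstacle: the only points requiring care are verifying the range so that Lemma \ref{lemm:2CHSH_entropy} is applicable, and justifying that the conditional strategy on $\{0,1\}^2$ is a bona fide 2CHSH strategy so that the single-round bound transfers verbatim.
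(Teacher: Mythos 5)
Your proposal is correct and takes essentially the same route as the paper's proof: both embed a 2CHSH strategy in the $x,y\in\{0,1\}$ rounds (same state, same measurements), decompose the winning probability to get $\omega_{\mathrm{CHSH}} \geq \hat{\omega}_{\nu}$, invoke Lemma \ref{lemm:2CHSH_entropy}, and average over Alice's classical question $X$ using no-signalling to identify the conditional post-measurement states. Your explicit verification of the monotonicity of $F$ and of the applicability range via Tsirelson's bound makes rigorous a step the paper performs implicitly (it plugs $\hat{\omega}_{\nu}$ directly into the bound), but this is a matter of care, not a different argument.
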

\begin{proof}
Let $S=(\rho_{E_A E_B E}^{(0)}, \{A_x \}_{x \in \mathcal{X}}, \{B_y \}_{y \in \mathcal{Y}})$ be the strategy for the 3CHSH mentioned in the lemma hypothesis. Let $S'$ be the strategy for the 2CHSH game, which uses the state $\rho_{E_A E_B E}^{(0)}$, the measurements $\{A_x \}_{x \in \{ 0,1\}}$ as Alice's measurements and the measurements $\{B_y \}_{y \in \{ 0,1\} }$ as Bob's measurements. Let $P_{XY}$ be the distribution of questions in the 2CHSH game, $P_{AB|XY}$ be the conditional probability distribution of the answers in the strategy $S$ and $Q_{AB|XY}$ in the strategy $S'$. Then, by definition of the strategy $S'$, we have for all $x, y \in \{0,1\}$ that 
\begin{align*}
   Q_{AB|X = x, Y = y} = P_{AB|X = x,Y = y}.
\end{align*}
Now, observe that the winning probability $\omega$ of the 3CHSH game can be written as 
\begin{align*}
   \omega &= (1- \nu) P [A=B | X=0, Y =2] + \nu \sum_{x , y \in \{ 0,1\}} P_{XY}(xy) \sum_{a,b: a \oplus b =xy} P_{AB|XY}(ab|xy)\\
   & \leq  (1- \nu) + \nu \sum_{x , y \in \{ 0,1\}} P_{XY}(xy) \sum_{a,b: a \oplus b =xy} P_{AB|XY}(ab|xy) \\
   & =  (1- \nu) + \nu \sum_{x , y \in \{ 0,1\}} P_{XY}(xy) \sum_{a,b: a \oplus b =xy} Q_{AB|XY}(ab|xy) \\
   & =  (1- \nu) + \nu \omega_{S'}
\end{align*}
where $\omega_{S'}$ is the winning probability for the 2CHSH game with the strategy $S'$. The above implies
\begin{align}
	\omega_{S'} \geq \frac{\omega - (1-\nu)}{\nu} = \hat{\omega}_{\nu}.
\end{align}
Note that $\hat{\omega}_{\nu} \geq 3/4$ for the range of $\omega$ in the hypothesis. Using Lemma \ref{lemm:2CHSH_entropy}, we have for $x \in \{ 0,1 \}$
\begin{align*}
   H(A| E)_{\rho_x} \geq 1- h \lr{(\frac{1}{2}+ \frac{1}{2}\sqrt{3- 16\hat{\omega}_{\nu}(1- \hat{\omega}_{\nu})} })
\end{align*}
where $\rho_{AE}^{(x)}$ is the state produced at the end of the strategy $S'$ for question $X=x$. This is the same as the state produced at the end of the strategy $S$ for question $X=x$ because of the way we have defined $S'$. Therefore, for the state produced at the end of $S$, we have
\begin{align*}
   H(A|EX)_{\rho} &= \rndBrk{1-\frac{\nu}{2}}H(A|E)_{\rho_{|X=0}} + \frac{\nu}{2}H(A|E)_{\rho_{|X=1}} \\
   & \geq 1- h \lr{(\frac{1}{2}+ \frac{1}{2}\sqrt{3- 16\hat{\omega}_{\nu}(1- \hat{\omega}_{\nu})} })
\end{align*}
which proves the lemma.
\end{proof}

\begin{proof}[Proof of Lemma \ref{lemm:3CHSH_anch_entropy}]
    Let $\mathcal{X}$ and $\mathcal{Y}$ represent the questions for the 3CHSH game, and $\mathcal{X}_\perp$ and $\mathcal{Y}_\perp$ the questions for the 3CHSH$_\perp$ game. Let $S=(\rho_{E_A E_B E}^{(0)}, \{A_x \}_{x \in \mathcal{X}_{\perp}}, \{B_y \}_{y \in \mathcal{Y}_{\perp}})$ be the strategy which wins the 3CHSH$_{\perp}$ with probability $\omega$. Let $\bar{S}$ be the strategy for the 3CHSH game, which uses the state $\bar{\rho}^{(0)}_{E_A E_B E} = \rho^{(0)}$, the measurements $\{A_x \}_{x \in \mathcal{X}}$ as Alice's measurements and the measurements $\{B_y \}_{y \in \mathcal{Y}}$ as Bob's measurements. Let $P_{XY}$ be the distribution of questions for the 3CHSH$_{\perp}$ game, $Q_{XY}$ be the distribution of questions for the 3CHSH game, $P_{AB|XY}$ be the conditional probability distribution of the answers in the strategy $S$ and $Q_{AB|XY}$ in the strategy $\bar{S}$. Then, by definition of the strategy $\bar{S}$, we have for all $x \in \mathcal{X}$ and $y \in \mathcal{Y}$, 
 \begin{align*}
     Q_{AB|X = x, Y = y} = P_{AB|X = x,Y = y}.
 \end{align*}
 Now, observe that the winning probability of the 3CHSH$_{\perp}$ game can be written as 
 \begin{align*}
    \omega &= \sum_{x \in \mathcal{X}_{\perp}, y \in \mathcal{Y}_{\perp}} P_{XY}(xy) \sum_{a,b: V(ab|xy)=1} P_{AB|XY}(ab|xy)\\
    &= (1- (1-\alpha)^2) + (1-\alpha)^2 \sum_{x \in \mathcal{X}, y \in \mathcal{Y}} Q_{XY}(xy) \sum_{a,b: V(ab|xy)=1} P_{AB|XY}(ab|xy) \\
    &= (1- (1-\alpha)^2) + (1-\alpha)^2 \omega_{\bar{S}}
 \end{align*}
 where $\omega_{\bar{S}}$ is the winning probability for the 3CHSH game under the strategy $\bar{S}$. Thus, we have 
 \begin{align*}
    \omega_{\bar{S}}
    &= 1 - \frac{1 - \omega}{ (1-\alpha)^2}. 
 \end{align*}
 Observe that $\omega_{\bar{S}} \in \lr{[ (1-\nu)+\frac{3}{4}\nu, (1-\nu)+\frac{2+\sqrt{2} }{4} \nu }]$ for the range of $\omega$ in the hypothesis for the theorem. Let $\rho_{AE}^{(x)}$ and $\bar{\rho}_{AE}^{(x)}$ denote the states at the end of the protocol $S$ and $\bar{S}$ when Alice receives the question $X=x$. Then, these two states are equal for $x \neq \perp $. Further, the probability distribution $P_{X|X \neq \perp} = Q_{X}$, which implies that 
 \begin{align*}
    \rho_{XAE|X \neq \perp} &= \sum_{x \in \mathcal{X}} P_{X|X \neq \perp}(x) \puretomixed{x} \otimes \rho_{AE}^{(x)} \\
    &= \sum_{x \in \mathcal{X}} Q_X (x) \puretomixed{x} \otimes \bar{\rho}_{AE}^{(x)} \\
    &= \bar{\rho}_{XAE}.
 \end{align*}
 We can use these to create the entropy bound as follows 
 \begin{align*}
    H(AB | E XY)_{\rho} &\geq H(A | E XY)_{\rho} \\
    &= H(A|EX)_{\rho} \\
    &\geq (1-\alpha) H(A | E X)_{\rho_{|X\neq \perp}} \\
    &= (1- \alpha) H(A | E X)_{\bar{\rho}} \\
    &\geq (1- \alpha) \rndBrk{1- h \rndBrk{\frac{1}{2}+ \frac{1}{2}\sqrt{3- 16 g_{\alpha, \nu}(\omega) (1 - g_{\alpha, \nu}(\omega))}}}
 \end{align*}
 where the first line follows from the fact that $B$ is classical, the second line from the no-signalling property which implies that $Y \leftrightarrow X \leftrightarrow AE$, and in the last line we have used Lemma \ref{lemm:3CHSH_entropy}.
 \end{proof}

 \section{Adapting statements from \cite{Bavarian21} to our setting}
 \label{sec:anch_games_result_carry}

 In the parallel DIQKD setting, Eve distributes the registers $E_A$ and $E_B$ of the state $\psi_{E_A E_B E}$ between Alice and Bob, who then play the $n$ anchored games parallelly using their states. In \cite{Bavarian21}, the situation is almost the same, except there are only two parties, Alice and Bob. We can use the results proven in \cite{Bavarian21} by simply introducing a register for Eve in the state $\psi$ in their setting and tracking it through their proofs. The objective for Alice and Bob is also different in \cite{Bavarian21}. They seek to maximise the winning probability of the parallel repetition game. However, since \cite{Bavarian21} considers an arbitrary strategy for parallel repetition, no modification is required on this account. Lastly, \cite{Bavarian21} also considers conditioning on an event $W_C$, representing the two parties winning a subset $C$ of the rounds. The results only rely on this event being determined by the variable $r_{-i}$, so we can simply replace $W_C$ by the trivial event for our case.\\

 In their proofs, \cite{Bavarian21} uses the fact that the state between Alice and Bob can be assumed to be symmetric, which we cannot necessarily guarantee with three parties. However, it is straightforward to also carry out their proofs without using this assumption. \\
 
 We will go through the statements considered in \cite{Bavarian21} till Section 6 and briefly explain the modifications required to prove them in our setting. The numbering in the following list follows \cite{Bavarian21}. \\
 
 \begin{enumerate}
    \item \textbf{Classical results:} All facts involving only classical variables follow for our setting from the same arguments, since we can consider Eve's register $E$ as being a part of Alice's register $E_A$, and Alice's measurements as $\{ A^{E_A}_{x_1^n} (a_1^n) \otimes \Id^{E} \}$. This reduces our setting to the one in \cite{Bavarian21} for these results.
     \item \textbf{Lemma 4.6}: This is true in our case because only classical variables and their properties are used here.
     \item \textbf{Proposition 4.9}: This is true because the argument traces over all the registers in the states $\Phi$. Thus, we can again view $E$ as being a part of $E_A$ here.
     \item \textbf{Eq. 33-34}: The definitions of $\Xi$ and $\Lambda$ should include the register $E$ as part of $\psi$. Same for the definitions of $\xi$ and $\lambda$.
     \item \textbf{Claim 5.13}: We instead need to prove that 
     \begin{align*}
         & \Ex_I \Ex_{R|W_C} \lr{[   I(Y_i : E_A E)_{\xi_r}  }] =O(\delta) \\
         & \Ex_I \Ex_{R|W_C} \lr{[   I(X_i : E_B E)_{\lambda_r}  }] =O(\delta)
     \end{align*}
     \cite{Bavarian21} proves the first claim and notes that the second one is similar. To prove the above results, we simply need to follow the proof in \cite{Bavarian21}, and note that all instances of $E_A$ in the proof can be replaced with $E_A E$. Also, note that Eq. 38 in \cite{Bavarian21} can be derived without assuming that $\psi$ is symmetric. To prove the second claim, we would similarly consider $E_B E$ together.
     \item \textbf{Claim 5.14}: We instead need to prove that 
     \begin{align*}
         & \Ex_I \Ex_{R_{-i}|W_C} \Ex_{XY} \Vert \xi^{E_A E}_{r_{-i},x,y} - \xi^{E_A E}_{r_{-i}, x,\perp} \Vert_1^2  =O(\delta^{1/2}/\alpha^4)\\
         & \Ex_I \Ex_{R_{-i}|W_C} \Ex_{XY} \Vert \lambda^{E_B E}_{r_{-i},x,y} - \lambda^{E_B E}_{r_{-i},\perp, y} \Vert_1^2 =O(\delta^{1/2}/\alpha^4).
     \end{align*}
     Once again this can be done by replacing $E_A$ by $E_A E$ in the proof for the first claim and $E_B$ by $E_B E$ for the second claim. The proof also uses relations between several classical variables, which can be handled using the argument for classical variables mentioned above. 
     \item \textbf{Proof of Lemma 5.12}:
         \begin{enumerate}
             \item \textbf{Claim 5.15}: We need to prove that $\ket{\tilde{\Phi}_{r_i, \perp/x, y}}^{E_A E_B E}$ is a purification of the state $\xi^{E_A E}_{r_i, x, y}$ and that $\ket{\tilde{\Phi}_{r_i, \perp, y}}^{E_A E_B E}$ is a purification of the state $\xi^{E_A E}_{r_i, \perp, y}$. Once again, this can be done by noting that we don't necessarily need to use the fact that the state is symmetric for the proof. Following the same procedure as \cite{Bavarian21}, we can derive
              \begin{align*}
                  \xi^{E_A E}_{r_i, x, y} &= \gamma^{-2}_{r_i, \perp/x, y} A_\omega (a_C)^{1/2} \tr_{E_B} \rndBrk{ B_{\omega_{-i}, y} (b_C) \Psi  } A_\omega (a_C)^{1/2} \\
                  &= \tilde{\Phi}_{r_i, \perp/x, y}^{E_A E}.
              \end{align*}
              The rest of the proof remains the same.
              \item\textbf{Claim 5.16}: Also needs to be modified as Claim 5.15 has been.
         \end{enumerate}
         The rest of the steps in the proof rely on using claims, which were proven before. Since, we have already proven them above, the steps follow in our case as well.
         \item \textbf{Lemma 5.17}: Since the definition of $\gamma_{r_{-i}, s, y}$ traces over all the quantum registers, we can simply view our setting as an instance of the anchored games setting by letting Alice keep the $E$ register. Note that this proof does not use any property about the event $W_C$ beyond the fact that it is determined by $r_{-i}$. 
         \item \textbf{Proof of Proposition 5.1}: All the lemmas and facts required for the proof of Proposition 5.1 have been shown to be valid in our setting. One can now simply follow the proof given in \cite{Bavarian21} to prove the proposition.
 \end{enumerate}

\section{Supplementary arguments for security proof}
\label{sec:suppl_args}

In this section, we continue the argument from the bound in Eq. \ref{eq:Hmin_AB_given_E} and show that it can be transformed into a lower bound for $H_{\min}^{O(\mu')}(\hat{A}_1^t | T_1^t I_1^t \Omega_{1}^n E X_S A_S)_{\rho_{|\lnot F}}$. We also upper bound the information leakage during the information reconciliation phase.  

\subsection{Removing $\hat{B}_1^t$ from the smooth min-entropy bound}

We begin by removing the $\hat{B}_1^t$ registers from the entropy in $H_{\min}^{\mu' + \epsilon'}(\hat{A}_1^t \hat{B}_1^t |\hat{X}_1^t \hat{Y}_1^t T_1^t I_1^t \Omega_{J^c} E)_{\rho_{|\lnot F}}$. For this, it is  sufficient to prove that the entropy 
\begin{align}
    H^{\epsilon'}_{\max}(\hat{B}_1^t | \hat{A}_1^t \hat{X}_1^t \hat{Y}_1^t T_1^t)_{\rho_{|\lnot F}}
\end{align}
is small. Intuitively, this should be true because the average winning probability is at least $\omega_{\text{th}} \geq 1 - \nu$, which implies using Lemma \ref{lemm:AEqBProb} that 
\begin{align}
    \Pr[A=B]\geq 1 - 2\nu - 2\alpha.
\end{align}
So, we should be able to prove that 
\begin{align}
    H^{\epsilon'}_{\max}(\hat{B}_1^t | \hat{A}_1^t \hat{X}_1^t \hat{Y}_1^t T_1^t)_{\rho_{|\lnot F}}\lesssim t\ h(2(\nu + \alpha)).
\end{align}
To prove this, let $\bar{J} := \curlyBrk{j \in J : {X}_j, {Y}_j = (0, 2)}$ for the state $\rho$ (unconditioned). Define the events, 
\begin{align}
    E_1 &:= \sqBrk{\frac{|S|}{t} \geq \gamma - \delta_1} \\
    E_2 &:= \sqBrk{\frac{|\bar{J}|}{t} \geq (1-\alpha)^2(1-\nu) - \delta_1}\\
    E_3 &:= \sqBrk{\frac{1}{t}\sum_{i \in J} V({X}_i, {Y}_i, {A}_i, {B}_i) \geq \omega_{\text{th}} - \delta_1}
\end{align}
for some small parameter $\delta_1 \in (0,1)$. Using the Chernoff-Hoeffding bound, we have that 
\begin{align}
    \Pr[E_1^c] &\leq e^{-\Omega(\delta_1^2 t)} \\
    \Pr[E_2^c] &\leq e^{-\Omega(\delta_1^2 t)}.
\end{align}
Following \cite{Vidick17} (which uses \cite[Lemma 6]{Tomamichel17}), we also have that
\begin{align}
    \Pr[\lnot F \wedge E_3^c] \leq e^{-\Omega(\delta_1^2 \gamma t)}.
\end{align}
Therefore, we have that conditioned on the event $\lnot F$, $E_1\wedge E_2 \wedge E_3$ hold except with probability $\frac{e^{-\Omega(\delta_1^2 \gamma t)}}{\Pr_{\rho}(\lnot F)}$, i.e., 
\begin{align}
    \frac{1}{2}\norm{\rho_{T_1^t \hat{X}_1^t \hat{Y}_1^t \hat{A}_1^t \hat{B}_1^t | \lnot F} -  \rho_{T_1^t \hat{X}_1^t \hat{Y}_1^t \hat{A}_1^t \hat{B}_1^t, E_1 \wedge E_2 \wedge E_3| \lnot F}}_1 \leq \frac{e^{-\Omega(\delta_1^2 \gamma t)}}{\Pr_{\rho}(\lnot F)}. 
    \label{eq:Hmax_bd_dist}
\end{align}
Let $\delta(x, y)$ be the Kronecker delta function, which is $1$ if $x = y$ and $0$ otherwise. Let $e$ be the relative error between $\hat{A}_1^t$ and $\hat{B}_1^t$. If the events $E_1 \wedge E_2 \wedge E_3$ are true, then we have
\begin{align*}
    \omega_{\text{th}} - \delta_1 &\leq \frac{1}{t} \sum_{i \in J} V(X_i, Y_i, A_i, B_i) \\
    &= \frac{1}{t} \sum_{i \in \bar{J}} V(X_i, Y_i, A_i, B_i) + \frac{1}{t} \sum_{i \in J \setminus \bar{J}} V(X_i, Y_i, A_i, B_i)\\
    &\leq \frac{1}{t} \sum_{i \in \bar{J}} \delta(A_i, B_i) + \frac{1}{t} \sum_{i \in J \setminus \bar{J}} (1+\delta(A_i, B_i))\\
    &\leq 1-e + \frac{t-|\bar{J}|}{t} \\
    &\leq 1-e + 1- (1-\alpha)^2(1-\nu) + \delta_1 \\
    &\leq 1- e + \nu + 2\alpha + \delta_1
\end{align*}
which implies that 
\begin{align}
    e \leq 1 - \omega_{\text{th}} + \nu + 2\alpha + 2 \delta_1.
\end{align}
Further, since $\omega_{\text{th}} \geq 1- \nu$, we have that 
\begin{align}
    e \leq 2(\nu + \alpha + \delta_1). 
\end{align}
This enables us to bound the max-entropy for the state $\rho_{E_1 \wedge E_2 \wedge E_3| \lnot F}$:
\begin{align}
    H_{\max}( \hat{B}_1^t | \hat{A}_1^t )_{\rho_{E_1 \wedge E_2 \wedge E_3| \lnot F}} \leq t \cdot h(2(\nu + \alpha + \delta_1)).
\end{align}
Combining with Eq. \ref{eq:Hmax_bd_dist} shows that 
\begin{align}
    H^{\epsilon''}_{\max}(\hat{B}_1^t | \hat{A}_1^t)_{\rho_{|\lnot F}} \leq t \cdot h(2(\nu + \alpha + \delta_1))
    \label{eq:Hmax_B_given_A_bd}
\end{align}
for $\epsilon'' := \frac{e^{-\Omega(\delta_1^2 \gamma t)}}{\sqrt{\Pr(\lnot F)}} = e^{-\Omega(n)}$. \\

We can use this to bound the entropy of Alice's raw key alone by using the chain rule in \cite[Theorem 15]{Vitanov13} and Eq. \ref{eq:Hmin_AB_given_E}:
\begin{align*}
    H_{\min}^{\mu' + 5\epsilon'}&(\hat{A}_1^t |\hat{X}_1^t \hat{Y}_1^t T_1^t I_1^t \Omega_{J^c} E)_{\rho_{|\lnot F}} \\
    &\geq H_{\min}^{\mu' + \epsilon'}(\hat{A}_1^t \hat{B}_1^t |\hat{X}_1^t \hat{Y}_1^t T_1^t I_1^t \Omega_{J^c} E)_{\rho_{|\lnot F}} - H^{\epsilon'}_{\max}(\hat{B}_1^t | \hat{A}_1^t \hat{X}_1^t \hat{Y}_1^t T_1^t I_1^t \Omega_{J^c} E)_{\rho_{|\lnot F}} - O\rndBrk{\log\frac{1}{\epsilon'}}\\
    &\geq H_{\min}^{\mu' + \epsilon'}(\hat{A}_1^t \hat{B}_1^t |\hat{X}_1^t \hat{Y}_1^t T_1^t I_1^t \Omega_{J^c} E)_{\rho_{|\lnot F}} - H^{\epsilon'}_{\max}(\hat{B}_1^t | \hat{A}_1^t)_{\rho_{|\lnot F}} - O\rndBrk{\log\frac{1}{\epsilon'}}\\
    & \geq t \rndBrk{(1-\alpha)F_{\alpha, \nu} (\omega_{\text{th}})  - O\rndBrk{\frac{\sqrt{\mu}}{\nu \gamma}} - h(2(\nu + \alpha + \delta_1))} - O(1).
    \numberthis
    \label{eq:Hmin_A_given_E_bd}
\end{align*}
We chose the smoothing of the max-entropy above to be $\epsilon'$ as well for simplicity. Since, $\epsilon'$ is a constant greater than $0$ and $\epsilon'' = e^{-\Omega(n)}$, it is valid to use the bound in Eq. \ref{eq:Hmax_B_given_A_bd} for sufficiently large $n$.

\subsection{Adding $\Omega_J$ to the conditioning register}

\begin{lemma}
    \label{lemm:cond_Markov_ch}
    For $\rho_{\Omega X A E}$ a classical ($\Omega X A$)-quantum ($E$) state which satisfies the Markov chain $ \Omega \leftrightarrow X \leftrightarrow A E$ and an event $F$ determined by $X$ and $A$, i.e., $F \subseteq \mathcal{X} \times \mathcal{A}$, the conditional state $\rho_{\Omega X A E | F}$ also satisfies $ \Omega \leftrightarrow X \leftrightarrow A E$. 
\end{lemma}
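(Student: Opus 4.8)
The plan is to exploit the explicit structure of a classical-quantum state obeying the Markov chain, apply the conditioning on $F$, and check that the defining tensor-product structure survives. Because $\Omega$, $X$, $A$ are all classical and only $E$ is quantum, the hypothesis $\Omega \leftrightarrow X \leftrightarrow AE$ is equivalent to saying that, conditioned on $X=x$, the register $\Omega$ is independent of $AE$. Concretely, I would first write
\[
  \rho_{\Omega X A E} = \sum_{x} p(x)\, \puretomixed{x}_X \otimes \left(\sum_{\omega} p(\omega\mid x)\puretomixed{\omega}_\Omega\right) \otimes \left(\sum_{a} p(a\mid x)\puretomixed{a}_A \otimes \rho_E^{(x,a)}\right),
\]
where the two crucial consequences of the Markov chain are that the classical conditional factorizes, $p(\omega, a \mid x) = p(\omega\mid x)\,p(a\mid x)$, and that the conditional quantum state $\rho_E^{(x,a)}$ depends only on $(x,a)$ and not on $\omega$.

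Second, I would apply the conditioning. Since $F \subseteq \mathcal{X}\times\mathcal{A}$ is a function of $(X,A)$ alone, conditioning on $F$ simply inserts the indicator $\charFn{F}(x,a)$ and renormalises by $\Pr[F]$:
\[
  \rho_{\Omega X A E \mid F} = \frac{1}{\Pr[F]}\sum_{x} p(x)\, \puretomixed{x}_X \otimes \left(\sum_{\omega} p(\omega\mid x)\puretomixed{\omega}_\Omega\right) \otimes \left(\sum_{a} p(a\mid x)\,\charFn{F}(x,a)\,\puretomixed{a}_A \otimes \rho_E^{(x,a)}\right).
\]

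Third, I would verify the factorisation. Fixing $X=x$ and writing $\Pr[F\mid X=x] = \sum_a p(a\mid x)\charFn{F}(x,a)$, the block on $\Omega$ is untouched by the indicator and so stays $\sum_\omega p(\omega\mid x)\puretomixed{\omega}_\Omega$, while the block on $AE$ is reweighted to $\sum_a \frac{p(a\mid x)\charFn{F}(x,a)}{\Pr[F\mid X=x]}\puretomixed{a}_A\otimes\rho_E^{(x,a)}$. Hence, conditioned on $X=x$, the state $\rho_{\Omega X A E \mid F}$ remains a tensor product of an $\Omega$-state and an $AE$-state, which is exactly the claim $\Omega \leftrightarrow X \leftrightarrow AE$. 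The one genuinely load-bearing observation, and the only real obstacle, is that because $F$ involves only $(X,A)$ and $\Omega$ was already conditionally independent of $A$ given $X$, the conditioning reweights only the $A$-marginal (and through it the $E$-marginal) while leaving $p(\omega\mid x)$ unchanged; the $\omega$-independence of $\rho_E^{(x,a)}$ then ensures that no residual $\Omega$--$E$ correlation is created by this reweighting. Everything else is routine bookkeeping of conditional probabilities.
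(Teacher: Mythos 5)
Your proof is correct and takes essentially the same approach as the paper's: both write the classical-quantum state in the factorised Markov-chain form, insert the indicator of $F$ (which acts only on the $(X,A)$ part since $F \subseteq \mathcal{X}\times\mathcal{A}$), renormalise, and observe that for each fixed $X=x$ the conditional state remains a product of an $\Omega$-block (with $p(\omega\mid x)$ unchanged) and a reweighted $AE$-block. The only cosmetic difference is that the paper absorbs the reweighting directly into the conditional probabilities $\rho(x\mid F)$ and $\rho(a\mid x,F)$ rather than carrying the indicator function explicitly.
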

\begin{proof}
    Since $\rho_{\Omega X A E}$ is a classical ($\Omega X A$)-quantum ($E$) state and satisfies the Markov chain $ \Omega \leftrightarrow X \leftrightarrow A E$, $\rho$ is of the form
    \begin{align}
        \rho_{\Omega X A E} = \sum_{x} \rho(x) \puretomixed{x} \otimes \rndBrk{\sum_{\omega} \rho(\omega | x) \puretomixed{\omega}} \otimes \rndBrk{\sum_{a} \rho(a|x) \puretomixed{a} \otimes \rho_{E|a,x}}.
    \end{align}
    Let $\rho(F) = \sum_{x, a \in F} \rho(x) \rho(a | x)$ be the probability of the event $F$. The conditional state $\rho_{|F}$ can be written as
    \begin{align*}
        \rho_{\Omega X A E | F} &= \frac{1}{\rho(F)}\rho_{\Omega X A E \wedge F} \\
        &= \frac{1}{\rho(F)} \sum_{x, a \in F} \rho(x) \rho(a|x) \puretomixed{x, a} \otimes \rho_{E|a,x} \otimes \rndBrk{\sum_{\omega} \rho(\omega | x) \puretomixed{\omega}}\\
        &= \sum_{x} \rho(x|F) \puretomixed{x} \otimes \rndBrk{\sum_{a} \rho(a|x, F) \puretomixed{a} \otimes \rho_{E|a,x}} \otimes \rndBrk{\sum_{\omega} \rho(\omega | x) \puretomixed{\omega}}
    \end{align*}
    which clearly satisfies the Markov chain $ \Omega \leftrightarrow X \leftrightarrow A E$. 
\end{proof}

Note that using Lemma \ref{lemm:MarkovChainOmegaXA}, we have that the state $\rho$ (unconditioned) satisfies the Markov chain
\begin{align}
    \Omega_J \leftrightarrow J X_J Y_J \leftrightarrow A_1^n B_1^n \Omega_{J^c} E T_1^t. 
\end{align}
More precisely, according to Lemma \ref{lemm:MarkovChainOmegaXA} the above is true for every fixed $J$, and the above follows from using simple facts about Markov chains. Further, the event $\lnot F$ is defined using the variables $ J X_J Y_J A_J B_J T_1^t$, so using Lemma \ref{lemm:cond_Markov_ch}, we have that the state $\rho_{|\lnot F}$ also satisfies the above Markov chain condition. Note that tracing over $A_{J^c} B_{J^c}$ does not alter this.\\

Therefore, there exists a channel $\Phi: J X_J Y_J \rightarrow J X_J Y_J \Omega_J$ such that 
\begin{align}
    \Phi\rndBrk{\rho_{|\lnot F}^{A_J B_J X_J Y_J J \Omega_{J^c} T_1^t E}} = \rho_{|\lnot F}^{A_J B_J X_J Y_J J \Omega_{1}^n T_1^t E}.
\end{align}
This implies that 
\begin{align}
    H_{\min}^{\mu' + 5\epsilon'}&(A_J |J X_J Y_J T_1^t \Omega_{1}^n E)_{\rho_{|\lnot F}} = H_{\min}^{\mu' + 5\epsilon'}(A_J |J X_J Y_J T_1^t \Omega_{J^c} E)_{\rho_{|\lnot F}}.
    \label{eq:added_omega_J}
\end{align}
We have already bounded the right-hand side above in Eq. \ref{eq:Hmin_A_given_E_bd}. 

\subsection{Accounting for information leakage during testing}

Finally, we also need to consider the entropy loss due to Alice transmitting $A_S$ in plaintext to Bob. Using the chain rule \cite[Theorem 14]{Vitanov13} we have that
\begin{align*}
    H&^{\mu' + 8\epsilon'}_{\min}(A_J | X_J Y_J T_1^t J \Omega_{1}^n E A_S)_{\rho_{|\lnot F}} \\
    &\geq H^{\mu' + 5\epsilon'}_{\min}(A_J | X_J Y_J T_1^t J \Omega_{1}^n E)_{\rho_{|\lnot F}} - H^{\epsilon'}_{\max}(A_S | X_J Y_J T_1^t J \Omega_{1}^n E)_{\rho_{|\lnot F}} - O\rndBrk{\log \frac{1}{\epsilon'}}. \numberthis
    \label{eq:int_removing_As}
\end{align*}
We can show that the max-entropy above is small for sufficiently large $n$. Since, $T_i$ are chosen in an i.i.d fashion with probability $P(T_i = 1) = \gamma$, we have that with probability at least $1 - e^{-\Omega(\gamma^2 t)}$, the number of $T_i$ that are $1$ is at most $2\gamma t$. Let's call this event $Q$. We then have
\begin{align*}
    \rho_{J X_J Y_J A_J B_J T_1^t} &= \rho_{J X_J Y_J A_J B_J} \otimes \rho_{T_1^t} \\
    &\approx_{e^{-\Omega(\gamma^2 t)}} \rho_{J X_J Y_J A_J B_J} \otimes \rho_{T_1^t \wedge Q}.
\end{align*}
Let $\eta_{J X_J Y_J A_J B_J T_1^t} := \rho_{J X_J Y_J A_J B_J} \otimes \rho_{T_1^t \wedge Q}$. Using \cite[Lemma G.1]{Marwah24_approx_ch}, we have that 
\begin{align}
    \frac{1}{2}\norm{\rho_{X_J Y_J A_J B_J T_1^t|\lnot F} - \eta_{X_J Y_J A_J B_J T_1^t | \lnot F}}_1 \leq \frac{e^{-\Omega(\gamma^2 t)}}{\Pr_{\rho}(\lnot F)}
\end{align}
and hence $P(\rho_{X_J Y_J A_J B_J T_1^t|\lnot F}, \eta_{X_J Y_J A_J B_J T_1^t|\lnot F}) \leq e^{-\Omega(\gamma^2 t)}/ \sqrt{\Pr_{\rho}(\lnot F)}$. Note that for a fixed value of $T_1^t = \tau_1^t$, the state $\eta_{A_S | \tau_1^t}$ has a support of size at most $|\mathcal{A}|^{2\gamma t}$. Therefore, for $\epsilon' = \Omega(1) \geq e^{-\Omega(\gamma^2 t)}/ \sqrt{\Pr_{\rho}(\lnot F)} = e^{-\Omega(n)}$, we have that
\begin{align*}
    H^{\epsilon'}_{\max}(A_S | J X_J Y_J T_1^t \Omega_{1}^n E)_{\rho_{|\lnot F}} \leq 2\gamma t \log|\mathcal{A}|. 
\end{align*}
Plugging this in Eq. \ref{eq:int_removing_As} and using Eq. \ref{eq:Hmin_A_given_E_bd} and \ref{eq:added_omega_J}, we get the bound
\begin{align*}
    H&^{\mu' + 8\epsilon' }_{\min}(A_J | J X_J Y_J T_1^t \Omega_{1}^n E A_S)_{\rho_{|\lnot F}}\\
    & \geq t \rndBrk{(1-\alpha)F_{\alpha, \nu} (\omega_{\text{th}})  - O\rndBrk{\frac{\sqrt{\mu}}{\nu \gamma}} - h(2(\nu + \alpha + \delta_1)) - 2 \log|\mathcal{A}| \gamma} - O(1) 
    \numberthis
    \label{eq:penultim_bd}
\end{align*}
Note that 
\begin{align}
    H&^{\mu' + 8\epsilon' }_{\min}(A_J | J T_1^t \Omega_{1}^n E X_S A_S)_{\rho_{|\lnot F}} \geq H^{\mu' + 8\epsilon'}_{\min}(A_J | J X_J Y_J T_1^t \Omega_{1}^n E A_S)_{\rho_{|\lnot F}}.
\end{align}
Using the bound in Eq. \ref{eq:penultim_bd} in the equation above, we have a linear lower bound for the smooth min-entropy of Alice's raw key ($A_J$) with respect to Eve's state ($J T_1^t \Omega_{1}^n E X_S A_S$):
\begin{align}
    H^{\mu' + 8\epsilon' }_{\min}&(A_J | J T_1^t \Omega_{1}^n E X_S A_S)_{\rho_{|\lnot F}} \nonumber\\
    &\geq t \rndBrk{(1-\alpha)F_{\alpha, \nu} (\omega_{\text{th}})  - O\rndBrk{\frac{\sqrt{\mu}}{\nu \gamma}} - h(2(\nu + \alpha + \delta_1)) - 2 \log|\mathcal{A}| \gamma} - O(1)
    \label{eq:ultim_bd1}
\end{align}

\subsection{Information reconciliation cost}

In the one-shot setting, the information reconciliation cost, denoted as $\text{leak}_{\text{IR}}$, is given by $H_{\max}^{\epsilon''}(\hat{A}_1^t | \hat{B}_1^t I_1^t)_{\rho_{|\lnot F}}$ up to a constant (see \cite[Section 4.2.2]{Friedman20} for additional details). \\

We have already bound the entropy $H_{\max}^{\epsilon''}(\hat{B}_1^t | \hat{A}_1^t I_1^t)_{\rho_{|\lnot F}}$ in Eq. \ref{eq:Hmax_B_given_A_bd}. The same argument and bound can also be used for the entropy above. So, we have 
\begin{align}
    \text{leak}_{\text{IR}} \leq H_{\max}^{\epsilon''}(\hat{A}_1^t | \hat{B}_1^t I_1^t)_{\rho_{|\lnot F}} + O(1) &\leq t \cdot h(2(\nu + \alpha + \delta_1)) + O(1)
\end{align}
for $\epsilon'' := \frac{e^{-\Omega(\delta_1^2 \gamma t)}}{\sqrt{\Pr(\lnot F)}} = e^{-\Omega(n)}$. \\

\subsection{Key length}

Up to a constant factor the key length is given by 
\begin{align}
    H^{\mu' + 8\epsilon' }_{\min}&(A_J | J T_1^t \Omega_{1}^n E X_S A_S)_{\rho_{|\lnot F}} - \text{leak}_{\text{IR}} \nonumber\\
    &\geq t \rndBrk{(1-\alpha)F_{\alpha, \nu} (\omega_{\text{th}})  - O\rndBrk{\frac{\sqrt{\mu}}{\nu \gamma}} - 2 h(2(\nu + \alpha + \delta_1)) - 2 \log|\mathcal{A}| \gamma} - O(1)
    \label{eq:ultim_bd}
\end{align}
where $\alpha, \nu \in (0,0.1)$ are parameters for the 3CHSH$_\perp$ game, and the rest of the parameters are chosen as: 
\begin{align}
    & \delta \in (0,1) \\
    & t = \frac{\delta}{\log |\mathcal{A}||\mathcal{B}| + \delta} n \\
    & \epsilon = O\rndBrk{\frac{\delta^{1/16}}{\alpha^3}} \\
    &\mu = O\rndBrk{\epsilon^{1/6} \rndBrk{\log \frac{1}{\epsilon}}^{1/3}}\\
    &\mu' := 2\sqrt{\frac{\mu}{\Pr_{\rho}(\lnot F)}} = O\rndBrk{\epsilon^{1/12} \rndBrk{\log \frac{1}{\epsilon}}^{1/6}} \\
    &\delta_1 \in (0,1)
\end{align}
and $\epsilon' = \Omega(1) \in (0,1)$ such that $\mu' + \epsilon' <1$. We have assumed here that $\Pr(\lnot F) \geq 2 \mu$.

\bibliographystyle{alpha}
\bibliography{bib}

\end{document}